\documentclass[11pt]{article}
\usepackage[dvipsnames,usenames]{xcolor}
\pdfoutput=1
\usepackage[utf8]{inputenc}
\usepackage[T1]{fontenc}
\usepackage[margin=1in]{geometry}
\usepackage{amsmath,amsthm,amssymb,thm-restate,booktabs,doi,graphicx,latexsym,url,xspace,bm}
\usepackage{microtype,hyperref}
\usepackage{cleveref}
\usepackage[ruled,vlined]{algorithm2e}
\usepackage[numbers,sort&compress]{natbib}

 \usepackage[inline]{enumitem}
\setlist[enumerate]{label=(\arabic*),labelindent=\parindent,leftmargin=*}

\usepackage{sectsty}
\allsectionsfont{\boldmath}

\definecolor{citecolor}{HTML}{0000C0}
\definecolor{urlcolor}{HTML}{000080}

\hypersetup{
    colorlinks=true,
    linkcolor=black,
    citecolor=citecolor,
    filecolor=black,
    urlcolor=urlcolor,
    pdftitle={}, %
    pdfauthor={} %
}

\usepackage{tikz}

\theoremstyle{plain}
\newtheorem{theorem}{Theorem}[section]
\RenewCommandCopy{\theHtheorem}{\thetheorem}
\newtheorem{lemma}[theorem]{Lemma}
\newtheorem{proposition}[theorem]{Proposition}
\newtheorem{corollary}[theorem]{Corollary}
\theoremstyle{definition} %
\newtheorem{example}[theorem]{Example}

\newcounter{claim}
\renewcommand{\theclaim}{\Alph{claim}}
\newenvironment{claim}{\refstepcounter{claim}%
\par\medskip\par\noindent{\it Claim~\theclaim.~}~\rm}%
{\par\smallskip\par}
\newenvironment{subproof}{\par\smallskip\par\noindent{\sl Proof of Claim~\theclaim.~}}%
{$\,\triangleleft$\par\medskip\par}

\makeatletter
\def\@gifnextchar#1#2#3{\let\@tempe#1\def\@tempa{#2}\def\@tempb{#3}%
  \futurelet\@tempc\@gifnch}

\def\@gifnch{\ifx\@tempc\@sptoken\let\@tempd\@tempb%
  \else\ifx\@tempc\@tempe\let\@tempd\@tempa\else\let\@tempd\@tempb\fi\fi\@tempd}

\def\SK@set#1{\left\{#1\right\}}
\def\SK@@set#1#2{\{#1\,:\,
    \begin{array}{@{}l@{}}#2\end{array}
\}}
\def\SK@mset#1{\left\{\!\!\left\{#1\right\}\!\!\right\}}
\def\SK@@mset#1#2{\{\!\!\{#1\,:\,
    \begin{array}{@{}l@{}}#2\end{array}
\}\!\!\}}
\def\BIG@set#1{\Big\{#1\Big\}}
\def\BIG@@set#1#2{\Big\{#1\:\Big|\:
    \begin{array}{@{}l@{}}#2\end{array}
\Big\}}
\newcommand{\Set}[1]{\@gifnextchar\bgroup{\SK@@set{#1}}{\SK@set{#1}}}
\newcommand{\Mset}[1]{\@gifnextchar\bgroup{\SK@@mset{#1}}{\SK@mset{#1}}}
\newcommand{\Bigset}[1]{\@gifnextchar\bgroup{\BIG@@set{#1}}{\BIG@set{#1}}}
\makeatother

\newcommand{\refeq}[1]{(\ref{eq:#1})}
\newcommand{\of}[1]{\left( #1 \right)}
\newcommand{\map}[2]{:#1 \rightarrow #2}

\newcommand{\bZ}{\mathbb{Z}}

\newcommand{\cA}{\mathcal{A}}
\newcommand{\cC}{\mathcal{C}}
\newcommand{\cV}{\mathcal{V}}
\newcommand{\cG}{\mathcal{G}}
\newcommand{\cD}{\mathcal{D}}

\newcommand{\cL}{\mathcal{L}}

\newcommand{\first}{\smallskip\textit{1st round.}~}
\newcommand{\second}{\smallskip\textit{2nd round.}~}
\newcommand{\third}{\smallskip\textit{3rd round.}~}
\newcommand{\fourth}{\smallskip\textit{4th round.}~}
\newcommand{\fifth}{\smallskip\textit{5th round.}~}

\newcommand{\round}[1]{\smallskip\textit{#1th round.}~}
\newcommand{\rounds}[1]{\smallskip\textit{Rounds~#1.}~}

\newcommand{\yes}{\texttt{yes}\xspace}
\newcommand{\no}{\texttt{no}\xspace}

\newcommand{\failed}{\texttt{failed}\xspace}
\newcommand{\ambitious}{\texttt{ambitious}\xspace}
\newcommand{\claiming}{\texttt{claim}\xspace}
\newcommand{\concerned}{\texttt{concerned}\xspace}
\newcommand{\alarm}{\texttt{alarm}\xspace}

\newcommand{\success}{\texttt{success}\xspace}

\newcommand{\m}{\ensuremath{\mathsf{M}}\xspace}
\newcommand{\sbmodel}{\ensuremath{\mathsf{SB}}\xspace}
\newcommand{\mbmodel}{\ensuremath{\mathsf{MB}}\xspace}
\newcommand{\vbmodel}{\ensuremath{\mathsf{VB}}\xspace}
\newcommand{\vvmodel}{\ensuremath{\mathsf{VV}}\xspace}
\newcommand{\mvmodel}{\ensuremath{\mathsf{MV}}\xspace}
\newcommand{\svmodel}{\ensuremath{\mathsf{SV}}\xspace}

\newcommand{\mb}{\ensuremath{\mathsf{MB}}\xspace}
\newcommand{\pn}{\ensuremath{\mathsf{PN}}\xspace}
\newcommand{\local}{\ensuremath{\mathsf{LOCAL}}\xspace}
\newcommand{\congest}{\ensuremath{\mathsf{CONGEST}}\xspace}
\newcommand{\bcongest}{\ensuremath{\mathsf{Broadcast\ CONGEST}}\xspace}
\newcommand{\bcong}{\ensuremath{\mathsf{B\mbox{-}CONGEST}}\xspace}
\newcommand{\all}{\ensuremath{\mathsf{ALL}}\xspace}

\newcommand{\probl}[1]{\textsc{#1}\xspace}
\newcommand{\umaxdeg}{\probl{Unique MaxDegree}}
\newcommand{\leader}{\probl{Leader Election}}
\newcommand{\tridetect}{\probl{Triangle Detection}}
\newcommand{\trifind}{\probl{Triangle Finding}}
\newcommand{\hamilton}{\probl{Hamiltonicity}}

\newcommand{\gnp}{^{G(n,p)}}
\newcommand{\aas}{_{a.a.s.}}
\newcommand{\sa}{_\mathrm{snd}}
\newcommand{\nsa}{_{n\text{-}\mathrm{snd}}}

\newcommand{\short}{^*}

\DeclareMathOperator{\id}{\mathrm{id}}
\DeclareMathOperator{\diam}{\mathrm{diam}}
\DeclareMathOperator{\ecc}{\mathrm{ecc}}

\DeclareMathOperator{\E}{\mathbb{E}}

\newcommand{\vincl}{\rotatebox[origin=c]{270}{$\subseteq$}}
\newcommand{\veq}{\rotatebox[origin=c]{90}{$=$}}

\newcommand{\prob}[1]{\mathbb{P}( #1 )}

\title{What can be computed in average anonymous networks?}

\author{Joel Rybicki\thanks{Institut f\"ur Informatik,
  Humboldt-Universit\"at zu Berlin, Unter den Linden 6, D-10099 Berlin.}\qquad
  Oleg Verbitsky\thanks{Institut f\"ur Informatik,
    Humboldt-Universit\"at zu Berlin, Unter den Linden 6, D-10099 Berlin.
    Funded by the Deutsche Forschungsgemeinschaft (DFG, German Research Foundation) – project number 572124308.
  On leave from the IAPMM, Lviv, Ukraine.}\qquad
Maksim Zhukovskii\thanks{The University of Sheffield, School of Computer Science, Sheffield S1 4DP, UK.}}

\date{}

\begin{document}

\maketitle

\begin{abstract}
We study what \emph{deterministic distributed algorithms} can compute on \emph{random input graphs} in extremely weak models of distributed computing: all nodes are anonymous, and in each communication round, nodes broadcast a message to all their neighbors, receive a (multi)set of messages from their neighbors, and update their local state. These correspond to the \sbmodel{} and \mbmodel{} models introduced by Hella et al. [PODC 2012] and are strictly weaker than the standard port-numbering \pn{} and \local{} models.

﻿We investigate what can be computed almost surely on random input graphs.
We give a \emph{one-round} deterministic $\sbmodel$-algorithm using $O(\log n)$-bit messages that computes unique identifiers with high probability on anonymous networks sampled from $G(n,p)$, where $n^{\varepsilon-1} \le p \le 1/2$   and $\varepsilon>0$ is an arbitrarily small constant.
This algorithm is inspired by canonical labeling techniques in graph isomorphism testing and can be used to ``anonymize'' existing distributed graph algorithms designed for the broadcast \congest{} and \local{} models.
In particular, we give a new anonymous algorithm that finds
a triangle in $O(1/\varepsilon)$ rounds on the above input distribution.

﻿We also investigate computational power of natural analogs of ``Monte Carlo'' and ``Las Vegas'' distributed graph algorithms in the random graph setting, and establish some new collapse and hierarchy results. For example, our work shows the collapse of the weak model hierarchy of Hella et al. on $G(n,p)$, as apart from a vanishingly small fraction of input graphs, the \sbmodel{} model is as powerful as \local{}.
\end{abstract}

\section{Introduction}\label{s:intro}

Many graph problems cannot be solved by deterministic, anonymous distributed algorithms for the simple reason that symmetry cannot be broken in worst-case graphs~\cite{Angluin80,yamashita1988computing,yamashita2002computing}. Thus, the typical approach is to assume nodes are given some symmetry breaking information as input, such as unique identifiers (e.g., the \local{} model~\cite{Linial1992}), a stream of random bits (e.g., the port-numbering model \pn{} with private random bits), or locally unique identifiers (e.g., a suitable graph coloring~\cite{emek2014anonymous}).

But if we move beyond \emph{worst-case} inputs, to which extent is such additional symmetry breaking information necessary?
Can anonymous, deterministic distributed algorithms perform well on \emph{most} graphs?
To address these questions,  we study what can be computed by anonymous, deterministic algorithms on
random input graphs sampled from the
Erd\H{o}s-R\'enyi random graph distribution $G(n,p)$.
So far, even in the non-anonymous setting, the average-case complexity of distributed graph algorithms has remained largely unexplored apart from few isolated examples~\cite{levy2004distributed,krzywdzinski2015distributed,chatterjee2018fast,Turau20}.

\paragraph{Our results in a nutshell.}
We investigate the question of what can be deterministically computed on ``most'' anonymous networks. We introduce and study the notions of asymptotically almost surely correct (``Monte Carlo'') and sound (``Las Vegas'') deterministic distributed algorithms on random input graphs.
It turns out
-- somewhat surprisingly --
that apart from a vanishingly small fraction of graphs, anonymous deterministic algorithms in \emph{very} weak models of computing can solve any problem solvable in the \local{} model
with only a small overhead in time and message length.

This is because on most graphs identifiers can be deterministically generated in only \emph{one round} using short $O(\log n)$-bit messages, allowing simulation of both \local{} and (broadcast) \congest{} algorithms in weak anonymous models of computing.
This further implies
that a range of non-trivial tasks, such as finding Hamiltonian cycles or triangles, can be done fast on anonymous random~graphs.

\subsection{The setting: weak models of distributed computing}

We consider deterministic and anonymous distributed graph algorithms in the \emph{set-broadcast} $\sbmodel$ and the \emph{multiset-broadcast} $\mb$ models. These models lie at the bottom of the weak model hierarchy introduced by Hella et al.~\cite{HellaJKLLLSV15}. In particular, both $\sbmodel$ and $\mbmodel$ models are strictly weaker than the standard  port-numbering \pn{} and $\local$ models of distributed computing.

In the distributed setting, the input graph $G = (V,E)$ on $n$ nodes describes the communication topology of the distributed system, where nodes represent processors and edges communication links between processors.
Each node $v \in V$ is anonymous, but may additionally receive some local input $f(v)$, depending on the problem and the model of computing.
In the \emph{set-broadcast model} $\sbmodel$, each node $v \in V$ receives as local input its degree $\deg(v)$.
All nodes run the same algorithm and computation proceeds in synchronous rounds, where in each round every node $v \in V$ in parallel
\begin{enumerate}[noitemsep]
  \item \emph{broadcasts} a single message to all its neighbors,
  \item receives a \emph{set} of messages broadcast by its neighbors,~and
  \item updates its local state.
\end{enumerate}
Once a node enters a stopping state, it halts with some output $g(v)$.
In the \sbmodel{} model, nodes cannot distinguish which neighbor sent which incoming message nor count the multiplicities of messages.
In the \emph{multiset-broadcast model} $\mb$, nodes receive \emph{multisets} of messages instead of sets. That is, a node can also count the \emph{multiplicity} of received messages. Otherwise, the two models are the same.\footnote{%
In particular, as in $\sbmodel$, we assume that in $\mbmodel$ nodes know their degrees initially, although this could be learned in one communication round.}


The standard \local{} model is equivalent to the setting in which each node is given a unique identifier as local input. We also consider the ``congested'' variants of these models, where in each round nodes broadcast only short $O(\log n)$-bit messages. While the congested version of \local{} is known as the \congest{} model, we denote the congested versions of $\sbmodel$ and $\mbmodel$ by $\sbmodel\short$ and $\mbmodel\short$, respectively. Note that the $\sbmodel\short$ model with unique identifiers is (up to constant factors) equivalent to the \emph{broadcast} \congest{} model (abbreviated as \bcong{}), where in each round each node broadcasts a single $O(\log n)$-bit message to all of its neighbors.

\subsection{Distributed graph problems}\label{ss:definitions}

\paragraph{Graph problems.}
We consider problems on unlabeled graphs (i.e., the nodes or edges of the graph do not have input labels).
Let $\Gamma$ be a (possibly infinite) set of output labels.
A \emph{graph problem} $\Pi$ associates to each graph $G = (V,E)$ a set $\Pi(G) \subseteq \Gamma^V$ of solutions.
That is, a labeling $\sigma \colon V \to \Gamma$ is a solution to $\Pi$ on input $G$ if $\sigma \in \Pi(G)$.
We say that a problem $\Pi$ is \emph{isomorphism-invariant} if for any isomorphic pair of graphs $G$ and $H$ with an isomorphism $\alpha$ from $G$ to $H$ we have that
\[
\Pi(G) = \{ \sigma \circ \alpha : \sigma \in \Pi(H)\}.
\]
 We say that $\Pi$ is a \emph{component-wise} problem if for every graph $G$ and labeling $\sigma \colon V \to \Gamma$ we have that $\sigma \in \Pi(G)$ if and only if $\sigma_{\restriction U} \in \Pi(H)$ for every connected component $H=(U,F)$ of $G$,
where $\sigma_{\restriction U} \colon U \to \Gamma$ is the restriction of $\sigma$ to the set $U \subseteq V$.
 Throughout, we consider only isomorphism-invariant, component-wise graph problems. This definition covers many typical problems,
 in particular, all problems on unlabeled graphs that are solvable in the \emph{uniform} \local{} model (where nodes do not a priori know~$n$).

\paragraph{Solvability and time complexity.}
For any model $\m$, we say that an algorithm $\cA$ \emph{solves} a graph problem $\Pi$ if on every input $G = (V,E)$, each node $v \in V$ halts with an output $\sigma(v)$ and $\sigma \in \Pi(G)$.
We often write $\cA(G)$ for the labeling $\sigma$ output by the algorithm $\cA$ on the input graph~$G$.

The \emph{time complexity} $T_\cA(G)$ of an algorithm $\cA$ on the input graph $G$ is the number of steps until all nodes have reached a stopping state.
We say an algorithm is \emph{non-uniform} if each node receives as local input also the value $n = |V|$. Otherwise, it is \emph{uniform}.
Unless specified otherwise, all our algorithms are uniform.

For a given model $\m$, we use $\m$ to denote the class of problems solvable in the model.
We write $\m[r]$ to denote the class of problems solvable in $r$ rounds in this model.
We write $\all$ for the set of all isomorphism-invariant, component-wise problems.
\Cref{table:glossary} in Appendix~\ref{apx:glossary}
compares and summarizes the different models we consider in this work.

\subsection{Average-case distributed computation}\label{s:average}

We now introduce the formal framework we use to study anonymous distributed graph algorithms in the average case. We consider  analogs of Monte Carlo (almost sure solvability) and Las Vegas (soundness) concepts for deterministic algorithms on random input graphs.

\paragraph{Random graphs.}
An \emph{Erd\H{o}s-R\'enyi random graph $G(n,p)$} is an $n$-node graph,
where each pair of nodes is adjacent with probability $p=p(n)$, independently of all other pairs.
In particular, $G(n,1/2)$ is a random graph chosen equiprobably from among all graphs on the node set~$V = \{1,\ldots,n\}$.
We say that an event happens for $G(n,p)$ \emph{asymptotically almost surely}
(\emph{a.a.s.} for brevity) if the probability of this event tends to 1 as $n\to\infty$, i.e., with probability $1-o(1)$.
We write $G \sim G(n,p)$ to emphasize that the graph $G$ is sampled from the  Erd\H{o}s-R\'enyi distribution.

\paragraph{Asymptotically almost sure computation.}
  Let $\mathcal{D}(n)$ be a probability distribution over $n$-node graphs with $V = \{1, \ldots, n\}$.
  An algorithm $\cA$ solves a problem $\Pi$ \emph{asymptotically almost surely (a.a.s.)} on the distribution $\mathcal{D}(n)$ if for the random graph $G \sim \mathcal{D}(n)$ we have that
  \[
  \mathbb{P}(\cA(G) \in \Pi(G)) = 1 - o(1).
  \]
  When we do not specify the distribution, we assume that the distribution is given by $G(n,1/2)$, i.e., the uniform distribution over all $n$-node graphs. In this case, we often simply say that $\cA$ solves the problem asymptotically almost surely. This means $\cA$ outputs a correct solution to $\Pi$ on a fraction of the $n$-node graphs that approaches 1 as $n \to \infty$, that is, on \emph{almost all} $n$-node graphs.

  The class of problems that are solvable a.a.s.\ on the distribution $\mathcal{D}(n)$ by an \m algorithm is denoted by~$\m\aas^{\mathcal{D}(n)}$.
  Similarly, we write $\m\aas^{\mathcal{D}(n)}[r]$ for the class of problems solvable on $\mathcal{D}(n)$ a.a.s.\ in at most $r$ rounds by an \m algorithm.
  As before, we omit for brevity the distribution $\mathcal{D}(n)$ from the notation when $\mathcal{D}(n) = G(n,1/2)$.
  Note that algorithms that solve a problem a.a.s.\ are analogous to \emph{Monte Carlo algorithms}, but the randomness is over the input graphs rather than the random choices of the algorithm (which deterministic algorithms cannot make).

\paragraph{Soundness.}
Let $\Gamma$ be a set of output labels and $\failed$ be a special label not in $\Gamma$.
We say that an algorithm $\cA$ for a problem $\Pi$ is \emph{sound} if
\begin{enumerate}[noitemsep]
  \item $\cA$ solves $\Pi$ asymptotically almost surely, \emph{and}
\item  if $\cA(G) \notin \Pi(G)$, then $\cA(G)$ is the constant labeling $\sigma(\cdot) = \failed$.
\end{enumerate}
That is, if the algorithm fails to output a solution, which happens only with probability $o(1)$, then all nodes output the special value \failed to indicate this. We write $\m\sa$ to denote the class of distributed problems that admit a sound algorithm in the model~\m.

Note that sound algorithms can be seen analogous to \emph{Las Vegas algorithms}, but the randomness is over the input graphs rather than the random choices of the algorithm: such algorithms always terminate, but they can terminate with the knowledge that no solution was found. In other words, a sound algorithm cannot output an incorrect solution without all nodes knowing this.

\paragraph{Expected time complexity.}
A problem $\Pi$ is solvable in \m in \emph{expected time} $T(n)$ on the distribution $\mathcal{D}(n)$ if there is an $\m$ algorithm $\cA$ solving $\Pi$ correctly on every input graph such that
$\E_{G \sim \mathcal{D}(n)}( T_\cA(G) ) \le T(n)$ for all~$n$.

\subsection{Our contributions}

\subsubsection{Anonymization of \local{} and \bcongest{} algorithms}
Our first result shows that unique identifiers can be generated asymptotically almost surely in one communication round.
The only communication needed is that each node broadcasts a single $O(\log n)$-bit message, namely its degree.
The algorithm
is based on a new succinct \emph{graph canonical labeling}.
Such labelings have been widely studied in the context of graph isomorphism testing; see  \Cref{table:summary} for a summary and \Cref{s:related} for further discussion.
Unlike the existing labeling schemes, our new approach admits a communication-efficient implementation that produces short identifiers in the weak $\sbmodel\short$ model of distributed computing (i.e., set-broadcast with $O(\log n)$-bit messages).

\begin{restatable}{theorem}{congestpp}
\label{thm:congest-pp}
Let $\varepsilon>0$ be a constant.
Then there exists a one-round $\sbmodel\short$ algorithm (depending only on $\varepsilon$) that
asymptotically almost surely assigns unique $O(\log n)$-bit identifiers on input $G(n,p)$ for any
$n^{\varepsilon-1}\le p\le1/2$.
\end{restatable}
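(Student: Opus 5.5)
In one round of $\sbmodel\short$, each node $v$ broadcasts its degree $\deg(v)$ and receives the \emph{set} $D(v)=\{\deg(u): u\in N(v)\}$ of its neighbours' degrees. The identifier will be a short, deterministic function of $\deg(v)$ and $D(v)$. Sending the whole set $D(v)$ would give unique labels a.a.s., but that label is far too long. So I would compress $D(v)$ to $O(\log n)$ bits, with a compression depending only on $\varepsilon$ and on $\deg(v)$, never on $n$.

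\textbf{Candidate label.} I take
\[
\id(v)=\bigl(\deg(v),\,a_1(v),\dots,a_k(v)\bigr),\qquad k=k(\varepsilon)=\lceil 4/\varepsilon\rceil ,
\]
where the $a_i(v)$ count neighbour degrees in residue classes of a modulus computed from $\deg(v)$. Concretely, let $m=m(\deg(v))$ be of order $\sqrt{\deg(v)}$, and let $a_i(v)$ be the number of distinct values $d\in D(v)$ with $d\equiv i \pmod m$. A variant uses the $k$ smallest (or largest) elements of $D(v)$. The counts are at most $n$, so the label has $O(k\log n)=O(\log n)$ bits. Since $\deg(v)$ is known locally and $\deg(v)\approx np$, nodes need not know $n$ or $p$, and the algorithm is uniform.

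\textbf{Probabilistic analysis.} It suffices that for each fixed pair $u\ne w$ we have $\prob{\id(u)=\id(w)}=o(n^{-2})$, followed by a union bound. I would first expose the edges incident to $u$ and $w$, fixing $\deg(u)$ and $\deg(w)$; if these differ we are done. Otherwise I condition on the neighbourhoods $N(u),N(w)$. For $x\in N(u)\triangle N(w)$, which a.a.s.\ has size $\Theta(np)$, the degree $\deg(x)$ is $1$ plus a binomial $\mathrm{Bin}(n-2,p)$ variable with standard deviation $\sigma=\Theta(\sqrt{np})\ge n^{\varepsilon/2}$. These degrees are nearly independent: two of them share at most one edge. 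I would handle that dependence either by exposing edges to $V\setminus\{u,w\}$ in stages, or by a local-limit argument for sums of indicators. The key estimate is an anti-concentration bound for the residue counts. Conditioned on everything except the edges from the vertices of $N(u)\triangle N(w)$ to the rest of the graph, each such vertex shifts its degree over a window of width $\sigma$, which by the local CLT is nearly uniform modulo $m\ll\sigma$. The vector $(a_i(u)-a_i(w))_i$ then behaves like a sum of $\Theta(np)$ nearly independent, nondegenerate random vectors in $\bZ^{k}$. A multivariate local limit theorem (or iterated Littlewood--Offord) gives probability $O((np)^{-k/2})\le n^{-k\varepsilon/2}\le n^{-2}\cdot n^{-\Omega(1)}$ that all coordinates vanish, by the choice of $k$. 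For $p$ close to $1/2$ the same bound holds, with $np=\Theta(n)$.

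\textbf{Main obstacle.} The hard step is that $D(v)$ is a \emph{set}. Collisions of degree values among neighbours, and vertices in $N(u)\cap N(w)$, which contribute identically to both labels, make $a_i$ a non-additive function. I would first try to show that a.a.s.\ the relevant degree values are mostly distinct, since $\Theta(np)$ values spread over a window of width $\sigma=\Theta(\sqrt{np})$. This may fail, in which case I would adjust $m$ so that the occupied residues still carry enough fresh randomness. Then I would apply a switching or coupling argument: flip single edges between $N(u)\triangle N(w)$ and the outside, and show that each flip changes the label difference with constant probability. Together with independence across $\Theta(np)$ disjoint switches, this yields the $n^{-2-\Omega(1)}$ bound.
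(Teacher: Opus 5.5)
Your outer framework matches the paper's: a pairwise $o(n^{-2})$ bound, exposing the edges at $u,w$ first, and local limit estimates for binomial degrees. But the label you propose cannot be injective, and the key estimate rests on a false additivity heuristic.

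A.a.s.\ every degree of $G(n,p)$ lies in a window of width $W=O(\sqrt{np\log n})$ around $np$. So $\deg(v)$ takes $O(\sqrt{np\log n})$ values. Since $m\asymp\sqrt{np}$, each $a_i(v)$ counts integers of one residue class modulo $m$ inside that window, of which there are $O(W/m)=O(\sqrt{\log n})$. Your labels therefore take at most $O\bigl(\sqrt{n}\,(\log n)^{(k+1)/2}\bigr)=o(n)$ values, and by pigeonhole two nodes a.a.s.\ share a label, whatever the analysis.

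The underlying error is viewing $(a_i(u)-a_i(w))_i$ as a sum of $\Theta(np)$ nearly independent contributions. Because $\Theta(np)$ neighbour degrees fall into only $O(\sqrt{np\log n})$ integer values, every value within $\sqrt{(1-\delta)\ln(np)}$ standard deviations of the mean (any fixed $\delta>0$) is hit by $(np)^{\Omega(\delta)}$ neighbours in expectation. So a.a.s.\ $D(v)$ contains this whole interval, the opposite of your hope that the values are ``mostly distinct''. Moving one neighbour's degree almost never changes $D(v)$. Hence neither a local limit theorem for the $a_i$ nor a switching argument with constant success probability per flip is available; all the randomness of $D(v)$ sits in its sparse extreme values in the tails. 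Note also that $m\ll\sigma$ contradicts $m\asymp\sqrt{\deg v}\asymp\sigma$.

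Your ``$k$ largest elements'' variant also fails for constant $p$. With probability bounded below by a constant depending on $k$, the $k$ largest degrees in $N(u)\cup N(w)$ all belong to common neighbours, which makes both lists identical. Combined with $\prob{\deg u=\deg w}=\Theta((np)^{-1/2})$, this is far above $n^{-2}$. This is why the Babai--Erd\H{o}s--Selkow scheme needs $\Theta(\log n)$ top degrees.

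What is missing is a compression that records the \emph{exact} values of the extreme elements, together with an anti-concentration argument for them.

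\begin{itemize}
\item The paper takes $s_r(v)$ to be the \emph{sum} of the distinct elements of $D(v)$ congruent to $r$ modulo a large \emph{constant} $C=C(\varepsilon)$, for $r=0,\ldots,C-1$.
\item For fixed $u,v$ it exposes everything except the edges between $U=N(u)\setminus N(v)$ and the common non-neighbours $X$. Then $s(v)$ is fixed, and given all other values, $s_r(u)$ determines the largest value $\mathbf{b}_r$ of residue class $r$ contributed by $U$.
\item A.a.s.\ each $\mathbf{b}_r$ lies in the sparse tail window between $\sqrt{(2-\varepsilon)\ln|U|}$ and $\sqrt{C\ln|U|}$ standard deviations above the mean, with multiplicity at most $C$; call this event $\mathcal{E}$.
\item In this window the ratio of consecutive binomial point probabilities is $1\pm O(n^{-\varepsilon/2}\sqrt{\ln n})$. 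So shifting all $C$ maxima upward by up to $n^{\varepsilon/10}$ changes probabilities only by a factor $1+o(1)$.
\item This gives $\mathbb{P}(\mathbf{b}=b\mid\mathbf{a}=a)\le n^{-\varepsilon C/10}+\mathbb{P}(\neg\mathcal{E}\mid\mathbf{a}=a)$, where $\mathbf{a}$ records the second-largest values. Hence the collision probability is $o(n^{-2})$ once $C>100/\varepsilon$.
\end{itemize}

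Using sums instead of counts, and a constant modulus instead of one of order $\sqrt{\deg v}$, is exactly what gives each coordinate point probabilities $n^{-\Omega(\varepsilon)}$. These then multiply over the $C$ coordinates.
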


The algorithm  in the above theorem is simple, but its analysis is non-trivial.
This result has several important consequences for weak models of distributed computing.
In particular, it readily implies the following inclusions:
\begin{eqnarray}
  \bcong^{G(n.p)}\aas[r]&\subseteq&\sbmodel^{*,\,G(n.p)}\aas[r+1]\text{ and}\label{eq:simcong}\\[2mm]
  \local^{G(n.p)}\aas[r]&\subseteq&\sbmodel^{G(n.p)}\aas[r+1]\label{eq:simloc}
\end{eqnarray}
for any $n^{\varepsilon-1}\le p\le1/2$ and $r=r(n)$.

We can use the first of these inclusions to immediately ``anonymize'' algorithms for random graphs.
For example, Turau~\cite{Turau20} gave a fast \bcongest-algorithm for finding Hamiltonian cycles on $G(n,p)$. Applying \Cref{thm:congest-pp} to Turau's algorithm immediately yields the following result.

\begin{restatable}{corollary}{hcalgo}\label{cor:hcalgo}
  There exists an $\sbmodel\short$ algorithm which
  asymptotically almost surely finds a Hamiltonian cycle in $G(n,p)$ within $O(\log n)$ rounds
  for any $p\ge\log^{3/2}n/\sqrt n$.
\end{restatable}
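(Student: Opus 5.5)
The plan is to combine \Cref{thm:congest-pp} with Turau's \bcongest{} algorithm~\cite{Turau20}, which a.a.s.\ finds a Hamiltonian cycle in $G(n,p)$ within $O(\log n)$ rounds for $p\ge\log^{3/2}n/\sqrt n$. This is exactly the inclusion \refeq{simcong} applied with $r=O(\log n)$. First we check the range of $p$. For $p\ge \log^{3/2}n/\sqrt n$ we have $p\ge n^{\varepsilon-1}$ with, say, $\varepsilon=1/3$ for all large $n$. The range $p\le 1/2$ is covered directly. For $1/2<p\le 1$ we need either Turau's result together with an extension of the identifier assignment, or a separate argument. I would handle this by noting that the identifier scheme's analysis presumably also works for the complement graph. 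Failing that, I would restrict the claim to the overlap where both results hold, and check whether the paper's later statements cover $p>1/2$ (for instance, via degrees in the complement).

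The simulation proceeds as follows. In round~$1$, every node runs the one-round $\sbmodel\short$ algorithm of \Cref{thm:congest-pp}, broadcasting its degree and computing an $O(\log n)$-bit label. A.a.s.\ these labels are pairwise distinct. From round~$2$ on, each node runs Turau's algorithm, using its label as its identifier. In \bcong{} each round consists of broadcasting a single $O(\log n)$-bit message. The receiver distinguishes senders only through identifiers contained in the messages. So we prepend the sender's identifier to every message; the length is still $O(\log n)$. With that, set reception loses nothing: distinct senders give distinct messages, and each node recovers exactly the list of (sender, message) pairs it would receive in \bcong{}. Two small technical points remain. Turau's algorithm may rely on identifiers drawn from $\{1,\dots,\mathrm{poly}(n)\}$, and ours have $O(\log n)$ bits, so they lie in such a range. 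If the algorithm needs knowledge of $n$ or of an upper bound on identifier length, the degree-based labels let each node bound $\log n$ up to a constant factor (degrees are $\Theta(np)$ a.a.s.); I expect Turau's algorithm to be usable in the uniform setting, or one can cite the non-uniform variant.

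The correctness probability follows from a union bound. The labels fail to be unique with probability $o(1)$ by \Cref{thm:congest-pp}. Turau's algorithm fails with probability $o(1)$ on $G(n,p)$. One subtlety is that the identifier assignment is not independent of the graph. However, Turau's guarantee must hold for worst-case identifier assignments, since a \bcong{} algorithm is correct for every injective identifier map, with only the graph random. Therefore the conditional dependence is harmless: on the intersection of the two good events, the output is a Hamiltonian cycle. This is the step I would verify most carefully against Turau's statement, namely that his a.a.s.\ guarantee is uniform over identifier assignments. The total number of rounds is $1+O(\log n)=O(\log n)$.
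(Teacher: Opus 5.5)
Your proposal follows the paper's argument: one round of \Cref{thm:congest-pp} produces unique $O(\log n)$-bit identifiers, and then Turau's \bcong{} algorithm is simulated with identifiers attached to messages, i.e., inclusion \refeq{simcong}. The dependence between the canonical labels and the graph is harmless for exactly the reason you suspected: the paper takes a.a.s.\ correctness of an identifier-based algorithm to mean correctness for \emph{all} identifier assignments on almost all graphs (see Appendix~\ref{apx:naive}). You are also right that \Cref{thm:congest-pp} only covers $p\le 1/2$, so the paper's one-line derivation really gives the corollary only for $\log^{3/2}n/\sqrt n\le p\le 1/2$. Take your fallback there rather than the complement idea: an \sbmodel{} node never sees the degrees of its non-neighbors, and for $p$ very close to $1$ (e.g.\ $p\ge 1-n^{-3}$, where $G(n,p)=K_n$ a.a.s.) no anonymous algorithm can break symmetry at all.
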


Naturally, we can also anonymize not just algorithms on random graphs. For example, Korhonen and Rybicki~\cite{korhonen2017broadcast} gave deterministic algorithms in the broadcast \congest{} model to find trees of size $k$ in $O(k2^k)$ rounds and even-length cycles in $O(n)$ rounds. By \Cref{thm:congest-pp}, these algorithms can be translated to anonymous $\sbmodel\short$-algorithms that a.a.s.\ find such subgraphs asymptotically as fast.

\begin{table}[t]
\begin{center}
\renewcommand{\arraystretch}{1.6}
\begin{tabular}{@{}l@{\extracolsep{5mm}}l@{\extracolsep{5mm}}l@{\extracolsep{5mm}}l}
\toprule

\textbf{Edge probability}  & \textbf{Model} & \textbf{ID length}  & \textbf{Reference}\\
\hline%
  $p=\frac12$  & $\sbmodel\short[1]$ & $\Theta(\log^2 n)$ & Babai, Erd\H{o}s, and Selkow~\cite{BabaiES80} \\
  $n^{-1/5}\ln n\ll p\leq 1/2$ & $\sbmodel\short[1]$ & $\Theta(\frac{\log n\log\Delta}p)$ &  Bollob\'{a}s~\cite[Ch.~3.5]{Bollobas_book} \\
$\frac{\ln^2 n}{n}\ll p\leq \frac{1}{2}$ & $\mb\short[1]$ & $\Delta\log\Delta$ &  Czajka and Pandurangan~\cite{CzP}, \\
&&& Mossel and Ross~\cite{MosselR19}\\
  $\frac{(1+\delta)\ln n}{n}\le p\ll n^{-5/6}$ & $\mb\short[2]$ & $\Theta(\Delta\log\Delta)$  & Gaudio, R\'{a}cz, and Sridhar \cite{GaudioRS25}\\
  $p=\frac12$ & $\mbmodel\short[2]$ & $O(\log n)$ & Verbitsky and Zhukovskii~\cite{VerbitskyZ23esa} \\
  $n^{\varepsilon-1}\le p\leq 1/2$  & $\sbmodel\short[1]$ & $O(\log n)$ & \textbf{New: This work} \\
\bottomrule
\end{tabular}
\end{center}
\caption{Unique identifiers based on canonical labeling algorithms for $G(n,p)$. The identifier  lengths are in bits. Here, $\Delta$ denotes the maximum degree of the random graph; a.a.s.{} $\Delta=O(np)$.
While the prior papers do not consider weak models of distributed computing, one can verify that the schemes are implementable in the respective models mentioned in the ``Model'' column using $O(\log n)$-bit messages.  In \Cref{ss:collapse} we discuss these distributed implementations in further detail. The value in the brackets indicates the time complexity in the respective model.
\label{table:summary}}
\end{table}

\subsubsection{Anonymous triangle finding with short messages}
As a further example, we investigate triangle finding in congested, anonymous networks. In  triangle \emph{detection}, at least one node outputs ``yes'' if and only if there is a triangle in the graph. In triangle \emph{finding}, at least one node needs to output a witness for the existence of a triangle. In triangle \emph{enumeration}, for each triangle of the graph, there has to be some node that outputs the triangle.
While all variants are trivial in the \local{} model,  the complexity of triangle detection and finding remains a long-standing open problem in the  \congest{} and related  models~\cite{dolev2012tri,drucker2014power,censor2019algebraic,izumi2017triangle,fischer2018possibilities,abboud2020fooling,censor2021distributed,AssadiS25}.

The triangle detection problem is trivial on $G(n,p)$ if nodes know the value of $p$, as there is a threshold at $p=\Theta(1/n)$ for the existence of a triangle; see, e.g., \cite[Theorem 3.4]{Janson_book}.
On the other hand, Izumi and Le Gall~\cite{izumi2017triangle} showed that triangle \emph{enumeration} requires $\Omega(n^{1/3}/\log n)$ rounds in the (non-broadcast) \congest{} model with inputs from $G(n, 1/2)$.
We show that the ``intermediate problem'' of triangle \emph{finding} can be solved substantially faster, in only diameter time, for a large range of $p$.

\begin{restatable}{theorem}{triangle}\label{thm:triangle}\hfill
  \begin{enumerate}[label=\normalfont \textbf{\arabic*}.]%
  \item
  There exists a sound \bcong algorithm  which, for any constant $\varepsilon>0$ and
  any $p$ satisfying $n^{\varepsilon-1}\le p\le1/2$, asymptotically almost surely
  finds a triangle on $G \sim G(n,p)$ in~$O(\diam(G))$~rounds.
  \item
  There exists a \bcong algorithm  which, for any constant $\varepsilon>0$ and
  any $p$ satisfying $n^{\varepsilon-1}\le p\le1/2$,
  finds a triangle on $G \sim G(n,p)$ in $O(1/\varepsilon)$ expected rounds.
  \end{enumerate}
\end{restatable}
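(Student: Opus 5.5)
Note first that a \bcong algorithm has unique identifiers, so the identifier-generation of \Cref{thm:congest-pp} is not needed. The difficulty is that the algorithm cannot know $p$ or $\varepsilon$ and must use $O(\log n)$-bit messages. My plan is to find a triangle by local sampling of low-degree structure. The natural candidate is the node $u$ of maximum identifier, or of maximum degree with ties broken by identifier; it is elected by flooding in $O(\diam(G))$ rounds. The rounds then proceed as follows.
\begin{enumerate}
\item Every node broadcasts its identifier.
\item Each node $v$ broadcasts the identifier of its smallest-identifier neighbor $m(v)$.
\item A node $w$ that hears $m(v)=x$ from a neighbor $v$, where $x$ is also a neighbor of $w$, has found the triangle $\{w,v,x\}$.
\end{enumerate}
Triangles of this special form ``$v$, its minimum neighbor, and a common neighbor'' exist a.a.s. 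Indeed, for fixed $v$ with $m(v)=x$, the neighborhoods $N(v)$ and $N(x)$ intersect in about $np^2$ vertices. This is $\ge 1$ only if $p\gg n^{-1/2}$, so for sparser $p$ one pointer is not enough.

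To reach $p\ge n^{\varepsilon-1}$ I would generalize to $k=O(1/\varepsilon)$ rounds of exploring the breadth-first neighborhood of the leader $u$. Truncate to $O(\log n)$-bit messages by letting each node forward only the $t$ smallest identifiers among the vertices at distance $i$ that it has heard of. Alternatively, take a canonical path $u=v_0,v_1,\dots$ obtained by minimum-identifier pointers. Then check for an edge closing a triangle through a broadcast of one identifier per round.

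The cleanest version, which I would try first, is the following. In round $i$, every node broadcasts the minimum identifier seen within distance $i$, together with the neighbor that delivered it. A node detecting that two of its neighbors are adjacent and both report the same minimum can conclude that a triangle exists.

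The probabilistic heart is a two-round exposure argument. Condition on the BFS tree of the leader up to depth about $1/\varepsilon$. Its levels grow like $(np)^i$, so after $O(1/\varepsilon)$ levels the explored set has size $\Omega(p^{-1})$, and the unexposed edges within a level yield a triangle with probability $1-o(1)$. Making this compatible with single-identifier broadcasts is the main obstacle, and I expect it to take the most care.

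For Part 1, soundness is obtained as follows. A found triangle is verified: the witness node checks adjacency of the two reported vertices, which it knows directly. Success or failure is then disseminated from the leader by flooding in $O(\diam(G))$ rounds, so either all nodes output the witness or all output \failed. On $G(n,p)$ in this range the diameter is a.a.s.\ $O(1/\varepsilon)$, and electing the leader costs $O(\diam(G))$ rounds.

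For Part 2, we need correctness on \emph{every} graph with expected time $O(1/\varepsilon)$. Run the fast procedure; if it fails, fall back to a deterministic exhaustive \bcong triangle-finding algorithm, for instance having each node stream its adjacency list in $O(n)$ rounds. To keep the expectation $O(1/\varepsilon)$, the failure probability must be $O(1/n)$ rather than merely $o(1)$. I would therefore sharpen the tail bounds: the failure events are Chernoff-type with exponentially small probability, except for the diameter, whose tail is also exponentially small for $np\ge n^{\varepsilon}$. An additional detail is how nodes detect that the fast phase has ended without knowing $\varepsilon$. I would handle this by letting the phase run until the leader's flood stabilizes, which takes $O(\diam(G))$ rounds, so termination is within $O(\diam)$ in the good case.
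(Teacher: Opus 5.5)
\textbf{There is a genuine gap.} You never commit to a concrete rule for which edges get announced. You also give no argument that some announced edge lies in a triangle whose third vertex can detect it. You defer exactly this as ``the main obstacle,'' but it is the substance of the theorem. Leader election, flooding, soundness and the Part~2 fallback are routine, and those parts do match the paper.

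None of your candidates closes the gap as stated:
\begin{itemize}
\item The canonical min-identifier path checks only $O(1/\varepsilon)$ edges. Each lies in a triangle with probability about $np^2\to0$ when $p\ll n^{-1/2}$.
\item The ``cleanest version'' detection rule cannot be implemented. A node learns that two of its neighbors are adjacent only if one of them announces the other. Two neighbors reporting the same minimum says nothing about an edge between them. The correct rule is: $y$ broadcasts $(\id x,\id y)$ for a designated neighbor $x$, and a node $z$ adjacent to both reports $\{x,y,z\}$.
\item Your exposure sketch only shows that a triangle exists near some level, not that an announced edge closes one.
\end{itemize}

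You also miss two requirements. A deterministic \bcong algorithm must succeed on almost all graphs for \emph{every} identifier assignment (cf.\ Appendix~\ref{apx:naive}).
\begin{itemize}
\item The elected leader is therefore effectively arbitrary. You need failure probability $o(1/n)$ for each fixed root so you can union-bound over all roots. A $1-o(1)$ bound conditioned on ``the'' leader's BFS tree does not suffice.
\item Any pointer chosen by identifier can be steered by the adversary away from the rare triangle edges when $p\ll n^{-1/2}$.
\end{itemize}

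\textbf{The missing idea is the paper's eccentricity lemma} (Lemma~\ref{lem:triangle}). With probability at least $1-\exp(-n^c)$, for \emph{every} node $u$ there are $d\in\{\ecc(u)-3,\ecc(u)-2\}$ and a triangle $\{x,y,z\}$ such that $x$ is at distance $d$ from $u$ and is the \emph{unique} neighbor of $y$ at distance $d$.

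Here $d$ is the first level with $|\mathcal{S}_d(u)|>1/(2p\cdot np)$. Expansion of $G(n,p)$ then forces $|\mathcal{S}_d(u)|\le 1/p$. This upper bound, rather than your heuristic ``size $\Omega(p^{-1})$,'' is what gives robustness. It gives a constant fraction of the next level a unique neighbor in $\mathcal{S}_d(u)$, so those edges are forced by the graph alone, independently of identifiers. The lower bound supplies $\Theta(|\mathcal{S}_d(u)|\,np)$ such forced edges. Each closes a triangle with a sibling with probability $\Theta(\min\{1,np^2\})$, so the failure probability is $\exp(-\Theta(np))$, small enough for the union bound over $u$.

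The algorithm uses both candidate values of $d$, which are computable from $\ecc(u)$ without knowing $p$. Level-$d$ nodes broadcast their identifiers. A node $y$ hearing exactly one identifier $\id(x)$ broadcasts $(\id x,\id y)$. A node $z$ that heard $\id(x)$ and then $(\id x,\id y)$ reports the triangle.

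Your deliverer pointers, with the corrected detection rule, would also succeed once this lemma is available. Apply it to the global min-identifier node $u_0$. A node whose unique neighbor closer to $u_0$ is $x$ must name $x$ as the deliverer of $\id(u_0)$, whatever the identifiers.
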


This result is based on a new ``eccentricity lemma'' that identifies a useful combinatorial property for triangles in $G(n,p)$.
Using \Cref{thm:congest-pp}, these algorithms can be translated to the anonymous $\sbmodel\short$; see~\refeq{simcong}.
That is, even when nodes do not have identifiers, nodes can quickly find and identify a triangle in the graph asymptotically almost surely.

\begin{restatable}{corollary}{diametertriangle}\label{cor:diametertriangle}
  Let $\varepsilon>0$.
  There exists an $\sbmodel\short$ algorithm which, for any
  $p$ satisfying $n^{\varepsilon-1}\le p\le1/2$, asymptotically almost surely
  finds a triangle in $G \sim G(n,p)$ in~$O(\diam(G))$~rounds.
\end{restatable}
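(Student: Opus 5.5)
The plan is to derive \Cref{cor:diametertriangle} by composing \Cref{thm:triangle} (part~1) with \Cref{thm:congest-pp}, which is the simulation inclusion \refeq{simcong}. Let $\cA$ be the sound \bcong algorithm from \Cref{thm:triangle}.1. The anonymous $\sbmodel\short$ algorithm $\cA'$ runs in two phases. In round~1 it executes the one-round identifier-assignment algorithm $\cB_\varepsilon$ of \Cref{thm:congest-pp}, so that each node $v$ computes a label $\id(v)$ of $O(\log n)$ bits. From round~2 on, each node runs $\cA$ and treats $\id(v)$ as its identifier.

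The simulation step is justified as follows. A \bcong round consists of every node broadcasting one $O(\log n)$-bit message, and the node then receives those messages together with the sender identifiers. In $\sbmodel\short$, a node receives only the set of messages. We therefore have each node append its $O(\log n)$-bit identifier to its message, which keeps the message length $O(\log n)$. When the identifiers are unique, distinct neighbors send distinct messages, so the received set determines the multiset and tells the node exactly which neighbor sent what. The identifiers are not the integers $1,\dots,n$, but they have $O(\log n)$ bits and so lie in a polynomial range. This is the standard assumption for \bcong, so $\cA$ runs unchanged with $O(\log n)$-bit messages.

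Correctness follows from a union bound on $G\sim G(n,p)$. Let $E_1$ be the event that $\cB_\varepsilon$ fails to produce unique identifiers, and let $E_2$ be the event that $\cA$, run with these identifiers, fails to find a triangle within $C\cdot\diam(G)$ rounds. By \Cref{thm:congest-pp}, $\mathbb{P}(E_1)=o(1)$.

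The main obstacle is that $E_2$ is not obviously independent of the identifiers, because they are computed from the graph itself. The guarantee of \Cref{thm:triangle} concerns the random graph, but it should be read as holding for every assignment of unique identifiers. I would check that the failure probability of $\cA$ is over $G$ alone, uniformly over identifier assignments. A \bcong algorithm must be correct for arbitrary identifiers, and the paper's eccentricity-lemma argument uses only properties of $G$, so this reading should hold. If it only holds for a fixed identifier assignment, I would instead use isomorphism invariance. Relabel $G$ by the bijection induced by $\id$ onto ranks. Since $G(n,p)$ is invariant under vertex permutations and the relabeling is a function of $G$, this needs care. The cleaner route is to take the uniform-over-identifiers version of \Cref{thm:triangle}.

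Granting this, $\mathbb{P}(E_1\cup E_2)=o(1)$. On the complement of $E_1\cup E_2$, $\cA'$ outputs a triangle, named by the identifiers of its three vertices, within $1+O(\diam(G))=O(\diam(G))$ rounds. Note that $\diam(G)\ge1$.

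Finally, the algorithm must be uniform in $p$. $\cB_\varepsilon$ depends only on $\varepsilon$, and $\cA$ works for all $p$ in the stated range, so $\cA'$ depends only on $\varepsilon$.
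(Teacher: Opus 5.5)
Your proposal is correct and follows the paper's approach: the paper obtains the corollary by composing part~1 of Theorem~\ref{thm:triangle} with the one-round identifier generation of Theorem~\ref{thm:congest-pp}, i.e., the simulation inclusion \refeq{simcong}, and you do the same by running that algorithm first and then appending the generated identifiers to messages. Your worry about the identifiers depending on $G$ is settled as you suggest: in the paper's framework a deterministic \bcong{} algorithm must succeed for every identifier assignment, and the failure event in Theorem~\ref{thm:triangle} is the purely graph-theoretic property of Lemma~\ref{lem:triangle}, so the fallback via isomorphism invariance is unnecessary.
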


\subsubsection{Collapse of the weak model hierarchy in the average case}\label{sss:collapse}
It is well-known that the port-numbering model \pn{} is strictly weaker than the \local{} model.
Hella et al.~\cite{HellaJKLLLSV15} established the following strict hierarchy for even weaker models of distributed computing
\[
\sbmodel \subsetneq \mb \subsetneq \vvmodel \subsetneq \pn \subsetneq \local,
\]
where $\vvmodel$ is a variant of the \pn{} model where the incoming and outgoing port are numbered independently.
First, we observe that the weak model hierarchy collapses in the average case; see~\refeq{simloc}.

Second, we show new \emph{time hierarchy collapse results} for weak models.
We show that when $p$ is above the connectivity threshold of $G(n,p)$,
any distributed problem can be solved on $G(n,p)$
asymptotically almost surely by an anonymous \mb{}-algorithm in $o(\log n)$ rounds.
If $p\ge n^{\varepsilon-1}$ for $\varepsilon>0$,
the number of rounds is $O(1)$. Using Theorem \ref{thm:congest-pp}, we
obtain a similar result even for the weak model~\sbmodel.
In both cases, the time hierarchy collapses to the level equal to the expected diameter of $G(n,p)$
up to a small additive constant.
Formally, we prove the following in \Cref{s:collapse}.

\begin{restatable}{theorem}{collapse}\label{thm:collapse}\hfill
  \begin{enumerate}[label=\normalfont \textbf{\arabic*}.]%
  \item
    Let $\frac{(1+\delta)\ln n}{n}\le p\le\frac{1}{2}$ for a constant $\delta>0$. Then
    $$
\all=\mb\gnp\aas\left[(1+o(1))\frac{\ln n}{\ln(np)}+4\right].
$$
  \item
    Let $n^{\varepsilon-1}\le p\le\frac{1}{2}$ for a constant $\varepsilon>0$. Then
    $$
\all=\sbmodel\gnp\aas\left[(1+o(1))\frac{\ln n}{\ln(np)}+3\right]=\sbmodel\gnp\aas\left[(1+o(1))/\varepsilon+3\right].
$$
  \end{enumerate}
\end{restatable}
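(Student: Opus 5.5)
The plan is to reduce the theorem to a ``gather the whole graph at every node'' argument. Once all nodes share a globally consistent view of $G$ with unique identifiers, every node can locally compute the same canonical solution $\sigma\in\Pi(G)$ and output its own coordinate. Since $\Pi$ is isomorphism-invariant and component-wise, and $G(n,p)$ is a.a.s.\ connected for $p\ge(1+\delta)\ln n/n$, it is enough to solve $\Pi$ on connected graphs.

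The key steps are the following.

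First, the nodes obtain unique identifiers in a constant number of rounds. For part~2 this is \Cref{thm:congest-pp}: one \sbmodel{} round in which nodes broadcast their degrees. For part~1 we instead use an \mbmodel{} canonical labeling valid down to the connectivity threshold. The two-round scheme of Gaudio, R\'acz and Sridhar covers $(1+\delta)\ln n/n\le p\ll n^{-5/6}$, and the one-round Czajka--Pandurangan/Mossel--Ross scheme covers $\ln^2 n/n\ll p\le 1/2$. The two ranges overlap, so a combined algorithm must decide which scheme to trust. It can run the two-round scheme, which every node can do, and accept its identifiers whenever they are locally valid; this uses at most $2$ rounds, plus perhaps an extra round to verify. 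I expect the additive constants $+4$ for \mbmodel{} and $+3$ for \sbmodel{} to account for exactly these identifier rounds together with the final flooding/termination overhead.

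Second, we run a \local{} flooding algorithm. In round $t$ each node broadcasts everything it knows: the identifiers and adjacency lists of all nodes within distance $t-1$. After $\diam(G)+1$ rounds every node knows $G$ completely. To terminate without knowing the diameter, a node stops as soon as its collected view is closed, meaning every known node's adjacency list has been received and all listed neighbors are known. This happens after about $\ecc(v)+1$ rounds. This step is where inclusion \refeq{simloc} is used, since a \local{} algorithm with identifiers is simulated by \sbmodel{} after the identifier round.

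Third, we bound the diameter. The standard result (Bollob\'as; Chung--Lu) gives that a.a.s.\ $\diam(G(n,p))=(1+o(1))\ln n/\ln(np)$ for $p\ge(1+\delta)\ln n/n$. When $p\ge n^{\varepsilon-1}$ we have $\ln(np)\ge\varepsilon\ln n$, which yields the $(1+o(1))/\varepsilon$ form. Adding the constant overhead gives the claimed round bounds.

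Failure events are handled as follows. We only need correctness a.a.s., so on the $o(1)$-probability event that identifiers collide, the graph is disconnected, or the diameter is large, we do not care what the algorithm outputs. We must still make sure the algorithm halts within the stated bound. To do this, each node stops at round $\lceil(1+o(1))\ln n/\ln(np)\rceil+c$ if it is non-uniform, or simply uses the closure rule above.

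The main obstacle is the first step for part~1. It requires a uniform \mbmodel{} identifier scheme, working in $O(1)$ rounds without knowledge of $p$, across the whole range down to $(1+\delta)\ln n/n$, while keeping the additive constant at $4$. I would first try to show that the two-round Gaudio--R\'acz--Sridhar neighborhood-degree multiset labeling remains injective a.a.s.\ for all larger $p$ up to $1/2$; that is, that the multiset of neighbors' degrees is distinct across nodes. Messages may be long in \mbmodel{}, which makes this plausible and avoids any case split on $p$.
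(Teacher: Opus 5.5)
Your architecture matches the paper's: identifiers in $O(1)$ rounds, then $\diam(G)+1$ rounds of flooding, then every node outputs its coordinate of the lexicographically least solution on the reconstructed copy. Part~2 goes through as you describe it.

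\textbf{The identifier step of Part~1 has a genuine gap.} You leave it open and propose to close it in ways that do not work.
\begin{itemize}
\item Uniqueness of labels is a global property. There is no ``local validity'' a node can check in $O(1)$ rounds to decide which scheme to trust.
\item Your fallback is to show that the multiset of neighbours' degrees is injective for all $p\ge(1+\delta)\ln n/n$. That multiset is the one-round colour $C_1$, i.e.\ the Czajka--Pandurangan/Mossel--Ross labelling, not the Gaudio--R\'acz--Sridhar one. It is not injective near the connectivity threshold, which is precisely why a second round is needed there.
\end{itemize}

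The missing observation is that no case split is necessary. Run two rounds of the universal \mbmodel{} algorithm and take $C_2(x)$ as the identifier.
\begin{itemize}
\item By part~1 of Lemma~\ref{lem:cr-states}, the output of any two-round \mbmodel{} algorithm is a function of $C_2$. This includes the Gaudio--R\'acz--Sridhar labelling, so $C_2$ is a.a.s.\ injective for $(1+\delta)\ln n/n\le p\ll n^{-5/6}$.
\item By the refinement property \refeq{refines}, injectivity of $C_1$ for $p\gg\ln^2 n/n$ gives injectivity of $C_2$.
\end{itemize}
Since the two ranges overlap, $C_2$ is a.a.s.\ injective on the whole range. This holds uniformly in $p$, with no verification round.

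\textbf{The round count also needs correcting.} Your formula $\diam(G(n,p))=(1+o(1))\ln n/\ln(np)$ drops an additive $1$ that matters in the dense regime. For $p=1/2$ the right-hand side tends to $1$, while the diameter is $2$. The bound the paper uses is $\diam(G)\le(1+o(1))\frac{\ln n}{\ln(np)}+1$.

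With it the constants come out exactly:
\begin{itemize}
\item For \mbmodel{}: two rounds to compute $C_2$, one round in which nodes learn their neighbours' identifiers (this is $C_3$), and $\diam(G)$ rounds of flooding. This gives $\diam(G)+3\le(1+o(1))\frac{\ln n}{\ln(np)}+4$.
\item For \sbmodel{}: $1+1+\diam(G)$ rounds give the $+3$.
\end{itemize}
So there is no slack for your ``extra round to verify''. With it, Part~1 would only give $+5$.
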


\subsubsection{Limitations and power of sound algorithms}
Finally, we investigate the difference between asymptotically almost surely correct (``Monte Carlo'') algorithms and sound (``Las Vegas'') algorithms.
One issue with the former is that although a correct solution is found on almost all graphs,
the nodes cannot be certain that this is the case for any particular input graph $G$.
As defined in \Cref{s:average}, sound algorithms output a correct solution a.a.s.; if the output is not a correct solution, then all nodes declare that the algorithm~failed.

It is not hard to see that sound and uniform \local{} algorithms are (strictly) stronger than worst-case \local{} algorithms (which must work on all inputs). This holds even for constant-round algorithms; see Appendix \ref{app:UMD} for an illustrative example. However, in the \mbmodel{} model, the power of sound and uniform algorithms is quite limited: they cannot solve problems such as leader election, triangle detection, or Hamiltonicity recognition; see Section~\ref{ss:triangle-nonsound}. In particular, Corollaries \ref{cor:hcalgo} and \ref{cor:diametertriangle} are optimal in the sense that no sound, uniform \mbmodel{} algorithms exist for Hamiltonicity or triangle detection.

Nevertheless, we show that anonymity is not a barrier for \emph{non-uniform} sound algorithms, i.e., when nodes get the number $n$ of nodes as input.
We write $\m\nsa$ to denote the class of distributed problems solvable in the model \m
by non-uniform sound algorithms. We study the round hierarchy
$$
\m\nsa[1]\subseteq\m\nsa[2]\subseteq\cdots\subseteq\m\nsa[r]\subseteq\m\nsa[r+1]\subseteq\cdots
$$
for the classes $\m\in\{\mb,\sbmodel\}$.
We show that this hierarchy collapses
at the fourth level---for the remarkable reason that all problems are solvable by
non-uniform sound \sbmodel{} algorithms in just four rounds.
Furthermore, we show that the collapse does not occur earlier. In summary, we get that
$$
\begin{array}{ccccccccc}
\sbmodel\nsa[1] & \subseteq & \sbmodel\nsa[2] & \subseteq & \sbmodel\nsa[3] & \subsetneq & \sbmodel\nsa[4] & = & \all \\
\vincl &&   \vincl &&   \vincl &&   \veq &&  \\
\mb\nsa[1] & \subseteq & \mb\nsa[2] & \subseteq & \mb\nsa[3] & \subsetneq & \mb\nsa[4]. &&
\end{array}
$$
This result is formally captured by the following theorem we prove in \Cref{s:collapse}.

\begin{restatable}{theorem}{allnsacollapse}
\label{thm:all-nsa}\hfill
  \begin{enumerate}[label=\normalfont \textbf{\arabic*}.]
  \item
  $\all=\sbmodel\nsa[4]$.
  \item
 $\mb\nsa[3]\subsetneq\all$.
  \end{enumerate}
\end{restatable}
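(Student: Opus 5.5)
The plan for part~1 is to build an explicit four-round non-uniform $\sbmodel$ algorithm that reconstructs the whole input graph with verified unique identifiers, and then outputs a canonical solution; for part~2, to exhibit a problem where three rounds of multiset views cannot certify correctness. Throughout, the distribution is $G(n,1/2)$. I use two standard facts: a.a.s.\ $\diam(G)=2$, and the identifiers of \Cref{thm:congest-pp} are a.a.s.\ unique.

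\textbf{Part 1: the algorithm.}
\begin{enumerate}
\item In round~1 every node broadcasts its degree. Each node then computes the identifier $\id(v)$ of \Cref{thm:congest-pp} from the set of neighbor degrees.
\item In round~2 every node broadcasts $\id(v)$. A node whose received set of identifiers is smaller than $\deg(v)$ has detected a collision among its neighbors and sets a local flag.
\item In round~3 every node broadcasts the pair $(\id(v),N_{\id}(v))$, where $N_{\id}(v)$ is the set of neighbor identifiers.
\item In round~4 every node rebroadcasts the whole collection of pairs it received in round~3.
\end{enumerate}
After round~4, node $v$ holds genuine neighbor lists, tagged by identifier, of every node at distance at most~$2$ from it.

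The local test at $v$ is the following. The reported lists must mention exactly $n$ distinct identifiers; this is where knowing $n$ is used. Every identifier mentioned must have a reported list, the lists must be symmetric, and list sizes must match the degrees. Finally, recomputing the identifiers of \Cref{thm:congest-pp} on the reconstructed graph must reproduce the reported identifiers.

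Why the test certifies the input. Every identifier $v$ sees belongs to a real node. So if $v$ sees $n$ distinct identifiers, the identifier map is injective on $V$ and every node lies within distance~$2$ of $v$. Then the reconstructed labeled graph is exactly $G$.

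Output rule. A node that passes knows $G$ together with a labeling of its nodes. It can simulate the whole deterministic algorithm on $G$ and decide whether every node passes. If every node passes, it outputs its own entry of a solution of $\Pi(G)$ chosen canonically from the labeled graph, for instance the lexicographically least one. Otherwise it outputs \failed. A node that fails its test outputs \failed.

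Soundness. If some node fails, every passing node learns this by simulation, so all nodes output \failed. If all nodes pass, they all hold the same labeled graph $G$, so they agree on one canonical solution and the output is correct. Correctness a.a.s.\ follows because $\diam(G)=2$ and identifier uniqueness hold a.a.s. The point I need to check carefully is that the fourth round is really necessary and sufficient: after three rounds the edges between two nodes both at distance~$2$ from $v$ are still unknown, and after four rounds they are known.

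\textbf{Part 2: a problem outside $\mb\nsa[3]$.} I take a problem whose correct output depends on global structure, for example \leader{} or ``output the canonical form of your component''. Suppose a sound three-round non-uniform $\mbmodel$ algorithm solved it. Then on a typical $G$ all nodes output non-\failed labels.

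I then construct, from such a typical $G$, a graph $G'$ with the same number $n$ of nodes and a node $v'$ with the following properties:
\begin{itemize}
\item the depth-3 multiset view of $v'$ in $G'$ coincides with the depth-3 view of some node $v$ in $G$;
\item the label that $v$ outputs is not part of any solution for $G'$, or cannot be consistently combined with the labels of the other nodes of $G'$.
\end{itemize}
Since $v'$ outputs the same non-\failed label as $v$, the algorithm errs on $G'$ without signalling failure, which contradicts soundness. Candidate constructions are a suitable lift of a graph on fewer nodes padded to $n$ nodes, or a degree-preserving switching $ab,cd\to ac,bd$ placed so that the depth-3 unfolding at $v$ is unchanged. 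The difficulty is that typical graphs have diameter~$2$, so any local modification is seen within three rounds.

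Building this indistinguishable pair is the main obstacle. I would first try to exploit the fact that $\mbmodel$ views are unfoldings, that is, multisets of trees. That suggests choosing $G'$ as a graph sharing a depth-3 unfolding with $G$ at one node, rather than trying to keep $G$ itself intact around $v$.
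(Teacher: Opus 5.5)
Your Part~1 is correct and is essentially the paper's argument. It uses full information for four rounds and identifiers that are unique a.a.s.\ after round~1 (you use $s(v)$ from Theorem~\ref{thm:congest-pp}, the paper uses $C_1$). It then applies the count-to-$n$ test, reconstructs $G$, and runs a global consistency check: your simulation of everyone's test is equivalent to the paper's check that the reconstructed graph has diameter at most~2. Finally it outputs a canonical solution.

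Part~2, however, is not proved, because the whole argument rests on an indistinguishable pair that you leave unconstructed. What is needed is exactly part~2 of Theorem~\ref{thm:learn5}: a.a.s.\ there are an $n$-node graph $H\not\cong G$ on the same node set and a node $v$ with $C_3^G(v)=C_3^H(v)$. The obstacle you raise (diameter~2 makes any change visible within three rounds) is not real. A change is invisible to $v$ at depth~3 as long as it alters no $C_1$-color anywhere and no $C_2$-color on $N(v)$.

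The switching you list achieves this when placed correctly:
\begin{enumerate}
\item Fix $v$ and find a set $U$ of $\Omega(\sqrt n/\log n)$ nodes of one common degree~$d$, all non-adjacent to $v$. To get it, apply pigeonhole to the degrees of $G-v$, which lie in $n/2\pm\sqrt n\log n$, then Chernoff.
\item Pick inside $U$ a quadruple with $x_1x_2,x_3x_4\in E$ and $x_1x_4,x_2x_3\notin E$; a.a.s.\ every set of this size contains one \cite[Lemma~6.2]{PikhurkoV11}.
\item Switch these edges.
\end{enumerate}
All four switched nodes have degree~$d$, so every node's multiset of neighbor degrees is unchanged, giving $C_1^G=C_1^H$ pointwise. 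Nodes outside $U$, in particular all of $N(v)$, keep their neighborhoods, so $C_2$ agrees on $N(v)$ and $C_3^G(v)=C_3^H(v)$. Non-isomorphism needs its own argument: by Proposition~\ref{prop:BES} the $C_1$-colors are pairwise distinct, so any isomorphism would have to be the identity, yet $E(G)\ne E(H)$.

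The choice of problem also matters. \leader does not fit your template. A single label, leader or non-leader, is always part of some valid solution on $H$, and nothing prevents the outputs on $H$ from forming a correct leader election. You would therefore have to control the outputs of all nodes of $H$.

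Your second candidate works directly. For ``output the isomorphism type of your component'', take $G$ on which the algorithm is correct and for which the pair exists; both are a.a.s.\ events. Both graphs have $n$ nodes, so by part~1 of Lemma~\ref{lem:cr-states} the node $v$ outputs the type of $G$ also in $H$. That is a non-\failed wrong label, contradicting soundness.

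The paper instead uses a yes/no recognition problem for $\cC_n=\cV_n\setminus\cD_n$, where $\cD_n$ is a minimum dominating set in the graph on $\cV_n$ joining graphs with a common $S_3$-color. This guarantees that every $G\in\cC_n$ shares a $C_3$-color with a no-instance while $\cC_n$ retains at least half of $\cV_n$. It costs an extra combinatorial step, but it gives a separation witnessed by a decision problem.
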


\subsection{Related work}\label{s:related}

\paragraph{Deterministic anonymous algorithms.}
Foundational work in distributed computing has established that many symmetry-breaking problems, such as leader election and graph coloring, cannot be deterministically solved in anonymous networks in the port-numbering model \pn{}~\cite{Angluin80,yamashita1988computing}.
However, various optimization problems on graphs can be \emph{approximated} fast even in the \pn{} model~\cite{aastrand2010fast,polishchuk2009simple,suomela2010distributed}.

Hella et al.~\cite{HellaJKLLLSV15} studied what can be computed in models that are strictly weaker than the port-numbering model $\pn{}$. They showed there exists a  hierarchy of weak models of computing
\[
\sbmodel \subsetneq \mb = \vbmodel \subsetneq \svmodel = \mvmodel = \vvmodel \subsetneq \pn \subsetneq \local,
\]
where in  the model $\vbmodel$ nodes have incoming port-numbers but broadcast the same message to all neighbors, in $\svmodel$ nodes receive a set of messages but have outgoing port numbers (and can send distinct messages to each neighbor), and  $\vvmodel$ is a model, where nodes have both incoming and outgoing port-numbers (that can be inconsistent unlike in \pn{}).

Lempiäinen~\cite{lempiainen2016ability} showed that the cost of simulating $\mvmodel$ in $\svmodel$ is $\Theta(\Delta)$ communication rounds in the worst case. In contrast, our work shows that simulating the \local{} model asymptotically almost surely in $\sbmodel$ requires only \emph{one} additional communication round, and that the same holds for the congested models: broadcast \congest{} can be simulated with the cost of one additional round in~$\sbmodel\short$.

\paragraph{Randomization in anonymous networks.}
When nodes have access to a stream of random bits, any \local{} algorithm can be simulated with high probability in the \pn{} model as nodes can simply generate random unique identifiers.
Thus, in the randomized setting, work has focused on understanding how much randomness is needed~\cite{emek2014anonymous,seidel2015randomness,kowalski2025random}.
Emek et al.~\cite{emek2014anonymous} showed that randomized anonymous (Las Vegas) \pn algorithms for decision problems can be derandomized given a distance-2 coloring of the nodes and Seidel et al.~\cite{seidel2015randomness} investigated how many random bits are needed for the derandomization. Recently, Kowalski et al.~\cite{kowalski2025random} studied how many random bits are needed to solve leader
election in anonymous networks.
In addition to randomized \pn{} model, weaker (anonymous) models of computing have been studied, such as the stone-age model~\cite{emek2013stone} and the beeping model~\cite{cornejo2010deploying,giakkoupis2026self,vacus2025minimalist}.

\paragraph{Distributed algorithms on random graphs.}
Krzywdzi{\'n}ski and Rybarczyk~\cite{krzywdzinski2015distributed} studied
 \local{} algorithms on $G(n,p)$ for various local symmetry-breaking problems, such as maximal independent sets, colorings, matchings, and so on. Levy, Louchard, and Petit~\cite{levy2004distributed}, Chatterjee et al.~\cite{chatterjee2018fast} and Turau~\cite{Turau20} studied the problem of finding a Hamiltonian cycle in the \congest{} model. The state-of-the-art algorithm by Turau runs in $O(\log n)$ rounds for $p \ge (\log n)^{3/2} / \sqrt{n}$.
 Our results can be used to ``anonymize'' existing algorithms for the regime $n^{\varepsilon-1} \le p \le 1/2$.

Another line of research has studied graphs with good \emph{expansion} properties, which allow efficient routing of information using randomized~\cite{ghaffari2017distributed,ghaffari2018new} and deterministic~\cite{chang2024deterministic,chang2020deterministic} \congest{} algorithms. Since random graphs are expanders a.a.s., these results can also be used to obtain algorithms for random graphs as well.
However, the time complexity of these algorithms tends to grow fast with $n$, with running times of order $2^{\Omega(\sqrt{\log n})}$.
Recently, Maus and Ruff~\cite{maus2026distributed} studied fast randomized coloring hyperbolic random graphs in the \congest{} model.

In addition, various computationally-restricted (and often asynchronous) models  have also been investigated on random graphs; see e.g.~\cite{alistarh2025near,giakkoupis2023distributed,panagiotou2015randomized}.
Finally, we note that an orthogonal line of research on \emph{node-average complexity} studies the running time of distributed algorithms of an average node in worst-case inputs rather than on average random inputs~\cite{feuilloley2020long,balliu2022node}.

\paragraph{Distributed triangle detection, finding and enumeration.}
Detecting, finding, and listing triangles are fundamental problems in distributed computing~\cite{censor2021distributed}.
In  \local{}, these problems are solvable in constant time and a trivial \congest{} algorithm solves the problems in $O(n)$ time.

Dolev et al.~\cite{dolev2012tri} gave a deterministic algorithm that runs in $O(n^{1/3} / \log n)$ rounds in the congested clique model. In the same model, Censor-Hillel et al.~\cite{censor2019algebraic} gave triangle detection and counting algorithms that run in $O(n^\rho)$ rounds, where $\rho$ is the distributed matrix multiplication exponent.
Dolev et al.~\cite{dolev2012tri} also gave a sampling-based randomized algorithm that finds a triangle in $O(n^{1/3} / (t^{2/3} + 1))$ rounds with high probability when the input graph has $t$ triangles.
Recently, Censor-Hillel et al.~\cite{censor2024faster} gave a faster sampling algorithm  that runs in $\tilde{O}(n^{0.1567}/(t^{0.3926}+1))$ time.

Note that for $G(n,p)$, the number of triangles for $p = n^{\varepsilon-1}$ is of order $n^{3\varepsilon}$. So for sufficiently large $\varepsilon>0$, the above randomized protocols run in (near) constant-time. However, our new algorithm finds a triangle in constant expected time for any constant $\varepsilon>0$ and works in the anonymous $\sbmodel\short$ rather than the congested clique model.

Lower bounds for triangle detection and finding are notoriously challenging, as super-logarithmic lower bounds for triangle detection imply breakthroughs in circuit complexity~\cite{drucker2014power,eden2022sublinear}.
There has been substantial prior work on lower bounds for triangle detection in distributed models; see, e.g.,~\cite{abboud2020fooling,fischer2018possibilities}. 
In a recent breakthrough, Assadi and Sundaresan~\cite{AssadiS25} showed that $\Omega(\log \log n)$ rounds are needed for triangle \emph{detection} in the \congest{} model using a round elimination argument.

Izumi and Le Gall~\cite{izumi2017triangle} studied triangle finding
and triangle enumeration.
They gave the first sublinear time algorithms for the general \congest{} model: a $O((n \log n)^{2/3})$-time algorithm for finding triangles and a $O(n^{3/4} \log n)$-time algorithm for triangle enumeration.
They also gave a lower bound of $\Omega(n^{1/3}/\log n)$ rounds for triangle enumeration in \congest{} and congested clique. This lower bound uses an information theoretic argument on uniform random graphs.
Censor-Hillel et al.~\cite{CensorHillelLV24} gave a near-optimal \congest{} algorithm that runs in $n^{1/3 + o(1)}$ rounds.
The lower bound was subsequently matched up to polylogarithmic factors by the algorithm of Chang et al.~\cite{chang2021near}, which solves triangle enumeration in $\tilde{O}(n^{1/3})$ rounds using expander decompositions and expander~routing~\cite{chang2024deterministic,chang2020deterministic}.

\paragraph{Canonical labelings of graphs.}
An important role in isomorphism testing is played by \emph{color refinement} and related algorithms: they can be used to compute canonical
(i.e., isomorphism-invariant) labelings of graphs.
Interestingly, many of these canonization schemes admit distributed implementations, and can be used to compute unique identifiers on various  graph distributions.

\Cref{table:summary} summarizes prior work from the perspective of weak models of distributed computing.
Babai, Erd\H{o}s, and Selkow~\cite{BabaiES80} showed that a rather succinct version of color refinement a.a.s.{}
finds unique labels for nodes in a uniform random graph $G(n,1/2)$; see Section \ref{ss:CR} for details.
Bollob\'{a}s~\cite[Ch.~3.5]{Bollobas_book} proved that
their approach works a.a.s.{} for $G(n,p)$ whenever $p\gg n^{-1/5}\ln n$.
Czajka and Pandurangan~\cite{CzP} established that the color refinement algorithm computes a.a.s.{} canonical labeling of $G(n,p)$ in the sparse setting, when  $p\gg\frac{\ln^4 n}{n\ln\ln n}$, which was later improved to $p\gg\frac{\ln^2 n}{n}$ by Mossel and Ross~\cite{MosselR19}.
Recently, Gaudio, R\'{a}cz, and Sridhar \cite{GaudioRS25} suggested an algorithm that succeeds on $G(n,p)$ for $p\ge\frac{(1+\delta)\ln n}{n}$, for any constant $\delta>0$.

Unfortunately, when translating these labeling schemes to the distributed setting, they incur a high cost in communication complexity, as the resulting identifiers are large, i.e., use super-logarithmically many bits.
As most distributed algorithms rely on communicating identifiers as part of their messages, the long identifiers can inflict in the worst-case a \emph{polynomial slow-down} in the simulation of \congest{} algorithms.
This is because the maximum degree $\Delta$ of a random graph from $G(n,p)$ is a.a.s.{}
$\Delta=(1+o(1))\,np$ if $p\gg \ln n/n$.

However, not all canonical labeling schemes suffer from a high communication complexity in the distributed setting.
Verbitsky and Zhukovskii~\cite{VerbitskyZ23esa} gave a succinct labeling scheme that generates identifiers and uses messages of at most logarithmic length, but their results only apply in uniform random graphs.

In contrast to prior work, our new labeling scheme uses messages and generates identifiers with only logarithmic length, and works for a much larger range of $p$. Moreover, the scheme is time-optimal, as it uses only one communication round.
Our work also answers the question recently raised by Gaudio et al.~\cite{GaudioRS25}: ``can the assumptions on node identifiability be relaxed -- or even removed -- for local algorithms on random graphs''? Our work shows that this is possible even in far weaker models of distributed computing than the \local{} model, and even with short messages.

\subsection{Structure of the paper}

\Cref{s:ids} describes and analyzes our new one-round $\sbmodel\short$ algorithm that assigns small unique identifiers in random graphs.
\Cref{s:triangle} gives a new fast anonymous algorithm that finds a triangle in a random graph.
\Cref{s:collapse} establishes time hierarchy collapse for weak models of computing in the average case.
Finally, in \Cref{s:conclusions} we discuss some open problems left by our work.

\section{Generating unique identifiers with logarithmic communication}\label{s:ids}

In this section, we show that for a large range of $p$, if a distributed problem can be solved on $G(n,p)$ asymptotically almost surely in \bcong{}, then it can also be solved in $\sbmodel\short$ at the cost of one additional communication round. This may come as some surprise, since \sbmodel{} is an extremely weak model. Nevertheless, it turns out that one round in $\sbmodel\short$ suffices for anonymous processors in the $G(n,p)$ network to generate unique identifiers.

\congestpp*

The proof occupies the next two subsections; Subsection \ref{ss:first-application} then illustrates how the theorem can be used to anonymize \bcong{} algorithms.

\subsection{One-round $\sbmodel\short$ algorithm}

We write $N(v)$ for the set of neighbors of a node $v$.
Furthermore, let
$$
S(v)=\Set{\deg u}_{u\in N(v)}
$$
be the set of all distinct degrees of the neighbors of~$v$.

We begin the proof of Theorem \ref{thm:congest-pp} with the algorithm description.
The algorithm depends on an integer parameter $C>0$, whose value is fixed in advanced.

\newsavebox{\abox}
\savebox{\abox}{$\mathcal{A}_C$}

\begin{algorithm}\label{alg}
\SetAlgoRefName{\usebox{\abox}} 
\caption{Generation of unique identifiers}
\begin{itemize}
\item
Each node $v$ sends its degree $\deg v$ to all its neighbors.
\item
Having received the set $S(v)$, the node $v$ outputs the vector
$$s(v)=(s_0,\ldots,s_{C-1})$$
where $s_r$ is the sum of the numbers in $S(v)$ that are congruent to $r$ modulo~$C$.  
\end{itemize}
\vspace{-1em}
\end{algorithm}

To prove Theorem \ref{thm:congest-pp}, it suffices to show that
for any constant $\varepsilon>0$ there exists $C$ such that,
when the algorithm $\cA_C$ is run on $G=G(n,p)$ with $p$ satisfying $n^{\varepsilon-1}\le p\le1/2$,
then asymptotically almost surely every node computes a unique identifier.
This is ensured by the following lemma.

\begin{lemma}\label{lem:congest-pp}
 For any constant $\varepsilon>0$ there exists $C$ such that if $n^{\varepsilon-1}\le p\le1/2$,
 then a.a.s.{} $s(v)\ne s(v')$ for every two nodes $v\ne v'$ of~$G(n,p)$.
\end{lemma}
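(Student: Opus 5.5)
We prove the lemma by a union bound over all pairs $v\ne v'$ of nodes: it suffices to show that for a fixed pair, $\mathbb{P}(s(v)=s(v'))=o(n^{-2})$. The plan is to expose the random graph in two stages.

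\emph{Stage one: typical degree structure.} A.a.s.\ every degree lies in the window $I=[np-\sqrt{3np\ln n},\,np+\sqrt{3np\ln n}]$. By Chernoff bounds, each integer value in $I$ is attained by $O(\sqrt{n/p})\cdot p^{1/2}$-ish many nodes, more precisely $O(n/\sqrt{np})$. The set $S(v)$ of distinct neighbor degrees has size $\Theta(\min(np,\sqrt{np\ln n}))$, and many degree values appear in $N(v)\setminus N(v')$ but not in $N(v')$.

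\emph{Stage two: anti-concentration.} Fix $v,v'$ and condition on the graph with the edges at $v$ and $v'$ ``mostly'' revealed. We then identify a set $W$ of nodes $w$ that are neighbors of $v$ but not of $v'$, whose degree value $d=\deg w$ is shared by no other neighbor of $v$ or $v'$. Re-randomizing a single edge $wx$ between two such private nodes with $x\notin N(v)\cup N(v')$ shifts $\deg w$ by one. This switches residue class, moving $d$ from $s_{d\bmod C}$ to $s_{d+1\bmod C}$, and so changes the vector $s(v)$ while leaving $s(v')$ unchanged, provided the new value $d\pm1$ does not collide with another element of $S(v)$.

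Formally, I would use a switching argument, or equivalently condition on everything except a collection of $m$ independent edges $w_ix_i$ with disjoint endpoints, where the $w_i$ are private to $v$ and have isolated degree values. The event $s(v)=s(v')$ then requires the $m$ independent Bernoulli($p$) indicators to satisfy $C$ linear equations, one per residue. Each coordinate difference $s_r(v)-s_r(v')$ is an affine function of the indicators with coefficient $\pm1$ on the relevant ones. By Littlewood--Offord/Erd\H{o}s anti-concentration applied coordinatewise, the indicators are split into $C$ groups according to residue, and the groups contribute to distinct coordinates essentially independently. This gives a probability of $O((mp(1-p))^{-1/2})$ per coordinate, hence $O((mp)^{-C/2})$ overall.

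With $m=\Theta(np)$ available edges and $p\ge n^{\varepsilon-1}$, we get $mp\ge n^{2\varepsilon-1}$. That is not large when $\varepsilon<1/2$, so instead we take $m$ of order $n\cdot np$ edges, using all private $w$ and many $x$ each. A shift of $\deg w$ by $k$ changes its residue pattern, and sums of many independent indicators give anti-concentration $O((m' p)^{-1/2})$ with $m'p\ge n^{\varepsilon}$ per coordinate group. The bound then becomes $n^{-\varepsilon C/2+o(1)}$, and choosing $C>4/\varepsilon$ makes it $o(n^{-2})$.

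The main obstacle is stage two. Because $S(v)$ is a \emph{set}, collisions of degree values cause nonlinear effects, and changing $\deg w$ also changes degrees that matter for $v'$ if $w\in N(v')$ or if $x$ is adjacent to $v'$'s neighbors. I would first handle the nonlinearity by restricting to values in the tails of $I$, where each degree value is attained by few nodes and by at most one neighbor of $v$. Perturbing by a small shift then keeps the values distinct with high conditional probability, and the sum is exactly linear in the shifts on that good event. Failure of the good event would be bounded separately by $o(n^{-2})$ via Chernoff bounds.
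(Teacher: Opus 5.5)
Your reduction to $\mathbb{P}(s(v)=s(v'))=o(n^{-2})$ for a fixed pair matches the paper. So does re-randomizing edges from private neighbors of $v$ into the common non-neighborhood. But Stage two, which is the entire content of the lemma, does not work as proposed.

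The formalization is incorrect. Flipping the edge $wx$ moves $\deg w$ from $d$ to $d+1$. That subtracts $d$ from coordinate $d\bmod C$ and adds $d+1$ to coordinate $(d+1)\bmod C$. So the indicators do not enter with coefficients $\pm1$, and the residue groups do not feed into distinct coordinates independently.

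More seriously, your two regimes pull in opposite directions and you never reconcile them. Linearity needs values that are isolated in $S(v)$ and stay isolated after the perturbation.
\begin{itemize}
\item In the bulk of the degree window every value is attained by many neighbors, so $S(v)$ there is deterministic. Only the extreme tail helps, which gives $m=O(\sqrt{np})$ usable $w$'s, not $\Theta(np)$. Single-edge flips then give at best $(mp)^{-1/2}$. With $mp=O(n^{3\varepsilon/2-1})$ near $p=n^{\varepsilon-1}$, this bound is at least $1$ once $\varepsilon<2/3$.
\item If instead you re-randomize many edges at each $w$ to get spread $n^{\varepsilon/2}$, then $\deg w$ is essentially a fresh $\mathrm{Bin}(n',p)$ variable. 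Its residue is random, and its contribution $\deg w\cdot e_{\deg w\bmod C}$ is not affine in the indicators. The new value also typically lands in the dense part of the window and collides with values already in $S(v)$.
\end{itemize}
So the ``good event'' on which the sum is linear does not have conditional probability $1-o(n^{-2})$; it is not even likely, and no Chernoff bound repairs this.

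The missing idea is how the paper neutralizes the set nonlinearity: look only at the \emph{largest} private-neighbor value $\mathbf{b}_r$ in each residue class. Condition on the multiset $\bm{\cD}^\circ$ with these maxima removed. Then the sum vector $\mathbf{s}_X$ determines $\mathbf{b}$, because a maximum is automatically distinct from everything below it, and raising a maximum can never create a collision.
\begin{itemize}
\item Claim~\ref{cl:max-degrees-concentrate} puts each $\mathbf{b}_r$ in the tail window $J$ with multiplicity at most $C$, with failure probability $o(n^{-10})$.
\item On $J$, consecutive binomial point probabilities have ratio $1\pm O(n^{-\varepsilon/2}\sqrt{\ln n})$. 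So shifting each maximum upward by up to $n^{\varepsilon/10}$ costs only a $1+o(1)$ factor (Claim~\ref{cl:from_d_to_d'}).
\item This yields $\mathbb{P}(\mathbf{b}=b\mid\mathbf{a}=a)\le n^{-\varepsilon C/10}+\mathbb{P}(\neg\mathcal{E}\mid\mathbf{a}=a)$, and hence $o(n^{-10})$ for $C>100/\varepsilon$.
\end{itemize}
Your remark about restricting to the tails points in this direction. But without conditioning on everything except one extremal value per class, and without the local-flatness estimate, your Littlewood--Offord step does not give the required bound for small~$p$.
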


It follows that unique identifiers can be assigned to the nodes of $G(n,p)$ in $\sbmodel\short\aas$
just in one round. This holds for a wide range of $p$ approaching the connectivity threshold,
as $\varepsilon>0$ can be chosen arbitrarily small.
Note that this range cannot be made much closer to the connectivity threshold: as a consequence
of \cite[Theorem 1.1]{GaudioRS25}, if $p\le\frac{\ln^2n}{100\,n(\ln\ln n)^3}$, then no \mb algorithm
is able, in two rounds, to generate pairwise distinct node labels a.a.s.{} (even without length restriction).
This follows from \cite[Theorem 1.1]{GaudioRS25} by part 1 of Lemma~\ref{lem:cr-states} shown in \Cref{s:collapse}.

We prove \Cref{lem:congest-pp}, and hence Theorem \ref{thm:congest-pp}, in the next subsection.

\subsection{Proof of Lemma \ref{lem:congest-pp}}\label{ss:proof-pp}

Fix two nodes $u$ and $v$ and, according to the distribution $G(n,p)$, expose first all
edges emanating from $u$ and $v$. Let $X$ denote the set of nodes adjacent neither to $u$
nor to $v$, and let $U=N(u)\setminus N(v)$. Set $n'=|X|$ and $m=|U|$. By the Chernoff bound,
the event
$$
n'=(1+o(1))(1-(2p-p^2))\,n\text{ and }m=(1+o(1))(p-p^2)\,n
$$
holds with probability $1-o(n^{-2})$ and, hence, with probability $1-o(1)$
for every pair $u,v$. In what follows, we therefore can assume that this event holds deterministically.

Next, we expose all edges emanating from $N(v)$ and all edges inside $N(u)$.
Note that while the vector $s(v)$ is already determined, the vector $s(u)$ is not yet determined
as only the degrees of the common neighbors in the set $N(u)\cap N(v)$ are exposed.
Indeed, edges between $U$ and $X$ have not yet been exposed and are to be exposed
in the final step of generating $G(n,p)$.
This step determines a set $\mathbf{S}_X$ which will play a crucial role in our analysis.
To define $\mathbf{S}_X$, we first consider the set
$$
\mathbf{S}_X^0=\{|N_X(x)|\}_{x\in U},
$$
where $N_X(x)=N(x)\cap X$. The number $|N_X(x)|$ determines the degree of $x\in U$
in the completely exposed $G(n,p)$. If $\deg x = \deg y$ for some $y\in N(u)\cap N(v)$,
then the node $x$ does not actually affect the vector $s(u)$. We, therefore, remove
the numbers $|N_X(x)|$ for all such $x$ from $\mathbf{S}_X^0$, and denote the resulting
set by~$\mathbf{S}_X$.

Let $\mathbf{s}_X\in\mathbb{Z}^C$ be the vector of sums of elements
from $\mathbf{S}_X$ with the same residue modulo $C$. The event $s(u)=s(v)$ in $G(n,p)$ identifies a
deterministic vector of integers $s$ such that $\mathbf{s}_X=s$. We shall prove that, for each $s$, 
\begin{equation}
  \label{eq:sXs}
\mathbb{P}(\mathbf{s}_X=s)=o(n^{-2}).  
\end{equation}
This will yield the lemma by applying the union bound over all pairs of nodes~$(u,v)$.

Before proving Bound \refeq{sXs}, we make some preparations.
Without loss of generality, let $U=\{1,\ldots,m\}$. Consider $m$ random variables
\begin{equation}
  \label{eq:xi}
\xi_i=|N_X(i)|.  
\end{equation}
Note that $\xi_i\sim\mathrm{Bin}(n',p)$ and that $\xi_1,\ldots,\xi_m$ are independent.
The multiset
$$
\bm{\cD}=\Mset{\xi_1,\ldots,\xi_m}
$$
plays an important role in the sequel, and we need to introduce some notation for multisets of integers.

Let $\cD=\Mset{d_1,\ldots,d_m}$ be a multiset consisting of $m$, not necessarily distinct integers.
For $r=0,1,\ldots,C-1$, let
$$
\cD_r=\Mset{d\in\cD}{d\equiv r\pmod C}.
$$
Denote the size of $\cD_r$ by $m(r)$ and let $d^r_1\le d^r_2\le\dots\le d^r_{m(r)}$ be the list of all
elements of $\cD_r$ in the non-descending order.
Let $\mu(r)$ be the multiplicity of the largest element $d^r_{m(r)}$ in $\cD_r$, so we have
$$
d^r_1\le d^r_2\le\dots\le d^r_{m(r)-\mu(r)} < d^r_{m(r)-\mu(r)+1} = \cdots = d^r_{m(r)}.
$$
We will need to refer to the vectors in $\bZ^C$ consisting of the two largest elements of the multisets
$\cD_0,\cD_1,\ldots,\cD_{C-1}$. Specifically, we set
$$
a=(d^0_{m(0)-\mu(0)},d^1_{m(1)-\mu(1)},\ldots,d^{C-1}_{m(C-1)-\mu(C-1)})\text{ and }
b=(d^0_{m(0)},d^1_{m(1)},\ldots,d^{C-1}_{m(C-1)}).
$$
In particular, for $r=0,1,\ldots,C-1$,
$$
a_r=d^r_{m(r)-\mu(r)}\text{ and }b_r=d^r_{m(r)}.
$$
Finally, we write $\cD^\circ$ to denote the multiset obtained from $\cD$ by removing all occurrences of the
largest elements $d^0_{m(0)},d^1_{m(1)},\ldots,d^{C-1}_{m(C-1)}$.

We will apply this notation to the random multiset $\bm{\cD}$, writing the respective numbers and vectors in bold.

Define the interval of reals
$$
 J=\biggl[n'p+\sqrt{(2-\varepsilon)n'p(1-p)\ln m},\,n'p+\sqrt{C n'p(1-p)\ln m}\biggr].
$$

\begin{claim}
  Let $\mathcal{E}$ denote the event that, for every $r=0,1,\ldots,C-1$,
$$
 \mathbf{b}_r\in J\text{ and }{\bm\mu}(r)\leq C.
 $$
If $C>22/(\varepsilon(1-\varepsilon))$, then $\prob{\mathcal{E}}=1-o(n^{-10})$.
\label{cl:max-degrees-concentrate}
\end{claim}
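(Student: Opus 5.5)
The plan is to prove the three requirements separately for each residue $r$: the upper bound $\mathbf{b}_r\le\max J$, the lower bound $\mathbf{b}_r\ge\min J$, and the multiplicity bound ${\bm\mu}(r)\le C$. Each will fail with probability $o(n^{-10})$, and a union bound over the $C$ residues (a constant number) finishes the proof.

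\emph{Parameters.} We may use the assumed concentration of $n'$ and $m$, and the independence of the $\xi_i\sim\mathrm{Bin}(n',p)$ from \refeq{xi}. Write $\sigma^2=n'p(1-p)$. Since $n^{\varepsilon-1}\le p\le 1/2$, we have $n'=\Theta(n)$ and $m=(1+o(1))(p-p^2)n$. Hence $\sigma^2=\Theta(m)$, and in fact $\sigma^2=(1+o(1))\,m(1-p)^2/(1-p)\cdot\Theta(1)$. Also $m\ge n^{\varepsilon-o(1)}$, so $\ln m\ge(\varepsilon-o(1))\ln n$ and $\ln m=\Theta(\ln n)$. All deviations we consider are of order $t=\Theta(\sigma\sqrt{\ln m})$. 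This satisfies $t=o(\sigma^{4/3})$ because $\sigma\ge n^{\varepsilon/2-o(1)}$. In this moderate-deviation range I will use two standard facts, which I take as given (e.g.\ from \cite[Ch.~1]{Bollobas_book}):
\begin{itemize}
\item the Chernoff bound $\mathbb{P}(\xi\ge n'p+t)\le \exp\bigl(-(1-o(1))t^2/(2\sigma^2)\bigr)$;
\item the local de Moivre--Laplace estimate $\mathbb{P}(\xi=k)=(1+o(1))\frac{1}{\sigma\sqrt{2\pi}}\exp\bigl(-(k-n'p)^2/(2\sigma^2)\bigr)$, uniformly for $|k-n'p|=O(\sigma\sqrt{\ln m})$.
\end{itemize}

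\emph{Upper bound.} Take $t=\sqrt{C\sigma^2\ln m}$. A single $\xi_i$ exceeds $\max J$ with probability at most $m^{-C/2+o(1)}$. A union bound over the $m$ indices gives probability at most $m^{1-C/2+o(1)}\le n^{-\varepsilon(C/2-1)+o(1)}$. This is $o(n^{-10})$ because $C>22/\varepsilon$.

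\emph{Lower bound.} Fix $r$. Let $k_0=\lceil n'p+\sqrt{(2-\varepsilon)\sigma^2\ln m}\rceil$, and sum the local estimate over the $C$ consecutive integers $k_0,\dots,k_0+C-1$. Exactly one of them is $\equiv r\pmod C$, so
\[
q:=\mathbb{P}(\xi_i\ge k_0,\ \xi_i\equiv r)\ge \Omega(\sigma^{-1})\,m^{-(1-\varepsilon/2)-o(1)}=m^{-3/2+\varepsilon/2-o(1)}\cdot\sigma\cdot\sigma^{-1}\cdots
\]
More carefully, using $\sigma=\Theta(\sqrt m)$, this is $q\ge m^{-1+\varepsilon/2-o(1)}\cdot m^{-1/2}\cdot\Theta(\sqrt m)$ only if one sums over a window of width $\Theta(\sigma)$. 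So I will instead sum over all $k\in[k_0,k_0+\sigma]$ with $k\equiv r$. There are about $\sigma/C$ such $k$, and over this window the exponent changes only by a factor $1+o(1)$ in $m$-powers. This gives $q\ge m^{-(1-\varepsilon/2)-o(1)}$. By independence,
\[
\mathbb{P}(\mathbf{b}_r<\min J)\le (1-q)^m\le \exp\bigl(-m^{\varepsilon/2-o(1)}\bigr),
\]
which is superpolynomially small.

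\emph{Multiplicity.} On the event that the lower and upper bounds hold, ${\bm\mu}(r)>C$ means that some value $k\in J$ is taken by at least $C+1$ of the $\xi_i$. For $k\in J$ the local estimate gives $\mathbb{P}(\xi=k)\le\sigma^{-1}m^{-1+\varepsilon/2+o(1)}$. The number of integers in $J$ is $O(\sigma\sqrt{\ln m})$. A union bound over the choice of $k$ and over $(C+1)$-subsets of indices gives probability at most
\[
O(\sigma\sqrt{\ln m})\cdot m^{C+1}\bigl(\sigma^{-1}m^{-1+\varepsilon/2+o(1)}\bigr)^{C+1}
= m^{-\frac{C(1-\varepsilon)-\varepsilon}{2}+o(1)},
\]
where we used $\sigma=\Theta(\sqrt m)$. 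With $m\ge n^{\varepsilon-o(1)}$ this is $n^{-\varepsilon(C(1-\varepsilon)-\varepsilon)/2+o(1)}=o(n^{-10})$. This holds precisely because $C>22/(\varepsilon(1-\varepsilon))$, which leaves enough slack to absorb the $-\varepsilon$ and the $o(1)$ terms.

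\emph{Main obstacle.} I expect the hardest step to be justifying the point-probability estimates uniformly in this moderate-deviation range, both the lower bound used for $q$ and the upper bound used for the multiplicity, while $p$ may be as small as $n^{\varepsilon-1}$. For the lower bound I would use Stirling directly on $\binom{n'}{k}p^k(1-p)^{n'-k}$ and check that the cubic correction term $O(t^3/\sigma^4)$ is $o(1)$. For the upper bound on point masses, the same Stirling estimate suffices, and it can also be replaced by the crude bound $\mathbb{P}(\xi=k)\le \mathbb{P}(\xi\ge k)\cdot O(\sqrt{\ln m}/\sigma)$, obtained from the ratio of consecutive binomial terms.
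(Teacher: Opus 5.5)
Your proposal is correct and follows essentially the same route as the paper's proof. Both arguments get $\mathbf{b}_r\ge\min J$ from a local limit estimate summed over residue-$r$ integers near the lower end of $J$, with probability $1-\exp(-m^{\Theta(1)})$. Both get $\mathbf{b}_r\le\max J$ from a Gaussian-type tail bound plus a union bound over the $m$ indices, and both control ${\bm\mu}(r)$ by a union bound over values in $J$ and over $C$-element subsets of $U$ (you use $(C+1)$-element subsets), with the same exponent bookkeeping via $\sigma=\Theta(\sqrt m)$ and $m\ge n^{\varepsilon-o(1)}$. You should delete the abandoned first display in your lower-bound paragraph; the window-of-width-$\sigma$ argument that follows it is the correct one.
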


\begin{subproof}
  Fix $r\in\{0,1,\ldots,C-1\}$. Let $J(r)$ be the set of all integers in $J$ congruent $r$ modulo $C$,
  and note that $\left|J(r)\cap[0,n'p+\sqrt{2n'p(1-p)\ln m}]\right|=\Theta\left(\sqrt{n'p\ln m}\right)$.
  Consider $\xi\sim\mathrm{Bin}(n',p)$. For
  $d=n'p+\sqrt{2\gamma n'p(1-p)\ln m}$ in $J(r)$, by the local limit theorem we have
$$
 \mathbb{P}(\xi=d)=\frac{1+o(1)}{\sqrt{2\pi n'p(1-p)}}\,m^{-\gamma}.
$$
Therefore, the probability that there exists $x\in U$ and $d\in J(r)$ such that $|N_X(x)|=d$ equals
\begin{align*}
 1-\left(1-\sum_{d\in J(r)}\frac{1+o(1)}{\sqrt{2\pi n'p(1-p)}}\,m^{-\gamma}\right)^m&\geq
 1-\exp\left(-\sum_{d\in J(r)}\frac{1+o(1)}{\sqrt{2\pi n'p(1-p)}}\,m^{1-\gamma}\right)\\
 &=
 1-\exp\left(-m^{\Theta(1)}\right).
\end{align*}
In the same way, using de Moivre--Laplace integral limit theorem, we conclude that the probability
that there is no $x\in U$ with $|N_X(x)|\geq n'p+\sqrt{C n'p(1-p)\ln m}$ equals
\begin{align*}
 \left(1-\frac{1+o(1)}{\sqrt{2\pi}}\int_{\sqrt{C\ln m}}^{\infty}e^{-x^2/2}dx\right)^m &\geq
 \left(1-m^{-C/2}\right)^m\\
 &=\exp\left(-(1+o(1))m^{1-C/2}\right)=1-o(n^{-10}).
\end{align*}
Finally, for every  $d\in J(r)$, the probability that there exists at least $C$ nodes in $U$ that have
$|N_X(x)|=d$ is at most
\begin{align*}
 \left(m\frac{1+o(1)}{\sqrt{2\pi n'p(1-p)}}\,m^{-(1-\varepsilon/2)}\right)^C=\Theta\left(m^{-(1/2-\varepsilon/2)C}\right)=O\left(n^{-\varepsilon (1/2-\varepsilon/2)C}\right)=o(n^{-11}).
\end{align*}
The union bound over all $d\in J(r)$ and all $r\in\{0,1,\ldots,C-1\}$ completes the proof of the claim.
\end{subproof}

\begin{claim}\label{cl:from_d_to_d'}
  Let $a,b,b'\in\bZ^C$ be such that, for all $r=0,1,\ldots,C-1$,
  $$
b_r\in J \text{ and } a_r < b_r < b'_r\le b_r+n^{\varepsilon/10}.
  $$
  Then
  \begin{equation}\label{eq:from_d_to_d'}
\mathbb{P}( \mathbf{b}=b \mid \mathbf{a}=a )
\leq(1+o(1))\,\mathbb{P}(\mathbf{b}=b' \mid \mathbf{a}=a )
+\mathbb{P}(\neg\mathcal{E}\mid \mathbf{a}=a ).    
  \end{equation}
\end{claim}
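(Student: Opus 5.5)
The plan is to compare the two events $\{\mathbf b=b,\mathbf a=a\}$ and $\{\mathbf b=b',\mathbf a=a\}$ configuration by configuration. The comparison uses the independence of $\xi_1,\ldots,\xi_m$ and the local limit theorem for $\mathrm{Bin}(n',p)$. I assume throughout that $b'_r\equiv r\pmod C$ for every $r$; otherwise $\mathbb{P}(\mathbf b=b')=0$ and the claim would be false, so this must be implicit in the statement.

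\emph{Step 1: discard large multiplicities.} Split the event $\{\mathbf b=b,\mathbf a=a\}$ according to whether $\bm\mu(r)\le C$ for all $r$. The part where some $\bm\mu(r)>C$ lies inside $\neg\mathcal E$. After dividing by $\mathbb P(\mathbf a=a)$, it contributes at most the error term $\mathbb P(\neg\mathcal E\mid\mathbf a=a)$.

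\emph{Step 2: a bijection on configurations.} On the remaining part, decompose further according to the exact configuration. For each $r$, this consists of the set $I_r\subseteq U$ of indices $i$ with $\xi_i=b_r$, where $|I_r|=\mu_r\le C$, together with the requirement that every other index whose value is $\equiv r$ has value at most $a_r$, with equality attained at least once.

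By independence, the probability of such a configuration factors as
\[
\prod_r \mathbb P(\xi=b_r)^{\mu_r}\cdot Q,
\]
where $Q$ is the probability of the constraints on the indices outside $\bigcup_r I_r$. The key observation is that $Q$ does not involve $b$ at all. This is because $a_r<b_r$ and $a_r<b'_r$, so the same non-top constraints are valid for both events.

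Replacing $b_r$ by $b'_r$ therefore gives a bijection between configurations contributing to $\{\mathbf b=b,\mathbf a=a\}$ and configurations contributing to $\{\mathbf b=b',\mathbf a=a\}$, with all $\mu_r\le C$ preserved. Under this bijection the probabilities change by the factor
\[
\prod_r\Bigl(\frac{\mathbb P(\xi=b_r)}{\mathbb P(\xi=b'_r)}\Bigr)^{\mu_r}.
\]
Summing over configurations and dividing by $\mathbb P(\mathbf a=a)$ then gives \refeq{from_d_to_d'}, provided each ratio is $1+o(1)$ uniformly in the admissible $b,b'$. The total exponent is at most $C^2$, a constant, so uniform $1+o(1)$ ratios give a $1+o(1)$ product.

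\emph{Step 3: ratio estimate (the main technical point).} Write $d=b_r$, $d'=b'_r$, $\sigma^2=n'p(1-p)$ and $\delta=d'-d\le n^{\varepsilon/10}$. We have $d-n'p=O(\sqrt{\sigma^2\ln n})$ since $d\in J$. Also $d'-n'p$ is of the same order, because $\delta\ll\sigma$; here $\sigma^2\ge n^{\varepsilon}/3$ since $n'=(1+o(1))(1-p)^2n$ and $p\ge n^{\varepsilon-1}$.

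Both deviations are therefore $o(\sigma^{4/3})$, so the local limit theorem applies at both points. It gives
\[
\log\frac{\mathbb P(\xi=d)}{\mathbb P(\xi=d')}=\frac{(d'-n'p)^2-(d-n'p)^2}{2\sigma^2}+o(1)=O\Bigl(\frac{\delta\sqrt{\ln n}}{\sigma}+\frac{\delta^2}{\sigma^2}\Bigr)+o(1).
\]
The right-hand side is $O(n^{\varepsilon/10-\varepsilon/2}\sqrt{\ln n})+o(1)=o(1)$, uniformly in the admissible $d,d'$.

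The only care needed is to use a local limit theorem that is uniform over moderate deviations of order $\sqrt{\ln n}\,\sigma$. This is standard, for instance by direct Stirling estimates of the binomial ratio $\mathbb P(\xi=k+1)/\mathbb P(\xi=k)$ telescoped over the $\delta$ steps. That telescoping route even avoids the local limit theorem, since each step ratio is
\[
\frac{(n'-k)p}{(k+1)(1-p)}=1-O\Bigl(\frac{\sqrt{\ln n}}{\sigma}\Bigr),
\]
and the product over $\delta\le n^{\varepsilon/10}$ steps is $1+o(1)$.
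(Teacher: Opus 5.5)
Your proposal is correct and follows essentially the paper's argument: you split off the part of $\{\mathbf b=b\}$ where some multiplicity exceeds $C$ into $\neg\mathcal{E}$, compare the remaining configurations with $\{\mathbf b=b'\}$ through the factor $\prod_r\bigl(\mathbb{P}(\xi=b_r)/\mathbb{P}(\xi=b'_r)\bigr)^{\mu(r)}$, and control it by telescoping the one-step binomial ratio $\frac{(n'-k)p}{(k+1)(1-p)}=1\pm O(n^{-\varepsilon/2}\sqrt{\ln n})$ over at most $n^{\varepsilon/10}$ steps. Your explicit bijection, with the observation that the constraints on non-top indices do not depend on $b$, and your assumption $b'_r\equiv r\pmod C$ make precise what the paper leaves implicit.
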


\begin{subproof}
Let $\xi_1,\ldots,\xi_m$ be the random variables defined by \refeq{xi}.
For any permutation $\sigma\in S_m$ and any sequence $(d_1,\ldots,d_m)$, we clearly have
$$
\mathbb{P}(\xi_1=d_{1},\ldots,\xi_m=d_{m})
=\mathbb{P}(\xi_{\sigma(1)}=d_{1},\ldots,\xi_{\sigma(m)}=d_{m}).
$$

Fix an array of non-negative integers $d^r_1\le d^r_2\le\dots\le d^r_{m(r)}$,
where $r=0,1,\ldots,C-1$ and $\sum_{r=0}^{C-1} m(r)=m$. Moreover, using the notation introduced above
for the multiset $\cD=\Mset{d^r_i}_{0\le r<C,\,1\le i\le m(r)}$,
we suppose that $b_r\in J$ and the multiplicity of $b_r$ in $\cD$ equals $\mu(r)\leq C$ for all $r$.
Let $\cD'$ be obtained from $\cD$ by replacing some occurrences of $b_r$ with $b_r+1$. Then
$$
  \frac{\mathbb{P}\left(\bm{\cD}=\cD'\right)}{\mathbb{P}(\bm{\cD}=\cD)}
  =\prod_r\frac{{n'\choose b_r+1}^{\mu(r)}p^{\mu(r)}}{{n'\choose b_r}^{\mu(r)}(1-p)^{\mu(r)}}
  =\prod_r\left(\frac{(n'-b_r)\,p}{(b_r+1)(1-p)}\right)^{\mu(r)}=1\pm O\left(n^{-\varepsilon/2}\sqrt{\ln n}\right).
$$
It follows that
\begin{multline*}
 \mathbb{P}(\mathbf{b}=b' \mid \mathbf{a}=a)
 \geq \mathbb{P}(\mathbf{b}=b'\wedge\mathcal{E}\mid \mathbf{a}=a)
 \geq
  \left(1- O\left(n^{-\varepsilon/2}\sqrt{\ln n}\right)\right)^{n^{\varepsilon/10}}
  \cdot\mathbb{P}(\mathbf{b}=b\wedge\mathcal{E}\mid \mathbf{a}=a)\\
 \geq(1-o(1))\cdot\of{\mathbb{P}(\mathbf{b}=b\mid \mathbf{a}=a)
 -\mathbb{P}(\neg\mathcal{E}\mid\mathbf{a}=a)},
\end{multline*}
completing the proof of~\eqref{eq:from_d_to_d'}.
\end{subproof}

Summing up the inequalities \refeq{from_d_to_d'} over all $b'$, we readily conclude
that, for every two fixed vectors $a$ and $b$ with $a_r < b_r\in J$ for all $r$,
  \begin{equation}\label{eq:degrees_local}
\mathbb{P}( \mathbf{b}=b \mid \mathbf{a}=a )
\leq n^{-\varepsilon C/10} + \mathbb{P}(\neg\mathcal{E}\mid \mathbf{a}=a ).    
  \end{equation}

\newcommand{\smsets}{\mathfrak{D}}
  
It remains to show that this inequality implies Bound \refeq{sXs}.
Speaking of a multiset $\cD$ of size $m$ below, we mean a multiset $\cD=\{d_1,\ldots,d_m\}$
whose integer elements are indexed by $1,\ldots,m$. For such $\cD$, let $\cD'$ denote
the multiset obtained from $\cD$ by removing all elements $d_i$ such that $d_i+|N_{N(u)\cup N(v)}(i)|=|N(y)|$
for some $y\in N(u)\cap N(v)$.
For $s=(s_0,s_1,\ldots,s_{C-1})$, let $\smsets(s)$ denote the family of all multisets $\cD$ of size $m$
such that for every $r$, the sum of distinct elements in $\cD'_r$ is equal to $s_r$, $b_r\in J$, and $\mu(r)\leq C$.
Fix $C$ so that $C>\max\of{22/(\varepsilon(1-\varepsilon)),\,100/\varepsilon}$.
Using Claim~\ref{cl:max-degrees-concentrate} and inequality \refeq{degrees_local}, we infer
\begin{align*}
  \mathbb{P}(\mathbf{s}_X=s)
  &\le\sum_{\cD\in\smsets(s)}\mathbb{P}\left(\bm{\cD}=\cD\right)+\mathbb{P}(\neg\mathcal{E})\\
   &\leq\sum_{\cD\in\smsets(s)}\mathbb{P}\left(\mathbf{b}=b\right)+o(n^{-10})\\
  &=\sum_{\cD\in\smsets(s)}\mathbb{P}\left(\bm{\cD}^\circ=\cD^\circ\right)
  \cdot\mathbb{P}\left(\mathbf{b}=b\mid \bm{\cD}^\circ=\cD^\circ\right)+o(n^{-10})\\
  &=\sum_{\cD\in\smsets(s)}\mathbb{P}\left(\bm{\cD}^\circ=\cD^\circ \right)
 \cdot\mathbb{P}\left(\mathbf{b}=b\mid \mathbf{a}=a\right)+o(n^{-10})\\
 &\leq \sum_{\cD\in\smsets(s)}\mathbb{P}\left(\bm{\cD}^\circ=\cD^\circ \right)
   \cdot\left(n^{-\varepsilon C/10}+\mathbb{P}\left(\neg\mathcal{E}\mid \mathbf{a}=a\right)\right)+o(n^{-10})\\  
&=n^{-\varepsilon C/10}+\sum_{\cD\in\smsets(s)}\mathbb{P}\left(\bm{\cD}^\circ=\cD^\circ \right)
 \cdot\mathbb{P}\left(\neg\mathcal{E}\mid \mathbf{a}=a\right) + o(n^{-10})\\
&=n^{-\varepsilon C/10}+\sum_{\cD\in\smsets(s)}\mathbb{P}\left(\bm{\cD}^\circ=\cD^\circ \right)
 \cdot\mathbb{P}\left(\neg\mathcal{E}\mid \bm{\cD}^\circ=\cD^\circ\right) + o(n^{-10})\\  
 &\leq n^{-\varepsilon C/10}+\mathbb{P}(\neg\mathcal{E})+o(n^{-10})=o(n^{-10}).
\end{align*}

The proof of Lemma \ref{lem:congest-pp}, and therefore of Theorem \ref{thm:congest-pp}, is complete.

\subsection{A first application: Anonymization of Hamiltonian cycle finding}\label{ss:first-application}

\begin{theorem}[Turau \cite{Turau20}]
  There exists a \bcong algorithm $\cA_\mathrm{HC}$ which, for any $p\ge\log^{3/2}n/\sqrt n$,
  asymptotically almost surely finds a Hamiltonian cycle in $G(n,p)$ in $O(\log n)$ rounds.
\end{theorem}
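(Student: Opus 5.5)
This theorem is quoted from Turau~\cite{Turau20}, so the plan is to recall the structure of his algorithm and check that it fits the broadcast \congest{} model, not to reprove the probabilistic analysis from scratch. The approach is a distributed version of the classical P\'osa rotation--extension method on $G(n,p)$.

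\emph{Phase 1.} The nodes build a collection of vertex-disjoint paths covering almost all nodes. Each path endpoint repeatedly tries to extend its path to a neighbor that is not yet covered. Because $p\ge\log^{3/2}n/\sqrt n$, every node a.a.s.\ has $\Omega(\log^{3/2} n\sqrt n)$ neighbors, and these neighbors are spread uniformly over any large vertex set. This gives a steady supply of unused neighbors, so the number of paths or uncovered nodes shrinks geometrically and Phase~1 takes $O(\log n)$ rounds.

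\emph{Phase 2.} The paths are merged into a single Hamiltonian path, and the path is then closed into a cycle using P\'osa rotations. Two facts drive the analysis: the expansion of $G(n,p)$, and the fact that, after deferred exposure of edges, each rotation attempt succeeds with constant probability. Throughout, messages are node identifiers together with path positions and counters. Each of these takes $O(\log n)$ bits, and every step is a broadcast of one such message, which is exactly what \bcong{} allows.

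The main obstacle is to keep the number of rounds at $O(\log n)$ despite two difficulties. First, updating positions after a rotation requires reversing a path segment. Second, the path can have length $\Theta(n)$, so information about it cannot be propagated along the path directly. Turau handles this by routing through short detours: $G(n,p)$ has diameter $O(1)$ in this regime. He also applies the rotations in parallel batches, and the lower bound on $p$ is needed so that enough disjoint rotation candidates exist simultaneously.

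Since the statement is cited rather than proved, the only point to verify for our later use is that $\cA_{\mathrm{HC}}$ works in the broadcast variant of \congest{} with uniquely identified nodes. Once that holds, \Cref{thm:congest-pp} supplies the identifiers, and inclusion~\refeq{simcong} yields \Cref{cor:hcalgo}.
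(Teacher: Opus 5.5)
The paper gives no proof of this statement: it is imported from Turau~\cite{Turau20} and used as a black box, so your plan of citing it rather than reproving it is exactly what the paper does. Your rotation--extension sketch is an unverified reconstruction that carries no logical weight and should not be presented as a description of Turau's algorithm without checking his paper; moreover, the point that matters downstream (that his algorithm is a deterministic, broadcast-only procedure whose correctness does not depend on how identifiers are assigned) is asserted rather than verified, in both the paper and your proposal.
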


Theorem \ref{thm:congest-pp} implies that the same can be done in the anonymous setting as well.

\hcalgo*

Another application of Theorem \ref{thm:congest-pp} is
considered in the next section.

\section{Anonymous triangle finding}\label{s:triangle}

We now turn to the triangle finding problem.
Recall that triangle \emph{enumeration} can be solved in $\tilde{O}(n^{1/3})$ rounds in the \congest{} model~\cite{chang2024deterministic}. This bound is tight~\cite{izumi2017triangle}, and the lower bound already holds for $G(n,1/2)$.
On the other hand, for triangle \emph{finding}, the best known lower bound is $\Omega(\log\log n)$ rounds~\cite{AssadiS25}.
In particular, it remains unclear if a triangle can be found in $O(\diam(G))$ rounds in the worst-case.

We show that, for a wide range of $p$, triangles can be found asymptotically almost surely on $G(n,p)$ in diameter time, even in the anonymous setting.
In Subsection \ref{ss:triangle-sound}, we give a triangle finding algorithm for the \bcong{} model. The algorithm is also \emph{sound} in the sense discussed in \Cref{s:intro}. The algorithm can be made anonymous using \Cref{thm:congest-pp}. However, the resulting $\sbmodel\short$-algorithm is no longer sound. In Subsection \ref{ss:triangle-nonsound}, we prove that this is indeed necessary: there is no sound $\mb$-algorithm for finding~triangles.

\subsection{A sound \bcong algorithm for triangle finding}\label{ss:triangle-sound}

Formally, we consider the following problem:

\medskip

\noindent
\trifind\,---
Each node $x$ either outputs a label indicating that $x$ belongs to a triangle or halts silently.
If a label is output by some node, then it must be output by exactly three nodes.
Each such triple must form a triangle in the network graph $G$.
In the worst-case setting, at least one triangle must be found unless $G$ is triangle-free.

\medskip

A \bcong algorithm for asymptotically almost surely finding a triangle in $G(n,p)$
is presented in the proof of the following theorem.

\triangle*

Combining Part 1 of Theorem \ref{thm:triangle} with Theorem \ref{thm:congest-pp}, we obtain the following result.

\diametertriangle*

The rest of the section is devoted to the proof of Theorem~\ref{thm:triangle}.

The starting point of our approach can be easily explained by considering the particular case of $p=1/2$.
Assume that a node $u$ of $G(n,1/2)$ was elected as leader.
After receiving a message from $u$, all nodes in $N(u)$ become aware that they are leader's neighbors.
In the next round, every $x\in N(u)$ sends its identifier to all $y\in N(x)$.
If a node $y\in N(u)$ receives an identifier of some node $x$, it detects
a triangle $\{x,y,u\}$ and informs $u$ about this.
In this way, all triangles containing $u$ are detected.
The procedure is sound: if $u$ does not belong to any triangle, it
spreads the failure notifications throughout the network.

The algorithm finds a triangle asymptotically almost surely because every node $u$ in $G(n,1/2)$ is contained in a triangle
asymptotically almost surely.
This is no longer true for small $p$. Indeed, the expected number
of the triangles containing a fixed node of $G(n,p)$ is equal to ${n-1\choose2}\,p^3$
and, hence, tends to 0 if $p=o(n^{-2/3})$. For this reason, the general case of Theorem~\ref{thm:triangle}
is far more complicated than the case of the uniformly random graph.
We need a much finer approach, which is presented in the following lemma.

The \emph{eccentricity} of a node $v$ is defined as the maximum distance from $v$ to another node
and denoted by~$\ecc(v)$.

\begin{lemma}\label{lem:triangle}
  Let $\varepsilon\in(0,1)$ and suppose that $p(n) \ge n^{\varepsilon-1}$ for all $n$.
  Then there exists a positive constant $c=c(\varepsilon)$ such that
  $G(n,p)$ has the following property with probability at least $1-\exp(-n^c)$:
  for every node $u$ there exists a triangle $\{x,y,z\}$ and a value $d\in\{\ecc(u)-2,\ecc(u)-3\}$ such that
  \begin{enumerate}[label=(\roman*)]
  \item
    the distance between $x$ and $u$ is equal to $d$, and
  \item
    $x$ is the unique neighbor of $y$ at distance $d$ from~$u$.
  \end{enumerate}
\end{lemma}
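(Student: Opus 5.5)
We will prove the lemma through the BFS layer structure of $G(n,p)$ around a fixed node $u$, followed by a union bound over all $n$ choices of $u$. Write $L_i(u)$ for the set of nodes at distance exactly $i$ from $u$. For $p\ge n^{\varepsilon-1}$ the following standard facts hold with probability $1-\exp(-n^{c})$. The diameter is $D=\lceil 1/\varepsilon\rceil+O(1)$, or is determined by whether $(np)^{k}\gg n\ln n$. The layer sizes satisfy $|L_i|=(1+o(1))(np)^i$ for every $i$ below the last one or two levels. Every $u$ has $\ecc(u)\in\{D-1,D\}$ or some similarly bounded set. The last one or two layers contain all but $o(n)$ nodes, and the layer before them has size roughly $(np)^{k}$, where $k$ is the largest index with $(np)^k\le n/\mathrm{polylog}$. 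We will obtain these by exposing the BFS layer by layer and applying Chernoff bounds: given $L_0,\dots,L_i$, each unexplored node joins $L_{i+1}$ independently with probability $1-(1-p)^{|L_i|}$. The error probabilities are $\exp(-n^{\Omega(1)})$ because every relevant expectation is at least $np\ge n^{\varepsilon}$.

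Next we choose $d$. Let $e=\ecc(u)$. We want a layer $L_d$ with $d\in\{e-2,e-3\}$ that is \emph{moderately sized}: $|L_d|p$ should be small, e.g.\ $|L_d|\,p\le n^{-\delta}$, so that a node $y\in L_{d+1}$ typically has exactly one neighbour in $L_d$, while $|L_d|$ and $|L_{d+1}|$ are still large enough that triangles appear. The natural candidate is $d$ such that $L_{d+1}$ is the last layer that is not yet linear in $n$, that is, $(np)^{d+1}=o(n)$. We will check that one of $e-2$ and $e-3$ has this property. Roughly, layer $e$ or $e-1$ is the ``giant'' final layer, layer $e-1$ or $e-2$ may have size $\Theta(n)$ or just below, and we go one step further back. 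This is also why the statement allows two possible values of $d$.

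We then find the triangle by exposing the graph in stages. Fix $L_0,\dots,L_d$ (all edges from $L_{\le d-1}$ exposed) and condition on the typical layer sizes. For $x\in L_d$ and $y,z\notin L_{\le d}$, the edges $xy$, $xz$, $yz$ and the edges from $y$ to $L_d\setminus\{x\}$ are still unexposed. Count the triples $(x,y,z)$ with $xy, xz, yz\in E$ and $N(y)\cap L_d=\{x\}$ (then automatically $y\in L_{d+1}$). The expected count is about $|L_d|\,n^2p^3(1-p)^{|L_d|}$. This is large because $|L_d|\,p=o(1)$ while $|L_d|\,n^2p^3\ge (np)^{d}\,n^2p^3\to\infty$ polynomially; the latter holds once $(np)^{d+3}\gtrsim n^{1+\Omega(1)}$, which is exactly what $d\ge e-3$ with $e\approx$ the diameter gives. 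Concentration is needed to get failure probability $\exp(-n^c)$. For this we will restrict to a disjoint family of candidate triangles: greedily use disjoint $x$'s and disjoint pairs $y,z$, or apply Janson's inequality to the count of such structures. The condition ``$y$ has no other neighbour in $L_d$'' is a decreasing event, and Janson handles this mix via the FKG/Harris inequality, or we can treat it after exposing the edges. A union bound over $u$ then finishes the proof.

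The main obstacle is the choice of $d$ relative to $\ecc(u)$ when $np$ is near a power threshold $n^{1/k}$, e.g.\ $(np)^{k}=\Theta(n\ln n)$, where the layer sizes and the value of the eccentricity fluctuate. Here we must show that whichever of $e-2$ and $e-3$ we pick, we simultaneously get $|L_d|p\to0$ (uniqueness of $x$) and $|L_d|n^2p^3\ge n^{\Omega(1)}$. We will handle this by a case analysis on $k=\min\{i:(np)^i\ge n^{1-\varepsilon/3}\}$. We will show that $\ecc(u)\in\{k+1,k+2\}$ a.s.\ in the relevant sense, and take $d=k-1$, verifying that $k-1$ lies in $\{e-2,e-3\}$ in each case.
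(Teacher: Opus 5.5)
There is a genuine gap at exactly the step you single out as the main obstacle: your concrete choice of $d$ is off by one. Take $k=\min\{i:(np)^i\ge n^{1-\varepsilon/3}\}$. The layer $L_k$ has size of order at least $\min\{(np)^k,n\}\ge n^{1-\varepsilon/3}$. So a node outside $L_{\le k}$ avoids $L_k$ with probability at most $\exp(-\Omega(n^{1-\varepsilon/3}p))\le\exp(-\Omega(n^{2\varepsilon/3}))$, hence $\ecc(u)\le k+1$. Since $|L_{\le k-1}|\le(1+o(1))(np)^{k-1}<n$, also $\ecc(u)\ge k$. Thus $\ecc(u)\in\{k,k+1\}$, not $\{k+1,k+2\}$, and $d=k-1$ equals $\ecc(u)-1$ whenever $\ecc(u)=k$. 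Worse, $|L_{k-1}|p\approx(np)^k/n$ can be polynomially large, which kills condition (ii). Concretely, take $\varepsilon=0.3$ and $p=n^{-0.4}$. Then $k=2$, and every node has eccentricity $2$ (any two nodes have about $n^{0.2}$ common neighbours), so $d=1\notin\{0,-1\}$. Moreover $|L_1|p\approx n^{0.2}$, so w.h.p.\ no node of $L_2$ has a unique neighbour in $L_1$ at all.

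The repair is $d=k-2$, so that $L_{d+1}$ is the last layer of size below $n^{1-\varepsilon/3}$; this is what your second paragraph actually describes. Then $|L_d|p\lesssim n^{-\varepsilon/3}$, $|L_d|\,n^2p^3\approx(np)^{k+1}/n\ge n^{2\varepsilon/3}$, and $\ecc(u)\in\{d+2,d+3\}$.

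You must still treat $k=1$ (i.e.\ $p\ge n^{-\varepsilon/3}$) separately. There $d=0$ and $x=u$, and your criterion $|L_d|p\le n^{-\delta}$ is unattainable for constant $p$. It is also unnecessary: only the factor $(1-p)^{|L_d|-1}$ enters, so $|L_d|p=O(1)$ suffices.

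Your justification ``$d\ge \ecc(u)-3$ gives polynomially many triangles'' is also false on its own. If $(np)^{e}=\Theta(n\ln n)$ and $\ecc(u)=e$, then $d=e-3$ yields only $\Theta(\ln n)$ expected good triangles. So $d$ must be defined from layer sizes rather than from $\ecc(u)$.

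This is precisely what the paper does. It takes $d$ minimal with $|\mathcal S_d(u)|>\frac{1}{2p(np)}$ and shows in Claim~\ref{cl:layers} that then $|\mathcal S_d(u)|\le 1/p$ and $\ecc(u)-3\le d\le \ecc(u)-2$ (with $d=0$ automatically when $p\gg n^{-1/2}$). It then uses only $(1-p)^{|\mathcal S_d(u)|}\ge(1-p)^{1/p}=\Omega(1)$.

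Finally, for concentration, drop the Janson/Harris route: an increasing and a decreasing event are negatively, not positively, correlated. Your alternative does work and is essentially the paper's sequential exposure: first expose the edges from candidate $y$'s to $L_d$, then the remaining triangle edges, which are disjoint across candidates.
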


\begin{proof}
Let $\varepsilon\in(0,1)$, $p=p(n) \ge n^{\varepsilon-1}$, and $G_n\sim G(n,p)$. In what follows, we write $\mathcal{S}_r(u)$ to denote the set of nodes at distance exactly $r$ from a node $u$ in $G_n$. We will need the following technical assertion.
\begin{claim}
\label{cl:layers}
With probability at least $1-\exp(-n^{\Theta(1)})$, for every node $u$,
\begin{enumerate}[label=(C\arabic*)]
\item for every $r\geq 1$, if $|\mathcal{S}_r(u)|>\frac{1}{p}$, then $|\mathcal{S}_{r-1}(u)|>\frac{1}{2p(np)}$;\label{c1}
\item if $r=\ecc(u)-2$, then $|\mathcal{S}_r(u)|>\frac{1}{2p(np)}$;\label{c2}
\item if $0\leq r=\ecc(u)-\delta$ for $\delta\geq 3$, then $|\mathcal{S}_r(u)|<\frac{1}{2p(np)^{\delta-3}}$.\label{c3}
\end{enumerate}
\end{claim}

Let us prove the lemma based on Claim~\ref{cl:layers}. Let $N_{\le r}(v)$ denote the set of nodes at distance at most $r$ from the node $v$. By a ball of radius $r$ around $v$ we mean the subgraph of $G_n$ induced on $N_{\le r}(v)$.

Fix a node $u$.  Expose the balls of radius $r$ around $u$ one by one starting from $r=1$. Let $d\geq 0$ be the minimum integer such that $|\mathcal{S}_d(u)|>\frac{1}{2p(np)}$. Claim~\ref{cl:layers} implies that with probability at least $1-\exp(-n^{\Theta(1)})$
$$
\ecc(u)-3\le d\le\ecc(u)-2,
$$
where the first inequality follows from Condition \ref{c3} and the second from Condition \ref{c2}.
Furthermore, Condition \ref{c1} implies that $|\mathcal{S}_d(u)|\leq 1/p$ with probability at least $1-\exp(-n^{\Theta(1)})$. By the Chernoff bound, $|\mathcal{S}_{d+1}(u)|\leq 0.8n$ and, therefore, $|N_{\leq d+1}(u)|\leq 0.9n$ with probability $1-\exp(-\Omega(n))$. Suppose that we have exposed the ball of radius $d$ around $u$. We may assume that
$$
 \frac{1}{2p(np)}<|\mathcal{S}_d(u)|\leq \frac{1}{p}.
$$
It then suffices to prove that, with probability at least $1-\exp(-n^{\Theta(1)})$, the layer $\mathcal{S}_{d+1}(u)$ contains two adjacent nodes $y$ and $z$ such that $y$ has a unique neighbor $x$ in $\mathcal{S}_d(u)$ and $z$ is also adjacent to~$x$.

Let $X$ denote the complement of $N_{\leq d}(u)$. Let $x_1,\ldots,x_m$ be an arbitrary ordering of the nodes in $\mathcal{S}_d(u)$. For every $i=1,\ldots,m$, expose the neighborhood $N_i$ of $x_i$ in $X\setminus N(\{x_1,\ldots,x_{i-1}\})$ and split it into two subsets $N_i^1\cup N_i^2$ of the same size, uniformly at random. For every node $y\in N_i^1$, check whether it has other neighbors in $\mathcal{S}_d(u)$ and if not, then check whether it is adjacent to another node in $N_i^2$. Since at each step of this process, the set $X\setminus N(\{x_1,\ldots,x_{i-1}\})$ has size $\Theta(n)$, the Chernoff bound ensures that, with probability $1-\exp(-\Omega(np))$, each set $N_i$ has size $\Theta(np)$. Therefore, the node $y$ succeeds in fulfilling the above conditions with probability at least $(1-(1-p)^{\Theta(np)})(1-p)^{|\mathcal{S}_d(u)|}$. Then, the probability that all nodes fail in this process is bounded by
\begin{align*}
\biggl(1-(1-(1-p)^{\Theta(np)})  (1-p)^{|\mathcal{S}_d(u)|}\biggr)^{|\mathcal{S}_d(u)|\Theta(np)}
&\leq
\exp\left(-\Theta(|\mathcal{S}_d(u)|\,np)\,(1-e^{-\Theta(np^2)})\,(1-p)^{1/p}\right)\\
&=
\exp\left(-\Theta(|\mathcal{S}_d(u)|\,np)\,(1-e^{-\Theta(np^2)})\right).
\end{align*}
If $p\geq n^{-1/2}$, then we use the bounds $|\mathcal{S}_d(u)|\geq 1$ and $1-e^{-\Theta(np^2)}\geq 1/2$, getting the probability bound $\exp\left(-\Theta(np)\right)$. If $p\leq n^{-1/2}$, then we use the bounds $|\mathcal{S}_d(u)|\geq \frac{1}{2p(np)}$ and $1-e^{-\Theta(np^2)}=\Theta(np^2)$. This yields the bound $\exp\left(-\Theta((1/p)np^2)\right)=\exp\left(-\Theta(np)\right)$, completing the proof of the lemma.

\medskip

\begin{subproof}
Note first that Condition \ref{c1} is immediate when $p(np)=\omega(1)$, i.e., when $p\gg n^{-1/2}$. Otherwise, fix $u$ and a non-negative integer $r\leq n$, expose the ball of radius $r-1$ around $u$, and assume that $s:=|\mathcal{S}_{r-1}(u)|\leq\frac{1}{2p(np)}$. Then $|\mathcal{S}_r(u)|$ is stochastically dominated by $\mathrm{Bin}(n,1-(1-p)^{s})$. The last random variable has expectation $(1-(1-p)^s)\,n\leq\frac{1}{2p}+O(1)$. The Chernoff bound implies that
$$
 \mathbb{P}(|\mathcal{S}_r(u)|>1/p)\leq\mathbb{P}(\mathrm{Bin}(n,1-(1-p)^{s})>1/p)=\exp(-\Omega(1/p)),
$$
and Condition \ref{c1} follows by applying the union bound over $u$ and~$r$.

\smallskip

Consider the event that some set of size $k\leq\frac{1}{2p(np)}$ has neighborhood of size less than $npk/2$.
Its probability is bounded by
$$
 \sum_{k\leq\frac{1}{2p(np)}}{n\choose k}\,\mathbb{P}\left(\mathrm{Bin}(n(1-o(1)),1-(1-p)^k)<\frac{npk}{2}\right)\leq
 \sum_{k\leq\frac{1}{2p(np)}}(ne/k)^k e^{-\Omega(npk)}
 =e^{-\Omega(np)}.
$$
Therefore, we may assume that for every node $u$ and every $r$ such that $|\mathcal{S}_r(u)|\leq\frac{1}{2p(np)}$, we have $|N_{\leq r}(u)|\leq \frac{1+o(1)}{2p(np)}$.

By the union bound, the probability that $G_n$ contains a node adjacent to all other nodes is at most $n p^{n-1}=\exp(-\Omega(n))$. Therefore, we may assume that every node $u$ has eccentricity $\ecc(u)\geq 2$. Since $\ecc(u)-2\geq 0$, also Condition \ref{c2} is immediate when $p\gg n^{-1/2}$. Assume, therefore, that $p=O(n^{-1/2})$. Fix a node $u$ and a non-negative integer $r\leq n$, expose the ball of radius $r$ around $u$, and assume that $s:=|\mathcal{S}_r(u)|\leq\frac{1}{2p(np)}$. It suffices to prove that $|\mathcal{S}_{r+1}(u)\cup \mathcal{S}_{r+2}(u)|<0.9n$ with probability at least $1-\exp(-n^{\Theta(1)})$. Note that $|\mathcal{S}_{r+1}(u)|>1/p$ with probability at most
$$
 \mathbb{P}(\mathrm{Bin}(n,1-(1-p)^s)>1/p)=\exp(-\Omega(1/p)).
$$
So, we expose $\mathcal{S}_{r+1}(u)$ and assume that $s':=|\mathcal{S}_{r+1}(u)|\leq1/p$. Now, the probability that $|\mathcal{S}_{r+2}(u)|>0.9n$ is at most
$$
 \mathbb{P}(\mathrm{Bin}(n,1-(1-p)^{s'})>0.9n)\leq\mathbb{P}(\mathrm{Bin}(n,1-e^{-1+o(1)})>0.9n)
 =\exp(-\Omega(n)).
$$
Condition \ref{c2} follows by applying the union bound over~$u$.

\smallskip

It remains to prove Condition \ref{c3}. Assume first that $p\gg n^{-1/2}$. Then, with probability $1-\exp(-\Omega(np))$, we have $|\mathcal{S}_1(u)|>\frac{np}{2}\gg\sqrt{n}$ for all $u$. Using the Chernoff bound twice, we conclude that the set $\mathcal{S}_2(u)$ is dominating, i.e., $\ecc(u)\leq 3$, with probability $1-\exp(-n)$. Therefore, it is enough to consider the case $\delta=3$ and $r=0$. If $p=o(1)$, then the assertion is immediate. Otherwise, $\ecc(u)=2$ with probability $1-\exp(-n)$, and therefore there is nothing to prove since in the case under consideration we have $\ecc(u)=3$.

It remains to consider $p=O(n^{-1/2})$. As we already proved above, every sufficiently small set has a neighborhood of size at least $(np)/2$ times larger. Therefore, there exists an integer $r'$ such that
$$
 r'\leq\frac{\ln(1/(4p))}{\ln(np/2)}<\frac{1}{\varepsilon}
$$ 
and with probability $1-\exp(-n^{\Theta(1)})$, we have $|\mathcal{S}_{r'}(u)|>\frac{1}{4p}$ for all $u$. Using the Chernoff bound twice, we conclude that the set $S_{r'+1}(u)$ is dominating, i.e., $\ecc(u)\leq r'+2$, with probability $1-\exp(-1/p)$. Therefore, it suffices to prove \ref{c3} for $r=\ecc(u)-\delta$ for $\delta<\frac{1}{\varepsilon}+2$ (in fact, we will only use the fact that $\delta$ is bounded by a constant).

Fix an integer $3\leq \delta<\frac{1}{\varepsilon}+2$, a node $u$, and a non-negative integer $r\leq n$. Expose the ball of radius $r$ around $u$ and assume that $s:=|\mathcal{S}_{r}(u)|\geq\frac{1}{2p(np)^{\delta-3}}$. It is enough to show that $\ecc(u)\leq r+\delta-1$ with probability at least $1-\exp(-n^{\Theta(1)})$. In the same way as above, using the Chernoff bound $\delta-3$ times, we get that $|\mathcal{S}_{r+\delta-3}(u)|\geq1/(2^{\delta-2}p)$  with probability $1-\exp(-n^{\Theta(1)})$. We then expose one more layer and see that $|\mathcal{S}_{r+\delta-2}(u)|\geq n/2^{\delta}$ with probability $1-\exp(-\Omega(n))$. Finally, we conclude that $|\mathcal{S}_{r+\delta-1}(u)|=n$ with probability $1-\exp(-\Omega(n))$, completing the proof of Condition~\ref{c3}.
\end{subproof}

The proof of Lemma \ref{lem:triangle} is complete.
\end{proof}

In the sequel, the unique identifier of a node $v$ is denoted by~$\id(v)$.

\begin{proof}[Proof of Theorem \ref{thm:triangle}]\mbox{}\\  
  \textit{1.}
  In order to design a sound \bcong algorithm $\cA$ asymptotically almost surely solving \trifind on $G(n,p)$,
  we use the well-known fact that the \leader problem can be solved in $O(\diam(G))$ rounds so that
  \begin{itemize}
  \item
    every node knows its distance to the leader $u$, and
  \item
    the leader $u$ knows its eccentricity $\ecc(u)$---which can be communicated to the other nodes
    and, therefore, we assume that all nodes know~$\ecc(u)$.
  \end{itemize}
  A \congest algorithm accomplishing this is described in \cite[Ch.~5.5--5.6]{HirvonenS},
  and it requires minimal effort to adapt this algorithm to the broadcast communication model. 

  After fulfilling this task, the algorithm $\cA$ sets $d=\ecc(u)-3$, and the next rounds are performed as follows.
  Every node at distance $d$ from $u$ sends its identifier to all of its neighbors.
  Every node $y$ that receives exactly one identifier $\id(x)$ from some of its neighbors,
  sends the pair $(\id x,\id y)$ to all of its neighbors. If a node $z$ receives a pair $(\id x,\id y)$
  in this round and the identifier $\id(x)$ in the preceding round, then it detects the triangle $\{x,y,z\}$.
  In the next two rounds, the same procedure is repeated for $d=\ecc(u)-2$. If a node $z$ detects a triangle,
  it notifies its mates $x$ and $y$ by sending the triple $(\id x,\id y,\id z)$ to all neighbors.
  The nodes receiving such a message become aware that a triangle is found and inform also their neighbors about this
  by sending a message \success. The notification is propagated until it reaches the leader $u$.
  If $u$ is not notified within $\ecc(u)$ rounds, it broadcasts an \alarm message. This message propagates
  through the network, leading to global termination in state \failed.
Lemma \ref{lem:triangle} ensures that the failure probability is bounded by~$\exp(-n^c)$.
  
   \textit{2.}
   The algorithm $\cA$ can be combined with parallel execution of the straightforward \bcong algorithm $\cL$
   listing all triangles within $O(n)$ rounds. Under the conditions imposed on $p$, the diameter
   of $G(n,p)$ is bounded by $O(1/\varepsilon)$ with probability at least $1-1/n$; this follows, e.g., from \cite[part~(i) of Theorem 3.27]{Janson_book}.
   Therefore, the running time of $\cA$ is bounded by $O(1/\varepsilon)$ at least with this probability.
   Since the failure probability of $\cA$ is exponentially small, this implies that
   the combined algorithm solves \trifind in expected time $T_\cA(n)+(\frac1n+\exp(-n^c))\,T_\cL(n)=O(1/\varepsilon)$.
\end{proof}

Finally, one might wonder if the following  natural algorithm could also find a triangle in $G(n,p)$: each node picks two neighbors (say, with the smallest identifiers) and checks if they have a common neighbor.  Unfortunately, this idea fails already for constant $p$, as a deterministic \congest{} algorithm has to work with an adversarial identifier assignment; we sketch the reason why in Appendix~\ref{apx:naive}.

\subsection{Soundness vs.\ Anonymity}\label{ss:triangle-nonsound}

The limitations of anonymous networks are well known. The unsolvability of many natural problems can
typically be established using techniques based on covering maps (see, e.g., \cite[Ch.~7]{HirvonenS} and \cite{Angluin80}).
In particular, these results rule out the solvability by \mb{}-algorithms even with arbitrarily
large expected time. By contrast, solving a problem in the \mb model asymptotically almost surely is a more
feasible goal. It turns out, however, that achieving soundness for such \mb{}-algorithms is often inherently difficult
if at all possible.

A homomorphism from a graph $F$ onto a graph $G$ is called a \emph{covering map} if it is a bijection
in the neighborhood of each node of $F$.

The following highly useful observation due to Angluin~\cite{Angluin80} is proved by a simple induction on~$r$.

\begin{lemma}\label{lem:cover}
  Let $x\in V(G)$ and $y\in V(H)$, and suppose that $\alpha(x)=y$ for a covering map $\alpha$
  from $G$ onto $H$. For every \mb{}-algorithm $\cA$ and for every integer $r\ge1$, if $\cA$ is run on $G$ and $H$,
  then the processors $x$ and $y$ are in the same state in round $r$ and, in particular, send identical messages
  to their neighbors in that round.
\end{lemma}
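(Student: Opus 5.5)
The plan is induction on the round number $r$, proving the stronger statement that for \emph{every} pair $x\in V(G)$, $y\in V(H)$ with $\alpha(x)=y$, the processors $x$ and $y$ are in the same state after each round $r\ge 0$. The claim about messages then follows at once, because in the \mb{} model the message a node broadcasts in round $r$ is a function of its state at the end of round $r-1$ (or of its initial state when $r=1$).

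\emph{Base case.} The initial state of a node in the \mb{} model depends only on its local input, namely its degree. Since $\alpha$ restricted to $N(x)$ is a bijection onto $N(y)$, we get $\deg x=\deg y$. Hence $x$ and $y$ start in the same state and send the same first message.

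\emph{Inductive step.} Assume every pair of corresponding nodes has the same state after round $r-1$. Then every node $x'$ sends in round $r$ the same message as $\alpha(x')$. In round $r$, node $x$ receives the multiset $\Mset{\text{msg}(x')}_{x'\in N(x)}$ and node $y$ receives $\Mset{\text{msg}(y')}_{y'\in N(y)}$. The map $\alpha$ is a bijection from $N(x)$ onto $N(y)$ that preserves messages, so the two multisets coincide, multiplicities included. Both nodes run the same deterministic transition function on the same previous state and the same received multiset, so they end round $r$ in the same state.

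No step is a real obstacle. The one point that needs care is where the covering property enters: bijectivity on neighborhoods is exactly what makes the received multisets equal, not merely the underlying sets. This is why the lemma holds for \mb{} and hence also for the weaker \sbmodel{}. Halting behaves the same way: if $x$ enters a stopping state in some round, so does $y$, and their outputs agree.
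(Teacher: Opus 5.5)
Your proof is correct and follows the paper's approach: the paper attributes the lemma to Angluin and says it follows by a simple induction on $r$, which is exactly what you carry out. You also handle the two points that induction needs, namely strengthening the hypothesis to all pairs $(x',\alpha(x'))$ and using bijectivity on neighborhoods to get equal received multisets.
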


\begin{example}\label{ex:triangle-free}
  The 6-node graphs $G$ and $H$ shown in \Cref{fig:cover}
  admit covering maps from the same graph $F$.
  It follows from \Cref{lem:cover} that the property of a graph being triangle-free
  cannot be detected by any algorithm in the model~\mb.
\end{example}

\tikzstyle{node}=[circle,draw,fill=Black,inner sep=2pt]
\tikzstyle{edge} = [draw,thick,-]

\begin{figure}
\centering
\begin{tikzpicture}%
  \begin{scope}
    \foreach \pos/\name in {{(1,.5)/a}, {(0,1.5)/b}, {(-1,.5)/c},
                            {(-1,-.5)/d}, {(0,-1.5)/e}, {(1,-.5)/f}}
        \node[node] (\name) at \pos {};
    \foreach \source / \dest in {a/b, b/c, c/d, d/e, e/f, f/a, b/e}
       \path[edge] (\source) -- (\dest);
    \node[draw=none,fill=none] at (-1,-1.5) {$G$};
  \end{scope}
  \begin{scope}[xshift=40mm]
    \foreach \pos/\name in {{(0,.5)/a}, {(1,1.5)/b}, {(-1,1.5)/c},
                            {(0,-.5)/d}, {(-1,-1.5)/e}, {(1,-1.5)/f}}
        \node[node] (\name) at \pos {};
    \foreach \source / \dest in {a/b, b/c, c/a, a/d, d/e, e/f, f/d}
       \path[edge] (\source) -- (\dest);
    \node[draw=none,fill=none] at (-1.7,-1.5) {$H$};   
      \end{scope}
  \begin{scope}[xshift=90mm]
    \foreach \pos/\name in {{(.5,.5)/a}, {(0,1.5)/b}, {(-.5,.5)/c},
      {(-.5,-.5)/d}, {(0,-1.5)/e}, {(.5,-.5)/f},
      {(2.5,.5)/aa}, {(2,1.5)/bb}, {(1.5,.5)/cc},
                            {(1.5,-.5)/dd}, {(2,-1.5)/ee}, {(2.5,-.5)/ff}}
        \node[node] (\name) at \pos {};
        \foreach \source / \dest in {a/b, b/c, c/d, d/e, e/f, f/a,
        aa/bb, bb/cc, cc/dd, dd/ee, ee/ff, ff/aa, b/bb, e/ee}
      \path[edge] (\source) -- (\dest);
    \node[draw=none,fill=none] at (-1,-1.5) {$F$};      
  \end{scope}      
\end{tikzpicture}
\caption{Graphs $G$ and $H$ with a common cover graph $F$.}
\label{fig:cover}
\end{figure}
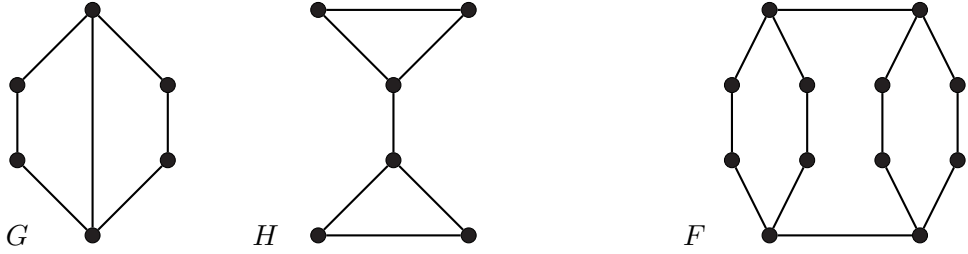

We now show that even a simpler, decision version of \trifind does not admit any sound \mb{}-algorithm.
Specifically, we define the \tridetect problem as follows: If an input graph $G$ contains a triangle
and only in this case, at least one processor outputs~\yes.

\begin{theorem}\label{thm:nosoundtridetect}
  \tridetect cannot be solved asymptotically almost surely by any sound \mb{}-algorithm.
\end{theorem}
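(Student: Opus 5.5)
We argue by contradiction. Suppose $\cA$ is a sound \mb{}-algorithm that solves \tridetect a.a.s.\ on $G(n,1/2)$; since we only need a counterexample, the default distribution suffices. The idea is to combine the covering-map argument of Lemma~\ref{lem:cover} with the soundness requirement. Soundness constrains the output on \emph{every} input graph, not only on typical ones: whenever $\cA(G)$ is not a solution, all nodes must output \failed. We will exhibit a typical input on which $\cA$ must answer \yes, and transfer that behaviour through a common cover to a triangle-free graph. There the answer \yes is wrong, yet the output is not the constant labeling \failed, which violates soundness.

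First we fix a typical graph. A.a.s.\ $G\sim G(n,1/2)$ contains a triangle, and by assumption $\cA$ solves the problem on it, while the bad event has probability $o(1)$. Hence for all large $n$ there is an $n$-node graph $G$ that contains a triangle, on which $\cA$ halts within some finite number $r$ of rounds and some node $v$ outputs \yes. We may also take $G$ connected.

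Next we build a covering. Let $F$ be the universal cover of $G$, i.e.\ its infinite tree of non-backtracking walks; this is impractical because inputs are finite graphs. Instead we take a finite cover $\tilde G$ of $G$ with large girth, more than $2r+3$ and in particular more than $3$. Such covers exist by the standard construction: take a random or explicit lift of $G$ with sufficiently many sheets, or use the fact that every finite graph has finite covers of arbitrarily large girth, since the fundamental group is free and hence residually finite. Let $\alpha\colon\tilde G\to G$ be the covering map, and let $\tilde v$ be a lift of $v$. By Lemma~\ref{lem:cover}, each node of $\tilde G$ is in the same state as its image in every round. Since every node of $G$ halts by round $r$, every node of $\tilde G$ halts by round $r$ with the same output as its image. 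In particular $\tilde v$ outputs \yes, even though $\tilde G$ is triangle-free. So $\cA(\tilde G)\notin$ \tridetect$(\tilde G)$, and the output is not the constant \failed because $\tilde v$ outputs \yes. This contradicts soundness, since condition (2) must hold for every graph $\tilde G$, whatever its size.

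The main obstacle is the model subtlety of uniformity. $\tilde G$ has a different number of nodes from $G$, so the argument needs $\cA$ to be uniform, as all algorithms are by default in this paper. Since node degrees are preserved under covering maps, the local inputs in \mb{} agree, and Lemma~\ref{lem:cover} applies as stated. This also explains why the later non-uniform collapse result is consistent with this theorem. The remaining detail is to make the existence of a triangle-free (girth at least $4$) finite cover explicit. A simple choice is the bipartite double cover $G\times K_2$, which has no odd cycles and hence no triangles, so no large-girth machinery is needed.
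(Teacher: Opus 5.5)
Your proof is correct and matches the paper's argument: fix a typical $G$ containing a triangle on which some node outputs \yes, pass to the triangle-free bipartite double cover, and use Lemma~\ref{lem:cover} to obtain a non-\failed wrong output, which contradicts soundness. The large-girth detour is unnecessary, since the bipartite double cover already suffices. Your remark that the argument uses uniformity (the cover has $2n$ nodes) is a point the paper leaves implicit, and it correctly explains why Theorem~\ref{thm:all-nsa} is consistent with this result.
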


\begin{proof}
Assume for the sake of contradiction that
$\cA$ is a sound \mb{}-algorithm for \tridetect. Sample $G \sim G(n,1/2)$.
  A.a.s., $G$ contains triangles and following $\cA$, a.a.s.{}
  at least one of the nodes will terminate in the accepting state \yes.
  Fix $G$ for which this holds.
  Consider the \emph{bipartite double cover} of $G$ and denote it by $G'$.
  For each node $x$ in $G$, the graph $G'$ contains two clones $x'$ and $x''$ of $x$.
  If $x$ and $y$ are adjacent in $G$, then $x'$ and $y''$, as well as $y'$ and $x''$, are adjacent in $G'$.
  Define a map $\alpha\map{V(G')}{V(G)}$ by setting $\alpha(x')=\alpha(x'')=x$
  for all $x\in V(G)$. Note that $\alpha$ is a covering map from $G'$ onto $G$.
  By \Cref{lem:cover}, the algorithm $\cA$ terminates on $G'$ with at least one node
  in state \yes. It remains to note that the bipartite double cover $G'$ is triangle-free.
  This is a contradiction, because not all nodes terminate in state \failed,
  and the output is not a correct solution.  
\end{proof}

Theorem \ref{thm:nosoundtridetect} shows that, although quite desirable, soundness turns out to be
too restrictive for $\mb$-algorithms. In particular, Corollary \ref{cor:diametertriangle}
cannot be improved to achieve soundness while remaining in the anonymous setting.
This negative result is fairly typical. For example, basically the same argument
applies to prove that also \leader does not admit a sound \mb{}-algorithm.
Putting a little more effort, we can prove that the existence of a Hamiltonian
cycle in a graph also cannot be recognized by any sound anonymous algorithm
and, therefore, Corollary \ref{cor:hcalgo} also cannot be strengthened to ensure soundness. 
Specifically, we define the \hamilton problem as follows: If an input graph $G$ contains a Hamiltonian cycle
and only in this case, at least one processor has to output~\yes.

\begin{theorem}\label{thm:nosoundhamilton}
  \hamilton cannot be solved asymptotically almost surely by any sound \mb{}-algorithm.
\end{theorem}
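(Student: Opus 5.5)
Suppose, for contradiction, that $\cA$ is a sound \mb{}-algorithm for \hamilton. We follow the argument of Theorem~\ref{thm:nosoundtridetect}, but now have to build a covering graph that is \emph{not} Hamiltonian while its base is. Take $G\sim G(n,1/2)$. A.a.s.\ $G$ is Hamiltonian and $\cA$ makes at least one node output \yes; fix such a $G$. By Lemma~\ref{lem:cover}, on any graph $F$ that covers $G$, the preimage of that node is in the same final state and also outputs \yes. So $\cA$'s output on $F$ is not the all-\failed labeling. If $F$ is non-Hamiltonian, the output on $F$ is also incorrect, and this contradicts soundness, since soundness is a property required on every input. Note that the uniform setting matters: $F$ has more nodes than $G$, and nodes cannot tell the difference.

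The whole task is therefore to construct a covering $F$ of $G$ that has no Hamiltonian cycle. The bipartite double cover does not work, because a Hamiltonian cycle of $G$ with $n$ odd lifts to a Hamiltonian cycle of $G'$. For a $k$-fold cover we want a structural obstruction, and the cleanest is disconnectedness. So we first look for a disconnected cover. Disconnected covers of a connected $G$ are exactly the trivial ones, namely disjoint unions of copies of $G$. The disjoint union $G\sqcup G$ covers $G$ via the identity on each copy, and it is not Hamiltonian because it is disconnected.

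Lemma~\ref{lem:cover} applies to any covering map, not only connected ones: each node's state depends only on its universal cover, or inductively on the multiset of neighbor states. So every node of $G\sqcup G$ runs exactly as its image in $G$, and some node outputs \yes. This gives a correct output only if the problem's definition is satisfied on $G\sqcup G$. Here we must check the component-wise convention of Section~\ref{ss:definitions}. Since problems are component-wise, \hamilton on a disconnected graph is evaluated per component, and each component of $G\sqcup G$ is Hamiltonian, so the answer \yes would actually be correct. Hence $G\sqcup G$ fails as a counterexample, and a connected cover is genuinely needed.

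The main step, and the expected obstacle, is to find a connected non-Hamiltonian cover. The plan is a lift in which some $2$-cut or parity obstruction forbids a Hamiltonian cycle. Take the connected double cover of $G$ obtained by ``twisting'' exactly one edge $e=\{a,b\}$. That is, the lifts of every other edge are parallel ($x'y'$ and $x''y''$), while $e$ lifts to $a'b''$ and $a''b'$. Then $F$ consists of two copies of $G-e$ joined by two edges. Since $G-e$ is connected a.a.s., $F$ is connected, and $\{a'b'',a''b'\}$ is an edge cut of size $2$. A Hamiltonian cycle of $F$ must cross this cut exactly twice. It therefore induces, in $G-e$, a Hamiltonian path from $a$ to $b$ (first copy) together with a Hamiltonian path from $b$ to $a$ (second copy); equivalently, $G$ has a Hamiltonian cycle through $e$. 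A dense random graph certainly has one, so this cover also fails. We therefore need a heavier obstruction. The one to try first is to choose $F$ with a small vertex cut whose removal leaves too many components (a toughness violation), for instance a $k$-fold cover built by twisting along a star at a single vertex $v$. Then removing the $k$ lifts of some vertex set leaves more than $k$ components. If that fails, we fall back on a degree/parity obstruction, such as a bipartite cover with unequal sides, obtained by combining the double-cover construction with a gadget. We expect the verification that such a cover both exists for almost every $G$ and is non-Hamiltonian to be the main work.
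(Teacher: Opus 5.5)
Your reduction is correct and is the same as the paper's. You fix a typical Hamiltonian $G$ on which $\cA$ answers \yes, find a lift $F$ of $G$ with no Hamiltonian cycle, and use Lemma~\ref{lem:cover} to contradict soundness. Your dismissal of $G\sqcup G$ (by the component-wise convention) and of the double cover twisted along a single edge is also right.

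The genuine gap is that the proposal never produces a non-Hamiltonian lift, which is the entire content of the proof, and the two routes you sketch cannot work.

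\emph{Toughness via fibers.} Deleting full fibers never violates $1$-toughness. Let $\alpha$ be a $k$-fold covering map from $F$ onto $G$ and $\emptyset\ne W\subseteq V(G)$. Then $F-\alpha^{-1}(W)$ covers $G-W$. A component of $F-\alpha^{-1}(W)$ maps onto a whole component $K$ of $G-W$, so it meets the $k$-element fiber over any fixed vertex of $K$. Hence, writing $c(\cdot)$ for the number of components, $c(F-\alpha^{-1}(W))\le k\,c(G-W)\le k|W|$, because the Hamiltonian graph $G$ is $1$-tough.

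\emph{Unbalanced bipartite cover.} A bipartite lift of a connected non-bipartite graph always has equal sides. If the fiber over $x$ splits as $A_x\cup B_x$, each edge $xy$ forces $|A_x|=|B_y|$ and $|B_x|=|A_y|$, and an odd cycle then forces $|A_x|=|B_x|$ everywhere. Adding a gadget destroys the covering property, so Lemma~\ref{lem:cover} would no longer apply.

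The missing idea is to use $2$-edge cuts to force specific edges into every Hamiltonian cycle, and then overload a few vertices. The paper takes four copies $G_0,G_1,G_2,G_3$ of $G$ and a triangle $\{a,b,c\}$ with edges $e_1,e_2,e_3$. It twists $e_i$ between $G_0$ and $G_i$. Each $G_i$ with $i\ge1$ is then joined to the rest of $F$ only by its two twisted edges, so a Hamiltonian cycle must use all six twisted edges. Each of $a_0,b_0,c_0$ carries two of them, and only twisted edges leave $G_0$. So the cycle can never reach the remaining (nonempty) part of $G_0$.

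Your star idea in fact works with this obstruction in place of toughness. Write $x^{(j)}$ for the copy of $x$ in $G_j$, and twist three edges $vu_1,vu_2,vu_3$ at a vertex $v$, the edge $vu_i$ between $G_0$ and $G_i$. The same cut argument forces $v^{(0)}u_i^{(i)}$ into the cycle for $i=1,2,3$, so $v^{(0)}$ would have degree $3$ in it. One of these constructions, together with this verification, is what your proposal needs in order to be a proof.
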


\begin{proof}
  The bipartite cover used in the proof of Theorem \ref{thm:nosoundtridetect} is a particular
  instance of the more general concept known as \emph{lift} (or \emph{cover}) of a graph $G$.
  In general, a lift of $G$ is constructed as follows. For $k\ge1$, consider node-disjoint
  copies $G_0,G_1,\ldots,G_k$ of $G$. Let $u$ and $v$ be adjacent vertices of $G$,
  and $u_i$ and $v_i$ be their copies in $G_i$. In the disjoint union of $G_0,G_1,\ldots,G_k$,
  the copies of the edge $\{u,v\}$ form a matching between the node sets $\{u_0,u_1,\ldots,u_k\}$
  and $\{v_0,v_1,\ldots,v_k\}$. A lift $F$ of $G$ is obtained by replacing this matching, for each $\{u,v\}$,
  with an arbitrary, possibly the same, matching between these node sets.
  Note that a map $\alpha\map{V(F)}{V(G)}$ defined by $\alpha(x_i)=x$
  for all $x\in V(G)$ and $i=0,1,\ldots,k$ is a covering map from $F$ onto $G$.

  The proof idea is the same as in the proof of Theorem \ref{thm:nosoundtridetect}:
  we need a lift of $G$ which does not contain Hamiltonian cycles.
  It is known \cite{LuczakWW15} that a random graph $G$ has the following property:
  if $k$ is large, then almost all lifts of $G$ do contain a Hamiltonian cycle.
  This, however, does not make our task complicated.

  Indeed, assume that a graph $G$ contains a triangle $\{a,b,c\}$ and at least one additional vertex.
  Denote the edges of this triangle by $e_1=\{b,c\}$, $e_2=\{a,c\}$, and $e_3=\{a,b\}$. 
  Consider the disjoint union of four copies $G_0,G_1,G_2,G_3$ of $G$.
  Construct a lift $F$ of $G$ as follows:
  for each edge $e_i=\{x,y\}$, where $i=1,2,3$, we replace the edges $\{x_0,y_0\}$ and $\{x_i,y_i\}$
  with two new edges $\{x_0,y_i\}$ and $\{x_i,y_0\}$, which we will call \emph{twisted}.
  Thus, $F$ has six twisted edges. Note that any Hamiltonian cycle in $F$ must
  contain all six of them. Indeed, for $e_i=\{x,y\}$, such a cycle $C$ must contain
  the twisted edges $\{x_0,y_i\}$ and $\{x_i,y_0\}$ because they provide the only
  possibility for $C$ to enter and exit the part $G_i$. Note also that each of
  the vertices $a_0,b_0,c_0$ is incident to two twisted edges; for example,
  $a_0$ is incident to $\{a_0,c_2\}$ and $\{a_0,b_3\}$. This, however, means
  that once $C$ enters the part $G_0$, which is only possible at one of the vertices $a_0,b_0,c_0$,
  it must exit this part immediately, without visiting any other vertices.
  This contradiction shows that $F$ does not have any Hamiltonian cycle.
  
  We are now prepared to prove the theorem. Assume that $\cA$ is a sound \mb{}-algorithm for \hamilton.
A.a.s., $G(n,1/2)$ has the following properties:
\begin{itemize}
\item
  there exists a triangle $\{a,b,c\}$,
\item
  there exists a Hamiltonian cycle (see \cite[Ch.~8]{Bollobas_book} for combinatorial background
  and \cite{Turau20} for algorithmic aspects),
\item
  $\cA$ successfully detects the Hamiltonicity of $G(n,1/2)$, that is,
  at least one of the nodes terminates in the accepting state~\yes.
\end{itemize}
  Fix $G$ satisfying all three conditions and construct its lift $F$ as described above.
  By \Cref{lem:cover}, the algorithm $\cA$ terminates on $F$ with at least one node
  in state \yes. This is a contradiction, because $F$ does not contain any Hamiltonian cycle.  
\end{proof}

\section{Collapse of average-case \local/\mb time hierarchies}\label{s:collapse}

We now consider anonymous networks with unbounded communication.
We will establish two collapse results:
\begin{itemize}
\item
  All problems are solvable on $G(n,p)$ asymptotically almost surely in \mb using a small number
  of communication rounds. If $p\ge n^{\varepsilon-1}$, then this is even possible in \sbmodel with a constant number of rounds.
\item
  If the processors know the network size $n$, then, on the uniformly random graph $G(n,1/2)$, all distributed
  problems are solvable by sound \sbmodel{} algorithms in just four rounds. Four rounds are optimal in this result.
\end{itemize}
We prove these collapse theorems in \Cref{ss:collapse,ss:compute-all}, respectively.
The proof is based on a close relation between \mb model and the color refinement algorithm,
which is explained in \Cref{ss:CR}.

\subsection{Color Refinement}\label{ss:CR}

On an input graph $G$, the color refinement (CR) algorithm iteratively computes a sequence $C_r$
of colorings of the node set $V(G)$.
The initial coloring is defined by node degrees: $C_0(x)=\deg x$
for all $x\in V(G)$. For each $r>0$, the coloring $C_r$ is defined inductively as
\begin{equation}
  \label{eq:defCr}
C_{r}(x)=\Mset{C_{r-1}(y)}_{y\in N(x)},    
\end{equation}
where $\Mset{}$ denotes a multiset and $N(x)$ is the neighborhood of a node $x$.
A simple induction on $r$ shows that
\begin{equation}
  \label{eq:refines}
\text{if }C_r(x)\ne C_r(x'),\text{ then }C_{r+1}(x)\ne C_{r+1}(x'),
\end{equation}  
that is, each subsequent color partition of the node set $V(G)$ refines the preceding one.

As immediately seen from \refeq{defCr}, the coloring $C_r$ can be computed in \mb in $r$ rounds:
in each round, the nodes send their neighbors all the information they have obtained so far.
That is, in the first round each node sends its degree, and in each subsequent round,
it sends the multiset of all messages received from its neighbors in the preceding round.
This is an \mb analog of a \emph{full-information protocol},
and we refer to it in the sequel as the \emph{universal \mb algorithm}.

Color refinement can be used to characterize \mb-equivalent pairs of nodes with the next lemma.

\begin{lemma}\label{lem:cr-states}
   Let $x\in V(G)$ and $y\in V(H)$. 
  \begin{enumerate}[label=\normalfont \textbf{\arabic*}.]
  \item
Assume that $C_r(x)=C_r(y)$.
  If an \mb{} algorithm $\cA$ is run on $G$ and $H$, then nodes $x$ and $y$
  are in the same state up to the $r$-th round.
\item
  If $C_r(x)\ne C_r(y)$, then there is an \mb{} algorithm $\cA$ with the following property:
  When $\cA$ is run on $G$ and $H$, then nodes $x$ and $y$ enter distinct states
  in the $r$-th round.
  \end{enumerate}
\end{lemma}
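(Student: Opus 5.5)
For part 1, I will argue by induction on $r$ that $C_r(x)=C_r(y)$ forces $x$ and $y$ to be in the same state after each of rounds $0,1,\ldots,r$. Here "state after round $t$" means the local state once round $t$ has been processed, with round $0$ being the initial state. An \mb{} algorithm is deterministic and its only initial input is the degree, so the state after round $0$ is a function of $C_0=\deg$.

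\textbf{Inductive step for part 1.} Suppose the state after round $t-1$ is determined by $C_{t-1}$. The message a node sends in round $t$ is a function of its state after round $t-1$, and hence a function of its $C_{t-1}$-colour. Node $x$ receives the multiset of messages from its neighbours. This multiset is the image, under that function, of the multiset $\Mset{C_{t-1}(z)}_{z\in N(x)}=C_t(x)$. The new state is a function of the old state and the received multiset. By the refinement property \refeq{refines}, the old state is itself determined by $C_t$. So the state after round $t$ is a function of $C_t$.

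Two points make this work across the two graphs $G$ and $H$. First, the same functions are used in both, since the algorithm is uniform and anonymous. Second, $C_r(x)=C_r(y)$ implies $C_t(x)=C_t(y)$ for all $t\le r$, by the contrapositive of \refeq{refines}. Note that $C_t(x)$ is defined for each graph separately, but it is compared as an abstract object (nested multisets of integers), so the comparison makes sense across graphs.

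For part 2, I will use the universal \mb{} algorithm described above. In it, the state of a node after round $t$ is its full history, and in particular it contains $C_t$. Concretely, I will show by induction that after round $t$ each node $v$ can compute $C_t(v)$. This holds at $t=0$ because the node knows its degree. For the step, in round $t$ each node sends $C_{t-1}(v)$. A node then receives exactly the multiset $\Mset{C_{t-1}(z)}_{z\in N(v)}=C_t(v)$, and it stores this. So the state after round $r$ includes $C_r$. If $C_r(x)\ne C_r(y)$, then $x$ and $y$ are in distinct states after round $r$, which is what part 2 requires.

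The only delicate point is bookkeeping: the message sent in round $t$ must depend on the state after round $t-1$, and the indexing of $C_t$ must line up with rounds. There is no real obstacle beyond keeping the indices consistent.
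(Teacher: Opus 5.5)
Your proof is correct. Part~2 is the paper's argument: run the universal \mb{} algorithm, whose state after round $r$ contains $C_r$.

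For part~1 you take a different route. The paper goes through unfolding trees. It applies Angluin's covering lemma (Lemma~\ref{lem:cover}) to the universal cover, adds a locality argument to show that isomorphic truncated unfoldings force equal states, and then cites Krebs and Verbitsky for the equivalence between isomorphism of truncated unfoldings and equality of CR colors. You instead show directly, by induction on $t$, that one graph-independent map sends $C_t(v)$ to the state of $v$ after round $t$. This uses only the definition of $C_t$ and the refinement property.

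Your argument is elementary and self-contained. Because your base case pairs $C_0=\deg$ with the degree input, the indexing comes out right automatically. On the tree side one must remember that, with degree-initialized colors, $C_r$ corresponds to the unfolding truncated at depth $r+1$, not $r$. Your argument also extends verbatim to non-uniform algorithms by letting the initial state depend on $(\deg v,n)$. This matters because the paper uses the lemma that way in the proof of part~2 of Theorem~\ref{thm:all-nsa}. The paper's route, in exchange, ties CR to covering maps, which is the same tool behind the impossibility results of Section~\ref{ss:triangle-nonsound}.

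One point to make explicit: \refeq{refines} is stated for two nodes of a single graph, whereas you apply it to $x\in V(G)$ and $y\in V(H)$. The same induction gives the cross-graph version. Concretely, $C_{t-1}(v)$ is recovered from $C_t(v)$ by taking the size of the multiset when $t=1$, and by applying the recovery map for $t-1$ elementwise when $t\ge 2$.
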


\begin{proof}
  \textit{1.}
  We use a close relationship between CR and the concept of universal cover.
Given a node $x \in V(G)$, let $U_x(G)$ denote the unfolding of $G$ into a tree starting from $x$.
Formally, $U_x(G)$ can be defined as the graph whose nodes are non-backtracking walks in $G$,
where two walks are adjacent if one of them is a one-node extension of the other.
The unrooted version $U(G)$ of this tree is called the \emph{universal cover} of $G$ because
$U(G)$ covers every cover $H$ of $G$.
It follows from Lemma \ref{lem:cover} that if $U_x(G)$ and $U_y(H)$ are isomorphic as rooted trees,
then $x$ and $y$ are in the same state, whatever \mb{} algorithm is run on $G$ and $H$.
This can be stated in a fairly useful, more general form.

For an integer $r\ge0$ and a rooted tree $T_x$, let $T^r_x$ be the truncation of $T_x$ at depth~$r$.
Let $U_x=U_x(G)$ and $W_y=U_y(H)$ and suppose that $U^r_x$ and $W^r_y$ are isomorphic rooted trees.
Even this weaker assumption ensures that if an \mb algorithm $\cA$ is run on $G$ and $H$, 
then the processors $x$ and $y$ are in the same state up to the $r$-th round.

It remains to note that $U^r_x$ and $W^r_y$ are isomorphic if and only if $C_r(x)=C_r(y)$.
A proof of this well-known equivalence can be found, e.g., in \cite[Lemma~2.3]{KrebsV15}.

\smallskip

  \textit{2.}
The second claim follows by considering the universal \mb{} algorithm.
\end{proof}

If the $C_r$ colors of all nodes
in a graph $G$ are pairwise distinct, then they can be used to present $G$ in a canonical form.
The approach succeeds a.a.s.{} on the uniformly random graph $G=G(n,1/2)$ already for $r=1$, by the  result of Babai, Erd\H{o}s, and Selkow \cite{BabaiES80}. Their algorithm works as follows:
\begin{itemize}
\item
  Set $m=\lceil3\log_2n\rceil$ and consider the set $U$ consisting of the nodes with the $m$ largest degrees in $G$.
  It is known that the $m$ largest degrees are a.a.s.{} unique, i.e., $|U|=m$ a.a.s.~\cite{Bollobas81}.
\item
  For each node $v$, compute its label as the bit string of length $m$ consisting
  of the adjacencies of $v$ to the nodes in the set~$U$.
\end{itemize}
As proven in \cite{BabaiES80}, these strings are a.a.s.{} pairwise distinct.

This canonical labeling scheme can be straightforwardly translated into a two-round \sbmodel{} algorithm that asymptotically almost surely generates unique identifiers for all nodes.
In the first round, each node $v$ sends its degree to all neighbors. After the message exchange,
$v$ knows the set of the degrees of all of its neighbors.
In the second round, $v$ sends the $m$ largest degrees it has heard about to its neighbors.
Since the diameter of $G(n,1/2)$ is a.a.s.{} equal to 2, every processor learns
the set $U$ and, hence, is able to compute its identifier.
Note that while the identifiers have logarithmic size, the algorithm requires sending
messages of superlogarithmic length $\Omega(\log^2n)$.

Another \sbmodel implementation of the Babai-Erd\H{o}s-Selkow algorithm works in just one round,
requires sending messages of only logarithmic size, and produces identifiers of size $\Theta(\log^2n)$.
Having received the set of their neighbor degrees, each node $v$ chooses the $m$ largest elements
of this set and outputs its ordered list as its identifier $\id(v)$. These identifiers are pairwise distinct a.a.s.{}
because, as easily seen, $\id(v) \ne \id(v')$ whenever $v$ and $v'$ have distinct adjacency patterns to~$U$.
Along with part 1 of Lemma \ref{lem:cr-states}, this has the following consequence, which we state for further reference.

\begin{proposition}[Babai, Erd\H{o}s, and Selkow \cite{BabaiES80}]\label{prop:BES}
  If CR is run on $G(n,1/2)$, then a.a.s.{} $C_1(v)\ne C_1(v')$ for all $v\ne v'$.
\end{proposition}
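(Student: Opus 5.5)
The plan is to get Proposition~\ref{prop:BES} directly from the one-round \sbmodel{} implementation of the Babai--Erd\H{o}s--Selkow scheme described just above, combined with part~1 of Lemma~\ref{lem:cr-states}. The key observation is that the set of neighbor degrees $S(v)=\Set{\deg u}_{u\in N(v)}$ is a function of $C_1(v)$. Indeed, $C_1(v)$ is the multiset $\Mset{\deg u}_{u\in N(v)}$, and its support is $S(v)$. So any output computed by a one-round \sbmodel{} (or \mb{}) algorithm from the received degrees is determined by $C_1(v)$, together with $\deg v = |C_1(v)|$. Hence it suffices to exhibit a function $\id$ of $S(v)$ that is a.a.s.\ injective on $V(G)$ for $G\sim G(n,1/2)$. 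Then $C_1(v)=C_1(v')$ would force $\id(v)=\id(v')$, and so $v=v'$.

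The steps, in order, are as follows.

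\textbf{Step 1.} Set $m=\lceil 3\log_2 n\rceil$. By the result of Bollob\'as~\cite{Bollobas81}, a.a.s.\ the $m$ largest degrees of $G(n,1/2)$ are pairwise distinct, and moreover distinct from all other degrees. Hence the set $U$ of the $m$ nodes of largest degree is well defined, and $\deg$ is injective on $U$. Distinctness from all other degrees follows since the gaps between top order statistics of $\mathrm{Bin}(n-1,1/2)$ are of order $\sqrt n/\log n$.

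\textbf{Step 2.} Define $\id(v)$ as the ordered list of those elements of $S(v)$ that belong to the set $D_U=\{\deg w : w\in U\}$. Because degrees are injective on $U$ and no node outside $U$ shares a degree with a node in $U$, the list $\id(v)$ encodes exactly the adjacency vector of $v$ to $U$, that is, $N(v)\cap U$. A node of $U$ adjacent to $v$ contributes its degree; conversely, a degree in $D_U\cap S(v)$ can only come from that particular $U$-node.

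One subtlety here is that $\id$ must be a fixed function of $S(v)$ alone. Taking ``the $m$ largest elements of $S(v)$'', as in the text, does not literally coincide with $N(v)\cap U$. I would instead argue directly that $S(v)=S(v')$ implies $S(v)\cap D_U=S(v')\cap D_U$. This is immediate because $D_U$ is a set determined by $G$, not by $v$, and injectivity only needs to hold within one fixed graph.

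\textbf{Step 3.} Invoke the Babai--Erd\H{o}s--Selkow estimate that a.a.s.\ no two distinct nodes $v\ne v'$ have the same adjacency pattern to $U$. The intuition is that, conditioned on $U$, the adjacencies of $v,v'\notin U$ to $U$ are roughly independent fair bits. The probability of equality is then about $2^{-m}\le n^{-3}$, and a union bound over pairs gives $o(1)$. Pairs with $v$ or $v'$ in $U$ are handled since $v\notin N(v)$, while $v$ is a.a.s.\ adjacent to some other node of $U$ distinguishing it.

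Combining the three steps, a.a.s.\ $S(v)\neq S(v')$, hence $C_1(v)\ne C_1(v')$, for all $v\ne v'$.

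The main obstacle is Step~3. The conditioning on $U$ being the top-degree set biases the edges incident to $U$, so the adjacency bits are not exactly independent. Since this is precisely the content of \cite{BabaiES80}, I would cite it rather than redo the computation. Were it needed, I would handle it by exposing degrees first and using that conditioning on large degree shifts each edge probability by only $O(\sqrt{\log n/n})$.
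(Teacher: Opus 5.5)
Your proposal is correct and follows essentially the paper's route. Bollob\'as gives distinct top-$m$ degrees, Babai--Erd\H{o}s--Selkow give distinct adjacency patterns to $U$, and the neighbor-degree set $S(v)$ (the support of $C_1(v)$) determines $N(v)\cap U$. Your use of $S(v)\cap D_U$ inside a fixed graph is a harmless shortcut that avoids going through the one-round algorithm and Lemma~\ref{lem:cr-states}.

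Your worry about the paper's ``$m$ largest elements of $S(v)$'' is unnecessary, though. Let $w$ be the highest-degree node of $U$ on which the patterns of $v$ and $v'$ differ. Then the two ordered lists agree above $\deg w$ and first disagree at the position of $\deg w$, which is exactly the paper's ``as easily seen''. Also, pairs involving a node of $U$ are settled more simply than in your Step 3: $|C_1(v)|=\deg v$ is a unique degree for $v\in U$.
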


Subsequent work on CR on $G(n,p)$ is summarized in \Cref{table:summary} in \Cref{s:intro}.
Below, we use the following result from \cite{CzP,GaudioRS25}: if CR is run on $G(n,p)$
with $\frac{(1+\delta)\ln n}{n}\le p\le\frac{1}{2}$, then a.a.s.{} $C_2(v)\ne C_2(v')$ for all $v\ne v'$.

\subsection{Collapse of the time hierarchy for $\local\aas=\mb\aas$}\label{ss:collapse}

We now show that for $p$ over the connectivity threshold,
all distributed problems can be solved on $G(n,p)$
asymptotically almost surely by anonymous algorithms in the \mb model.
\Cref{thm:congest-pp} allows us to obtain  a similar result for the weaker model \sbmodel as well.
In both cases, the time hierarchy collapses to the level equal to the typical diameter of $G(n,p)$ up to a small additive constant.
Recall that for a model $\m$, we write $\m\gnp\aas[r]$ to denote the class of problems solvable on $G(n,p)$
asymptotically almost surely by an \m{} algorithm within $r$ rounds.

\collapse*

\begin{proof}
  \textit{1.}
  For a problem $\Pi$, we design an \mb{} algorithm $\cA$ solving $\Pi$ on $G \sim G(n,p)$ asymptotically almost
  surely in $\diam(G)+3$ rounds.
  The theorem will follow because if $np\to\infty$, then a.a.s.{} $\diam(G(n,p))=(1+o(1))\frac{\ln n}{\ln(np)}+1$;
  see \cite[Theorem 6]{ChungL01}. In what follows, we use the formal definition of a distributed graph
  problem $\Pi$ from Section~\ref{ss:definitions}.
  
  We describe how $\cA$ works on an input graph $G$. The message exchange is actually the same for all
  $\Pi$, and the specific definition of $\Pi$ is used only
  to compute the final states of the processors in the last round.
  The algorithm starts by executing the first three rounds of the universal \mb{} algorithm.
  \begin{itemize}
  \item As shown in \cite{CzP,GaudioRS25}, after the second round, every node $x$ has a
  unique identifier~$C_2(x)$.
\item After the third round, node $x$ knows the identifiers $C_2(y)$ of all its neighbors $y\in N(x)$.
  The CR color $C_3(x)=\Set{C_2(y)}_{y\in N(x)}$, computed in this round, can be seen as the
  neighborhood list of $x$, where each neighbor $y$ of $x$ is represented by its identifier~$C_2(y)$.
  \item In the fourth round, node $x$ sends its neighborhood list $C_3(x)$ to all neighbors,
    and during the remaining $\diam(G)-1$ rounds, $C_3(x)$ is propagated throughout the network.
    It is important to note that, due to \refeq{refines}, the identifier $C_2(x)$ of $x$
    is uniquely determined by~$C_3(x)$.
  \end{itemize}
  In the end, every node knows the neighborhood lists of all nodes in $G$ and
  is able to construct the isomorphic copy $G'$ of $G$ on the set $\Set{C_2(x)}_{x\in V(G)}$ of nodes.
  Finally, each node $x$ represents all solutions
  in $\Pi(G')$ in a natural encoding and sorts them lexicographically.
  Then $x$ chooses the lexicographically least
  solution $\sigma$ and terminates with output $\sigma(C_1(x))$.

\smallskip
  
  \textit{2.}
  The proof is virtually the same with the only difference that, instead of the \mb algorithms
  provided by \cite{CzP,GaudioRS25}, we use the \sbmodel algorithm of our Theorem \ref{thm:congest-pp}.
  Thus, the nodes obtain unique identifiers already in the first round, which not only decreases
  the total number of rounds by one but, more importantly, ensures that also the rest of the algorithm operates in~\sbmodel.
\end{proof}

\subsection{Gathering all information a.a.s.~in $\sbmodel[4]$}\label{ss:gather-all}

In the \local{} model, all nodes can learn the identifiers and structure of the graph $G$ in $\diam(G)+1$ rounds. %
Let $G_v$ denote the rooted version
of $G$ with $v$ designated as root. Thus, by gathering the entire graph $G$, there is a \local algorithm that
allows each processor $v$ to determine the isomorphism type of $G_v$ in $\diam(G)+1$ rounds.

Clearly, a similar result -- in its general form -- is impossible in the \mb model,
even in the average-case setting. Moreover, identifying the network graph $G$ up to isomorphism
is impossible in \mb for any fixed $G$, because $G$ is indistinguishable from any of its lift;
cf.~the proofs of Theorems \ref{thm:nosoundtridetect} and~\ref{thm:nosoundhamilton}.
However, if nodes know the size of $G$, then this task becomes possible a.a.s.{} even in the weaker \sbmodel model.
This observation was at the core of the proof of \Cref{thm:collapse}. We now consider this issue in the case of the uniformly random graph $G(n,1/2)$ in greater detail.

For a graph $G$, let
$$
S_r(G)=\Set{C_r(v)}_{v\in V(G)}
$$
be the set of the node colors produced by the color refinement algorithm in the $r$-th round.
This is an isomorphism invariant of a graph, which is important for us by the following reason.
Assume that $S_r(G)\cap S_r(H)=\emptyset$ for every non-isomorphic graph $H$ with the same number of nodes.
By part 2 of Lemma \ref{lem:cr-states}, this means that in the model \mb, all processors
can identify the isomorphism type of $G$ in $r$ rounds if they know the size of the network.
The proof of part 1 of Theorem \ref{thm:learn5} below shows that this works for $G=G(n,1/2)$
asymptotically almost surely with $r=4$ even in the \sbmodel model.

\begin{theorem}\label{thm:learn5}
  Let $G=G(n,1/2)$. Asymptotically almost surely,
  \begin{enumerate}[label=\normalfont \textbf{\arabic*}.]
  \item
   $S_4(G)\cap S_4(H)=\emptyset$ for every $n$-node graph $H\not\cong G$;
 \item
   $S_3(G)\cap S_3(H)\ne\emptyset$ for some $n$-node graph $H\not\cong G$;
 \item
   $S_2(G)\ne S_2(H)$ for every $H\not\cong G$.
  \end{enumerate}
\end{theorem}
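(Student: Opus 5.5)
Part 1 and Part 3 both rest on Proposition~\ref{prop:BES}: a.a.s.\ all $C_1$ colors of $G=G(n,1/2)$ are pairwise distinct. In that case each $C_1(v)$ is a canonical name. Then $C_2(v)=\Mset{C_1(y)}_{y\in N(v)}$ is the neighborhood list of $v$ in this canonical naming. Moreover, by \refeq{refines}, $C_2(v)$ determines $C_1(v)$.

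For Part 1 I would argue as follows. Suppose $C_4(v)=C_4(w)$ for some $v\in V(G)$ and $w\in V(H)$, with $|V(H)|=n$. Since $\diam(G)=2$ a.a.s., I would like to show that $C_4(v)$ encodes the whole graph $G$ up to isomorphism. The color $C_3(v)$ is the multiset of $C_2(y)$ over $y\in N(v)$, and $C_4(v)$ is the multiset of $C_3(z)$ over $z\in N(v)$. Hence $C_4(v)$ contains $C_2(x)$ for every $x$ within distance $2$ of $v$, that is, for all of $V(G)$. These colors are the full neighborhood lists of $G$ in its canonical naming, so $C_4(v)$ determines $G$.

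On the $H$ side I would use Lemma~\ref{lem:cr-states}(1) in its tree-unfolding form. The depth-4 unfolding of $w$ equals that of $v$. So the $C_2$ colors appearing within distance $2$ of $w$ are exactly the $n$ distinct colors $C_2(x)$, $x\in V(G)$. Because $H$ has only $n$ nodes, this forces the following.
\begin{enumerate}
\item $w$ reaches exactly $n$ nodes within distance $2$, and these nodes carry $n$ distinct $C_2$ colors, so the map $x\mapsto$ the unique node of $H$ with color $C_2(x)$ is a bijection.
\item Adjacency is preserved by this bijection, since a $C_2$ color lists the $C_1$ colors of the neighbors, and $C_1$ colors are determined by $C_2$ colors.
\end{enumerate}
It follows that $H\cong G$. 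The delicate point is the bookkeeping in the first item: the $C_1$ colors in $H$ might repeat, or nodes might be reached with multiplicity. I would handle this by counting distinct $C_2$ colors among nodes at distance at most $2$ from $w$. These must number $n$, which pins down the node set, and adjacency then follows from the neighborhood lists.

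Part 3 is easier. $S_2(G)$ is the set of all neighborhood lists, and by Proposition~\ref{prop:BES} it consists of $n$ distinct colors. The condition $S_2(H)=S_2(G)$ means $H$ has $n$ nodes with these exact colors. Each color determines the node's $C_1$ color, and the neighbors' $C_1$ colors name nodes uniquely. So $H$ is reconstructed as $G$.

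Part 2 is the main obstacle: I need to build $H\not\cong G$ sharing at least one $C_3$ color with $G$. I would take a vertex $v$ and look for a graph $H$ that agrees with $G$ on the depth-3 unfolding from $v$ but differs elsewhere. The depth-3 tree of $v$ records, for each neighbor $y$, the degrees and adjacency patterns of neighbors of neighbors up to depth 3. The first thing to try is a local switching in $G$: replace edges $ab,cd$ by $ad,cb$ with all four vertices far from $v$ in the unfolding sense, or with matching $C_1$ colors. Degrees are preserved, but $C_1$ of the neighbors of $a,b,c,d$ may change. Since $\diam(G)=2$, no vertex is far from $v$, so genuine locality is unavailable. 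Instead I would use a non-isomorphic $H$ realizing the same depth-3 universal-cover truncation from one root, chosen so that the colors appearing at depth $\le 3$ are unchanged. A candidate is a switch on $G$ together with a relabeled copy built from the unfolding. I expect this construction, and the proof that the resulting $H$ is not isomorphic to $G$, to be the hardest part. Non-isomorphism would likely come from comparing $S_4$ or degree sequences via Part 1.
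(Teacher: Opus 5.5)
Your arguments for Parts 1 and 3 are correct and follow the paper's route: canonical names from Proposition~\ref{prop:BES}, diameter $2$, and the fact that $C_{r+1}$ determines $C_r$. In Part 1, your counting step is a genuine addition. The $C_2$-colors within distance $2$ of $w$ must be $n$ distinct values, so they exhaust $V(H)$ and force $C_1$ to be injective on $H$. The paper leaves this implicit.

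\textbf{Part 2 is not proved.} You never produce $H$, and the obstacle you cite is not real. Since $C_3(v)=\Mset{C_2(w)}_{w\in N(v)}$ and $C_2(w)=\Mset{C_1(x)}_{x\in N(w)}$, the color $C_3(v)$ depends only on three things: $N(v)$, the neighborhoods $N(w)$ for $w\in N(v)$, and the $C_1$-colors of vertices in those neighborhoods. So no ``distance from $v$'' is needed. It suffices to change only edges not incident to $N(v)\cup\{v\}$ while keeping \emph{every} $C_1$-color fixed.

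Your remark on switching is also slightly off. Replacing $ab,cd$ by $ad,cb$ can change the $C_1$-colors of $a,b,c,d$ themselves, not of their neighbors. It does so only if $\deg b\ne\deg d$ or $\deg a\ne\deg c$. If all four vertices have the same degree, every $C_1$-color survives.

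This is the missing construction, and the paper uses it.
\begin{enumerate}
\item Find a set $U$ of about $\frac12\sqrt n/\log n$ vertices of a common degree $d$, none adjacent to $v$. Expose $G-v$ first; a.a.s.\ all its degrees lie in $n/2\pm\sqrt n\log n$. By pigeonhole, some degree is shared by $\Omega(\sqrt n/\log n)$ vertices. Then expose the edges at $v$ and keep the vertices not joined to $v$; by Chernoff they are about half, and their degree in $G$ is still $d$.
\item By \cite[Lemma 6.2]{PikhurkoV11}, a.a.s.\ every set of this size contains a variable quadruple $x_1,x_2,x_3,x_4$. Switching it gives $H$ with $C_1^H=C_1^G$ pointwise and $N_H(w)=N_G(w)$ for all $w\in N(v)$. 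Hence $C_2^H(w)=C_2^G(w)$ for $w\in N(v)$, and so $C_3^H(v)=C_3^G(v)$.
\item Non-isomorphism then comes for free. It cannot come from degree sequences, which a switch preserves, and there is no need to compare $S_4$. Since $C_1^G$ is injective (Proposition~\ref{prop:BES}) and $C_1^H=C_1^G$, any isomorphism $G\to H$ must fix every vertex. That is impossible because $G\ne H$.
\end{enumerate}

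Your ``relabeled copy built from the unfolding'' is not a construction. Without the four ingredients above, Part 2 remains open: the equal-degree switch, avoiding $N(v)$, a variable quadruple inside a large equal-degree set, and rigidity from injective $C_1$.
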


Theorem \ref{thm:learn5} shows how the graph invariant $S_r(G)$ evolves in terms of its ability to identify
the random graph $G \sim G(n,1/2)$.
After two refinement rounds, the \emph{entire} set $S_r(G)$ suffices to identify the graph~$G$.
Nevertheless, there exists a node whose individual color is still insufficient to identify $G$ even after three rounds.
This changes after one more refinement: after four rounds, the color of any single node already identifies~$G$.
To complete the picture, note that the number of rounds $r=2$ in part 3 of the theorem is optimal.
Indeed, as shown in \cite[Lemma 6.2]{PikhurkoV11}, a.a.s.{} there exists $H\not\cong G$ such that $S_1(G)=S_1(H)$.

We also remark that \Cref{thm:learn5} is an average-case analog of the following worst-case result
obtained by Krebs and Verbitsky~\cite{KrebsV15}:
there are $n$-node graphs $G$ and $H$ such that $S_{2n-1}(G)\cap S_{2n-1}(H)=\emptyset$
while $S_r(G)\cap S_r(H)\ne\emptyset$ for all $r\le2n-O(\sqrt n)$. Note that $2n-1$ rounds
are enough to distinguish $G$ and $H$ in \mb unless these graphs are CR-indistinguishable,
and the result of \cite{KrebsV15} shows that this worst-case bound is nearly optimal.

\begin{proof}[Proof of Theorem \ref{thm:learn5}]
  \mbox{}
  
  \textit{1.}
  Assume that $G$ has diameter 2 and all colors $C_1(w)$ for $w\in V(G)$
  are pairwise distinct. Note that  $G(n,1/2)$ has both properties
  a.a.s., the latter due to Proposition \ref{prop:BES}. Thus, the $C_1$ colors
  can be seen as unique identifiers, and we can consider an isomorphic
  copy of $G$ on the set $\Set{C_1(w)}_{w\in V(G)}$, which we denote by $G'$.
  It suffices to show, for an arbitrary $v\in V(G)$, that $G'$ can be reconstructed from the color $C_4(v)$.
  For each $w$, the color $C_2(w)=\Set{C_1(u)}_{u\in N(w)}$
  uniquely determines $C_1(w)$ (see \refeq{refines}) and therefore makes it possible to reconstruct the
  neighborhood of the node $C_1(w)$ in $G'$. Since the diameter is equal to 2,
  the color $C_4(v)$ absorbs (according to the recursive definition in Equation~\refeq{defCr}) the colors $C_2(w)$ of all $w\in V(G)$
  and, therefore, contains all necessary information to reconstruct $G'$ completely.

  \smallskip

\textit{2.}
Assume that $V = \{1, \ldots, n\}$.
Fix a node $v$, say $v=1$.
The random graph $G \sim G(n,1/2)$ on $V$ can be generated in two steps.
First we sample $G_1 \sim G(n-1,1/2)$ on the nodes $\{2,\ldots,n\}$,
and then draw the edge $\{1,u\}$ with probability $1/2$ for all $u\ge2$ independently.
A simple application of Chernoff's bound ensures that a.a.s.{} the degree of every
node in $G_1$ deviates from $n/2$ by at most $\sqrt n\log n$.
It follows that $G_1$ has at least $k=(n-1)/(2\sqrt n\log n)$ nodes
with the same degree $d$. The number of these nodes being connected to $v$
in the second step of generation of $G$ is a binomial random variable
with distribution $\mathrm{Bin}(k,1/2)$. Applying Chernoff's bound once again,
we see that $G$ a.a.s.{} contains a set $U$ of at least $m=(\frac12-o(1))\sqrt n/\log n$
nodes that all have degree $d$ and are not adjacent to~$v$.

Call a quadruple of nodes $x_1,x_2,x_3,x_4$ \emph{variable} if
$\{x_1,x_2\}$ and $\{x_3,x_4\}$ are pairs of adjacent nodes while
$\{x_1,x_4\}$ and $\{x_2,x_3\}$ are not. A \emph{switch} consists
in removing the edges $\{x_1,x_2\}$ and $\{x_3,x_4\}$ from the graph
and adding instead the edges $\{x_1,x_4\}$ and $\{x_2,x_3\}$.
Note that this operation does not change the degree of any node in the graph.
We use the fact that a.a.s.{} every $m$-node subset of $V(G)$ contains
a variable quadruple; see \cite[Lemma 6.2]{PikhurkoV11} for the proof.

Choose a variable quadruple in the set $U$, make a switch, and denote
the modified graph on the set $V$ of nodes by $H$. %
As mentioned above,
\begin{equation}
  \label{eq:eqC1}
  \deg_G(x)=\deg_H(x)\text{ for all }x\in V.
\end{equation}
Here and below we use subscripts or superscripts to specify a graph. Moreover,
\begin{eqnarray}
N_G(x)=N_H(x)\quad&\text{if}&x\notin U,\label{eq:eqN}\\
\deg_G(x)=\deg_H(x)=d &\text{if}&x\in U,\label{eq:eqd} 
\end{eqnarray}
by the definition of $U$ and the construction of $H$.
Along with Eq.~\refeq{eqC1}, Eq.~\refeq{eqN} readily implies that
$C_1^G(x)=C_1^H(x)$ if $x\notin U$. If $x\in U$, then we only know that
$N_G(x)\setminus U=N_H(x)\setminus U$. However, Eq.~\refeq{eqd}
ensures that all nodes in $N_G(x)\cap U$ and in $N_H(x)\cap U$
have the same degree $d$. Therefore,
\begin{equation}
  \label{eq:eqC2}
  C_1^G(x)=C_1^H(x)\text{ for all }x\in V.
\end{equation}
We can see now that $G$ and $H$ are a.a.s.{} non-isomorphic.
Indeed, Proposition \ref{prop:BES} implies that, a.a.s., all colors $C_1^G(x)$ are
pairwise distinct. Along with Eq.~\refeq{eqC2}, this implies
that an isomorphism from $G$ to $H$ must be the identity map of $V$ onto itself.
Since $G$ and $H$ are different graphs, we conclude that no isomorphism can exist.

To complete the proof, it suffices to prove $C_3^G(v)=C_3^H(v)$.
This is equivalent to showing that
\begin{equation}
  \label{eq:eqC3}
C_2^G(w)=C_2^H(w)\text{ for all }w\in N_G(v)=N_H(v).
\end{equation}
Given such $w$, note that $w\notin U$. By \refeq{eqN},
we therefore have $N_G(w)=N_H(w)$. In order to derive Equality \refeq{eqC3},
it suffices to observe that, as follows from \refeq{eqC2},
$C_1^G(x)=C_1^H(x)$ for all $x\in N_G(w)=N_H(w)$.

\smallskip

\textit{3.}
By Proposition \ref{prop:BES}, a.a.s.{} $|S_1(G)|=n$. Let $H\not\cong G$.
If $S_1(G)\ne S_1(H)$, then $S_2(G)\ne S_2(H)$ as well.
Suppose that $S_1(G)=S_1(H)$. Since the graphs are non-isomorphic,
there are nodes $u,u'\in V(G)$ and $v,v'\in V(H)$ such that
$C_1(u)=C_1(v)$, $C_1(u')=C_1(v')$, and $u$ and $u'$ are adjacent while $v$ and $v'$
are not or vice versa. It follows that $C_2(u)\ne C_2(v)$ and, consequently, $C_2(u)\notin S_2(H)$.
\end{proof}

\subsection{Computing everything in $\sbmodel\nsa[4]$}\label{ss:compute-all}

Recall that  $\m\nsa$ denotes the class of problems solvable in the model \m
by non-uniform sound algorithms. We now study the round hierarchy
$$
\m\nsa[1]\subseteq\m\nsa[2]\subseteq\ldots\subseteq\m\nsa[r]\subseteq\m\nsa[r+1]\subseteq\ldots 
$$
for the classes $\m\in\{\mb,\sbmodel\}$.
As discussed in \Cref{sss:collapse},
 we show that this hierarchy collapses
at the fourth level and not before.
Recall that \all denotes the class of all (isomorphism-invariant, component-wise) graph problems as defined in Section~\ref{ss:definitions}.

\allnsacollapse*

\begin{proof}
  \textit{1.}
  For a problem $\Pi$, we have to design a non-uniform sound \sbmodel algorithm $\cA_\Pi$ solving $\Pi$
  on $G(n,1/2)$ in $4$ rounds.
  We proceed similarly to the proof of Theorem \ref{thm:collapse}. A special care is only required to achieve soundness
  by using the knowledge of~$n$.

  The algorithm $\cA_\Pi$ starts by executing the first 4 rounds of the universal (full-information) algorithm for the \sbmodel model.
  More precisely, although we defined the universal algorithm in \Cref{ss:CR} for the \mb model,
  $\cA_\Pi$ executes it in \sbmodel by discarding message multiplicities. For most input graphs, this will make no difference
  as, by Proposition \ref{prop:BES}, the $C_1$-colors of all nodes in $G(n,1/2)$
  are a.a.s.{} pairwise distinct. 

  After the message exchange, each node $v$ knows a set of colors $C_1(w)$
  for some nodes of $G$. In general, $v$ knows these colors not for all $w$
  or, even if so, some of these colors can be equal. In each of these cases,
  the number of pairwise distinct colors $C_1(w)$ seen by $v$ is less than $n$
  and then $v$ immediately terminates with output \failed.

  Assume that the number of these colors is equal to $n$.
  As explained in the proof of part 1 of Theorem \ref{thm:learn5},
  $v$ can now construct an isomorphic copy $G'$ of $G$
  on the set $\Set{C_1(w)}_{w\in V(G)}$, provided that for each $C_1(w)$ it also receives $C_2(w)$. Otherwise, $v$ terminates with output \failed. It also terminates with output \failed if the diameter of $G'$ exceeds~2.

  If $G'$ is constructed and has diameter at most 2, then $v$ finds the lexicographically least solution $\sigma \in \Pi(G')$ and terminates with output
  $\sigma(C_1(v))$.
  It remains to notice that if $|S_1(G)|=n$ and $G$ has diameter at most 2, then
  every $v$ terminates in the state $\sigma(C_1(v))$. On the other hand, if at least one
  of the two conditions is violated, then every $v$ terminates with output \failed.

    \textit{2.}
  It suffices to exhibit a problem $\Pi$ not solvable in $\mb\nsa[3]$.
  We define $\Pi$ as the recognition problem for a certain
  class of graphs $\mathcal{C}$. More precisely, we require that all processors
  terminate in state \yes if $G\in\cC$ and in state \no if $G\notin\cC$.
  Let $\cC_n$ denote the restriction of $\cC$ to the graphs with $n$ nodes.
  The set $\cC_n$ is constructed as follows.
  
  Part 2 of Theorem \ref{thm:learn5} says that, for almost every $n$-node graph $G$
  there exists a non-isomorphic $n$-node graph $H_G$ such that
  $C_3(G)\cap C_3(H_G)\ne\emptyset$. Denote the set of all such $G$ by $\cV_n$.
  Define an auxiliary graph $\cG_n$ on $\cV_n$ by connecting $G$ and $H$ in $\cV_n$
  by an edge if $C_3(G)\cap C_3(H)\ne\emptyset$. Let $\cD_n$ be a minimum dominating set
  in $\cG_n$. Set $\cC_n=\cV_n\setminus\cD_n$. Note that $|\cC_n|\ge|\cV_n|/2$ and, hence,
  the fraction of $\cC_n$ among all $n$-node graphs is at least~$\frac12-o(1)$.

  Assume that $\Pi$ is solvable by a non-uniform sound \mb algorithm $\cA$ in 3 rounds.
  In particular, $\cA$ solves $\Pi$ asymptotically almost surely. This implies that for a
  sufficiently large $n$, the outcome of $\cA$ is correct on some graph $G\in\cC_n$,
  that is, all processors terminate in state \yes. Let $u\in V(G)$ and $v\in V(H)$
  for $H\in\cD_n$ be such that $C_3(u)=C_3(v)$. By part 1 of Lemma \ref{lem:cr-states},
  $v$ terminates in state \yes as well, a contradiction to the soundness assumption.
\end{proof}

As an immediate consequence of Theorem \ref{thm:all-nsa}, we obtain a round hierarchy result.

\begin{corollary}\label{cor:notall4}
$\mb\nsa[3]\subsetneq\mb\nsa[4]$.
\end{corollary}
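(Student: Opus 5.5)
The corollary follows from the two parts of Theorem~\ref{thm:all-nsa} together with the trivial inclusion chain between models and round numbers. The plan is to establish the inclusion $\mb\nsa[3]\subseteq\mb\nsa[4]$ and then exhibit a problem in the difference.

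\emph{Inclusion.} The inclusion holds because a non-uniform sound \mb algorithm running within $3$ rounds is in particular one running within $4$ rounds. The classes are defined by upper bounds on the round count, so no padding argument is needed.

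\emph{Strictness.} Here I would take the problem $\Pi$ constructed in part~2 of Theorem~\ref{thm:all-nsa}. Part~2 shows directly that $\Pi\notin\mb\nsa[3]$. Since $\Pi\in\all$, part~1 gives $\Pi\in\sbmodel\nsa[4]$. It remains to move from \sbmodel to \mb, that is, to check that $\sbmodel\nsa[4]\subseteq\mb\nsa[4]$. This holds because any \sbmodel algorithm can be run in \mb by discarding message multiplicities, which preserves the output on every input. Hence soundness, non-uniformity and the round bound all carry over unchanged.

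\emph{Conclusion and main point to verify.} Combining these gives $\all=\sbmodel\nsa[4]\subseteq\mb\nsa[4]\subseteq\all$, so $\mb\nsa[4]=\all$. Then part~2 of Theorem~\ref{thm:all-nsa} yields $\mb\nsa[3]\subsetneq\all=\mb\nsa[4]$. There is no real obstacle. The only point deserving care is the simulation of \sbmodel by \mb, and in particular checking that the behavior on the failure event (all nodes output \failed) is preserved verbatim, which it is because the simulation is exact on every graph.
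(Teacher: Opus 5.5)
Your proposal is correct and is the argument the paper has in mind when it calls the corollary an immediate consequence of Theorem~\ref{thm:all-nsa}. Exact simulation of \sbmodel in \mb, recorded as the vertical equality in the diagram in the introduction, gives $\all=\sbmodel\nsa[4]\subseteq\mb\nsa[4]\subseteq\all$, and part~2 then yields $\mb\nsa[3]\subsetneq\all=\mb\nsa[4]$.
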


\section{Conclusions}\label{s:conclusions}

We showed that unique identifiers can be generated on $G(n,p)$ asymptotically almost surely
in $\sbmodel\short[1]$ whenever $n^{\varepsilon-1}\le p\le 1/2$. It is natural to ask
whether the range of $p$ in this result can be made even closer to the connectivity threshold.
In a somewhat weaker form, the question is if unique identifiers can be generated in
$\mb\short[O(1)]$ whenever $(1+\delta)\ln n/n\le p\le 1/2$. We can prove that this is
possible in \mb within $O(\diam(G))$ rounds. Whether this can be done in a constant time
seems to be a challenging question.

Generally, the average-case complexity of distributed problems on $G(n,p)$
deserves further exploration. As we already mentioned in \Cref{s:intro},
the Hamiltonian cycle problem has attracted a considerable attention
\cite{levy2004distributed,chatterjee2018fast,Turau20}.
In \Cref{s:triangle}, we addressed the triangle finding problem.
It is natural to consider also other problems in the average-case setting.
For example, let $p$ be such that $\prob{\diam(G(n,p))\le2}=1/2$. Note that $p=(1+o(1))\sqrt{2\ln n/n}$; see \cite{Bollobas81diam}.
Is there a \congest algorithm determining the diameter of $G\sim G(n,p)$
asymptotically almost surely in a constant number of rounds?
Is there such \emph{sound} algorithm?
Note that in the worst case, distinguishing graphs of diameter 2 from those of diameter 3 requires $\Omega(n)$ rounds, even for randomized algorithms~\cite{HolzerW12}.

While triangle finding is known to have complexity $\Omega(\log \log n)$ in the worst-case~\cite{AssadiS25},
we showed that triangles can be found a.a.s.\ in constant-time on $G(n,p)$ when  $n^{\varepsilon-1}\le p\le 1/2$.
For many other natural subgraphs, such as larger cliques and longer cycles, the detection problem has polynomial-in-$n$ worst-case lower bounds; see e.g.~\cite{drucker2014power,korhonen2017broadcast,czumaj2020detecting,censor2021distributed,fischer2018possibilities}.
Can some other hard-to-detect subgraphs still be found fast on $G(n,p)$ in the \congest{} model?

In \Cref{s:collapse}, we proved that anonymous sound algorithms are very
powerful in the non-uniform setting, when the processors know the network size.
Specifically, all distributed problems solvable in \local
can be solved in this model in just four rounds. This result assumes that the
message size is unbounded. An intriguing question is how powerful (or how weak)
are non-uniform sound algorithms in the $\mb\short$ and $\sbmodel\short$ models
with logarithmically bounded communication.

\bibliographystyle{plainnat}
\bibliography{refs}

\newpage

\appendix

\section{Glossary of distributed computing models}\label{apx:glossary}

In \Cref{table:glossary},
we give a concise glossary of the different models of computing discussed in this work.
We refer the reader to the book of Hirvonen and Suomela~\cite{HirvonenS} for the formal definitions of the \pn{}, \local{} and \congest{} models. The paper of Hella et al.~\cite{HellaJKLLLSV15} gives formal definitions for the weak models \sbmodel, \mb, \svmodel, \mvmodel, and \vvmodel of computing; in this work, we only consider the first two models.

In the \pn{} model, each node $v \in V$ has ports $1, \ldots, \deg(v)$. In each round, each node $v$ sends (distinct) messages to each of its $\deg(v)$ ports, receives a message from each of its ports,
and updates its state based on the vector of $\deg(v)$ received messages in the round.
That is, the state transition can depend on the current state of a node and the vector of received messages.
The \local{} model is simply the port-numbering model \pn{} with unique $O(\log n)$-bit identifiers given as local input to all nodes. The \congest{} is the \local{} model with the additional restriction that messages have length at most $O(\log n)$ bits. That is,
following our notational convention, $\congest = \local\short$.

The weak models can be obtained as restrictions of the \pn{} model; see Hella et al.~\cite{HellaJKLLLSV15} for the formal details.
If the neighbors of node $v$ send the messages $m_1, \ldots, m_{\deg(v)}$ to $v$ in round $r$, then in  $\sbmodel$ the state of node $v$ at the start of round $r+1$ is a function of its current state and the set
\[
\Set{m_ i}{1\le i\le \deg(v)}.
\]
In the $\mb$ model, the new state of node $v$ is a function of its current state and the multiset
\[
\Mset{m_ i}{1\le i\le \deg(v)}.
\]
Note that the assumption that nodes know their degree is necessary in the $\sbmodel$ model.
Otherwise, without any inputs, only constant functions could be computed.
In the $\mb$ model, this assumption could be lifted with the overhead of one additional communication round, as nodes can learn their degree using one round of communication (by counting how many messages they received).

\begin{table}[hb]
  \renewcommand{\arraystretch}{1.2}
  \centering
\begin{tabular}{@{}lllllll@{}}
\toprule
\textbf{Model}           & \textbf{Incoming} & \textbf{Outgoing} & \textbf{Message length} & \textbf{Local input}  &\textbf{Notes}    \\ \midrule
\sbmodel        & Set       & Broadcast  & Any      & Degrees     &  \cite{HellaJKLLLSV15}       \\
$\sbmodel\short$ &  &  & $O(\log n)$  &  &  &  \\
\mb        & Multiset  &            & Any  &            &   \cite{HellaJKLLLSV15}      \\
$\mb\short$ &  &  & $O(\log n)$ &  & & \\
\pn        & Vector    & Vector     &  Any        &  &  \cite{Angluin80,HirvonenS}     \\ \midrule
\local     & Vector          & Vector &  Any         & Unique $O(\log n)$-bit IDs & \cite{Linial1992,HirvonenS}  \\
\congest   &  &            & $O(\log n)$ &  &   \cite{HirvonenS}   \\
\bcong &         & Broadcast  &  &           & \cite{drucker2014power,korhonen2017broadcast}      \\
 \bottomrule
\end{tabular}
\caption{An overview of the main models  discussed in this work. The ``incoming'' column denotes whether the nodes receive a set, multiset, or a vector of messages. The ``outgoing'' column denotes whether the nodes can broadcast a single message to all neighbors or a distinct message to each neighbor. The ``message length'' column indicates any restrictions on the length of the sent messages. For these columns, a blank cell denotes that the entry is the same as in the above cell.
In all models, nodes receive their degree as input. The first five models are anonymous, whereas the last three are non-anonymous. In the non-anonymous models, nodes also receive unique identifiers of length $O(\log n)$ bits as local inputs.}
\label{table:glossary}
\end{table}

We remark that port numbering is not a crucial ingredient for the models \local{} and \congest{}.
Indeed, whenever a node $v$ sends a message, it can attach its identifier $\id(v)$. If $v\in N(u)$,
then $\id(v)$ serves as a pointer to the port connecting $u$ and $v$. This observation is important for simulating \local{} and \congest{} algorithms in the weak models, where ports are not numbered.

\section{A naïve algorithm for \trifind in \congest}\label{apx:naive}

We make note of a natural algorithm idea that one could expect to find a triangle in $G(n,p)$:
\begin{itemize}
\item
Each node $x$ sends its identifier $\id(x)$ to all its neighbors. As a result, after one round, every node knows its neighborhood.

\item
Having received the neighbor identifiers, each node $x$ chooses the smallest two, $\id(y)$ and $\id(z)$,
and sends the triple $(\id y,\id z,\id x)$.
\item
Having received the message $(\id y,\id z,\id x)$, a node $y$ checks if it is adjacent to $z$.
If so, then $\{x,y,z\}$ is a triangle, and $y$ outputs the triple $(\id y,\id z,\id x)$.
\end{itemize}

Recall that the node set of $G(n,p)$ is $V=\{1,\ldots,n\}$. Assume that $\id(x) =x$ for all $x\in V$.
In this case, the above naïve triangle finding algorithm succeeds on $G(n,1/2)$ with probability $1-n^{-\Omega(\log n)}$.
This still does not mean that this algorithm solves \trifind in \congest asymptotically almost surely.
Indeed, on almost all $G$ the algorithm has to succeed for \emph{all} identifier assignments,
in particular, for all $\id\in S_n$, where $S_n=\mathrm{Sym}(V)$ is the symmetric group of all
permutations of $V$. It turns out that this is not the case. Below, we outline a proof for
a constant edge probability $p\in(0,1)$. This is likely true also if $p=o(1)$, but this case
requires much more intricate analysis.

A set $I\subset V$ is called \emph{2-dominating independent} set in a graph
if it is independent and every node from $V\setminus I$ has at least two neighbors in~$I$.

\begin{claim}
\label{cl:ind-dom}
Let $p\in(0,1)$ be a constant. Set $k=\lfloor 1.5\log_{1/(1-p)}n\rfloor$. Then a.a.s.{} $G\sim G(n,p)$
contains a 2-dominating independent set of size~$k$.
\end{claim}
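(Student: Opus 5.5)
We use the second moment method on the number $Y$ of 2-dominating independent sets of size $k$ in $G\sim G(n,p)$, with $q=1-p$.

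\textbf{First moment.} Fix a $k$-set $I$. It is independent with probability $q^{\binom k2}$. Given this, each of the $n-k$ outside nodes independently has at least two neighbours in $I$ with probability
\[
1-q^k-kpq^{k-1}.
\]
Since $q^k=n^{-1.5+o(1)}$, this is $1-O(k\,n^{-1.5})$. Raising it to the power $n-k$ gives $1-o(1)$. Therefore
\[
\E Y=\binom nk q^{\binom k2}(1-o(1)).
\]
Taking logarithms, $\ln\E Y = k\ln(n/k) - \frac{k^2}{2}\ln(1/q)+O(k)$. Using $k\ln(1/q)\approx 1.5\ln n$, this becomes $k\ln n(1-0.75+o(1))\to\infty$, so $\E Y\to\infty$.

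\textbf{Second moment.} Write $\E Y^2=\sum_{j=0}^{k}\sum_{|I\cap I'|=j}\prob{I,I'\text{ both good}}$, and bound each pair's probability by the following.
\begin{itemize}
\item The probability that $I\cup I'$ is independent, namely $q^{2\binom k2-\binom j2}$.
\item Times the probability that every node outside $I\cup I'$ has at least two neighbours in $I$ and at least two in $I'$.
\end{itemize}
The latter factor is at most $1$, and compared with the square of the single-set factor it is $1+o(1)$. The reason is that the failure probabilities per node are $O(kn^{-1.5})$, so the outside condition is a.a.s.\ satisfied in both cases up to a $(1+o(1))$ correction.

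This gives
\[
\frac{\E Y^2}{(\E Y)^2}\le (1+o(1))\sum_{j}\frac{\binom kj\binom{n-k}{k-j}}{\binom nk}\,q^{-\binom j2}.
\]
The $j=0$ term is $1+o(1)$. For $1\le j\le k$ the term is roughly
\[
\left(\frac{k^2}{n}\right)^{j}q^{-j^2/2}=\exp\!\Big(j\big(-\ln n+2\ln k+\tfrac{j}{2}\ln(1/q)\big)\Big).
\]
For $j\le k$ we have $\tfrac j2\ln(1/q)\le 0.75\ln n$, so the exponent is at most $j(-0.25\ln n+O(\ln\ln n))$. Summing over $j\ge1$ gives $o(1)$.

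One caveat is the large-$j$ range, where the $j=k$ term equals exactly $1/\E Y$. This is still $o(1)$, and the intermediate terms are handled by the convexity-in-$j$ bound above, so the sum is $1+o(1)$. Chebyshev then gives $\prob{Y>0}\to1$.

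\textbf{Main obstacle.} The delicate point is the outside-node factor when $j$ is small, since the conditions for $I$ and $I'$ on a common outside vertex are correlated. To control it, bound $\prob{\text{some outside node fails for }I\text{ or }I'}$ by a union bound, $2n\cdot O(kn^{-1.5})=o(1)$. This bound holds conditionally on the independence of $I\cup I'$, because edges from outside vertices to $I\cup I'$ are independent of edges inside $I\cup I'$. Hence the joint probability lies between $(1-o(1))q^{2\binom k2-\binom j2}$ and $q^{2\binom k2-\binom j2}$, which is exactly what the ratio computation needs.
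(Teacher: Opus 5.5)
Your proposal is correct and follows the paper's route (first moment plus Chebyshev), and it also supplies the second-moment computation the paper omits: since $\mathbb{E}Y=(1-o(1))\binom nk q^{\binom k2}$, bounding each pair probability by $q^{2\binom k2-\binom j2}$ reduces the ratio to the standard independent-set second moment, which is easy because $k\approx 1.5\log_{1/q}n$ is well below $2\log_{1/q}n$. There are two harmless slips: $q^{2\binom k2-\binom j2}$ is the probability that $I$ and $I'$ are each independent (not that $I\cup I'$ is), and your claimed matching lower bound $(1-o(1))q^{2\binom k2-\binom j2}$ is false for $j$ near $k$ (for $j=k-1$ the pair probability is $0$, since vertices of $I'\setminus I$ cannot be adjacent to $I\cap I'$ yet need two neighbours in $I$, hence in $I\setminus I'$), but only the upper bound is ever used.
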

 
\begin{proof}[Proof sketch]
Let $X_k$ be the number of independent sets $I$ of size $k$ in $G$ such that every node that does not belong to $I$ has at least two neighbors in $I$. It suffices to show that a.a.s.{} $X_k\geq 1$. To this end, we will prove
that $\mathbb{E}X_k\to\infty$ as $n\to\infty$. 
By Chebyshev's inequality,
$$
\mathbb{P}(X_k=0)\leq\frac{\mathrm{Var}X_k}{(\mathbb{E}X_k)^2}.
$$
The inequality $X_k\geq 1$ will then follow from the equality $\mathbb{E}X_k(X_k-1)=o((\mathbb{E}X_k)^2)$. 
The proof of the last fact is standard --- see the proofs of similar facts (with almost verbatim computations)
in~\cite{dutta2023induced,mcdiarmid2012largest,kamaldinov2021maximum} --- 
but requires dense computations. Therefore, we omit it.

A fixed set $I$ of size $k$ is 2-dominating independent with probability 
$$
 (1-p)^{{k\choose 2}}\left(1-(1-p)^k-kp(1-p)^{k-1}\right)^{n-k}=
 (1-p)^{{k\choose 2}}\left(1-n^{-3/2+o(1)}\right)^{n-k}.
$$ 
Therefore,
$$
 \mathbb{E}X_k={n\choose k}(1-p)^{{k\choose 2}}\left(1-n^{-3/2+o(1)}\right)^{n-k}=\exp\left(k\ln n-k\ln k+k-{k\choose 2}\ln\frac{1}{1-p}+O(\ln\ln n)\right).
$$
Since
$$
 \ln n-1+\ln k-\frac{k-1}{2}\ln\frac{1}{1-p}=\left(\frac{1}{4}+o(1)\right)\ln n,
$$
we conclude that $\mathbb{E}X_k\to\infty$ with $n$ increasing, as desired.
\end{proof}

Let $I$ be a 2-dominating independent set of size $k$ in an input graph $G$,
which exists a.a.s.{} for $G\sim G(n,p)$ by Claim~\ref{cl:ind-dom}.
Consider $\id\in S_n$ mapping $I$ to $\{1,\ldots,k\}$. Then, for every $x\in V$, its two neighbors
with smallest identifiers belong to $I$ and, hence, are not adjacent. Therefore, the algorithm does not find any triangle.

\section{The \umaxdeg problem}\label{app:UMD}

We here give a simple example showing that the class of problems $\local\sa[O(1)]$
is strictly larger than $\local[O(1)]$, and the same holds true as well for $\bcong\sa[O(1)]$
and $\bcong[O(1)]$. Both separations follow from the fact that
\begin{equation}
  \label{eq:Obig}
\bcong\sa[O(1)]\nsubseteq\local[O(1)].
\end{equation}
To prove \refeq{Obig}, we consider a natural problem,
which is a special version of \leader, namely:

\medskip

\noindent
\umaxdeg\,---
If a processor $u$ is a unique node in $G$ with maximum degree, i.e., $\deg(v)<\deg(u)$
for all other $v\in V(G)$, then $u$ must terminate in state \yes.
If this is not the case, $u$ must terminate in state~\no.

\medskip

Note that the random graph $G(n,1/2)$ contains a unique node of maximum degree a.a.s.{} \cite[Theorem~3.9]{Bollobas-b}.

\begin{claim}
\umaxdeg is in $\bcong\sa[9]$.  
\end{claim}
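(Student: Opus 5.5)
Our plan is to build a \bcong algorithm with a fixed number of rounds that, on $G(n,1/2)$, uses the facts that a.a.s.\ the diameter is $2$ and the maximum degree is attained at a unique node. The algorithm will check locally whether both conditions hold, and will output \failed everywhere if this check does not succeed.

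\textbf{Rounds 1--2: propagate the maximum degree.} In round~1 every node $v$ broadcasts $\deg v$. Each node then computes $M_1(v)$, the maximum over its closed neighborhood. In round~2 every node broadcasts $M_1(v)$ and computes $M_2(v)=\max_{w\in N[v]} M_1(w)$, which is the maximum degree in its radius-2 ball. If $\diam(G)\le2$, then $M_2(v)=\Delta(G)$ for all $v$.

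\textbf{Rounds 3--4: decide uniqueness.} In round~3 every node broadcasts the pair $(M_2(v),\,[\deg v=M_2(v)]\cdot\id(v))$, that is, its identifier if it believes it has maximum degree. This lets nodes check consistency: if some neighbor reports a different $M_2$ value, the node raises a flag. A node collects the set of candidate identifiers it has heard together with its own. In round~4 each node broadcasts the smallest and the largest candidate identifier it knows, or $\bot$ if it knows none. After round~4, a node $v$ knows every candidate within distance $2$, encoded via min/max. On a graph of diameter $2$ with consistent $M_2$ values, the candidates seen by $v$ are exactly all nodes of degree $\Delta$. They are unique iff the global min equals the global max, and the min/max within distance $2$ equal the global ones. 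Each node uses only $O(\log n)$-bit messages.

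\textbf{Rounds 5--9: soundness.} This is the main obstacle. A node must not answer \yes or \no unless the whole graph is certified to have diameter at most $2$, since a node with a wrong local view could otherwise output an incorrect answer. Diameter $2$ cannot be verified in $O(1)$ rounds of \bcong in the worst case, so we avoid verifying it. Instead we use weaker checks that still guarantee the outputs are correct whenever no alarm is raised:
\begin{itemize}
\item a node raises \alarm if any received $M_2$ disagrees with its own, or if the min/max candidate values it receives are inconsistent;
\item agreement of $M_2$ among all neighbors, in a connected graph, means every node holds the same value $M$;
\item since $M\ge$ each node's degree and $M$ is some node's degree, $M=\Delta$.
\end{itemize}
The candidates are then exactly the nodes of degree $\Delta$. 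We spread the min and the max candidate identifier by requiring neighbors to agree on them in rounds 5--6, raising \alarm otherwise. Agreement across all edges forces a global min and max, which certifies uniqueness exactly. The remaining rounds flood \alarm: whenever a node raises or receives it, it outputs \failed and forwards it. Global termination in \failed is then guaranteed in a connected graph, and connectivity holds a.a.s.; disconnected inputs can be treated component-wise, which is the form the problem's definition requires.

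A node outputs \yes iff it is the agreed unique candidate, and \no otherwise. It remains to check that a.a.s.\ no alarm occurs on $G(n,1/2)$. This holds because the diameter is $2$, so all local views coincide, and the maximum degree is unique. Counting rounds: $2$ for the maximum, $2$ for the candidates, $2$ for the agreement checks, plus the alarm propagation, which the certified structure bounds by $3$ rounds. This gives $9$ rounds in total.
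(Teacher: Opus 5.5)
Your no-alarm analysis is correct. If every node agrees with all its neighbours on $M_2$ and on the min/max of candidate identifiers in its 2-ball, then in a connected graph $M_2\equiv\Delta$, the candidates are exactly the nodes of degree $\Delta$, and the outputs are correct.

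The gap is in the other half of soundness. As soon as one node raises \alarm, \emph{every} node must output \failed, and this must hold on every input graph, so ``connectivity holds a.a.s.'' is beside the point. Flooding for a constant number of rounds reaches only a bounded neighbourhood of the alarmed nodes. Your ``certified structure'' is certified only when no alarm occurs, so it cannot be used to bound alarm propagation; that step is circular. This is exactly where the paper's proof does its work: every non-concerned node lies in the ball $N_2[u]$ of an ambitious node, and the distance to a concerned node is bounded case by case.

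Your three-round bound is in fact false. Take the path $v\,y\,c\,x\,a\,b\,t\,c'$ with one pendant leaf at $c$ and two at $c'$.
Every node is within distance $2$ of $c$ or $c'$, so $M_2\equiv 3$ and the candidates are $c$ and $c'$.
The only disagreement is on the edge $ab$: $a$ sees only $c$ in its 2-ball and $b$ sees only $c'$. This disagreement is detectable only after the round-5 exchange.
Since $v$ is at distance $4$ from $a$, flooding in rounds 7--9 leaves $v$ outputting \no while $a$ outputs \failed.

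What you need is a lemma bounding the distance to an alarm, and it does hold for your checks. Let $R\ne V$ be a connected component of the non-alarmed nodes in a connected $G$. Then $N(R)$ is nonempty and consists of alarmed nodes. All of $R\cup N(R)$ share the same value $M_2=M$ and the same min/max pair $(\mu,\nu)$.
\begin{itemize}
\item If $\mu=\id(c)$, then $c\in N_2[u]$ for all $u\in R\cup N(R)$. Hence every $v\in R$ is within distance $4$ of a node of $N(R)$.
\item If $\mu=\bot$, take a degree-$M$ node $w\in N_2[v]$. It cannot lie in $R\cup N(R)$, since it would then be a candidate. So $d(v,w)=2$, and the middle node lies in $N(R)$.
\end{itemize}
Now fix the schedule: compare $M_2$ after round 3, compare 2-ball min/max after round 5, and start flooding immediately (rounds 4--9 and 6--9, respectively). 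This gives a sound 9-round algorithm. With this lemma and schedule, your route is a valid alternative to the paper's ambitious/claim/concerned construction.
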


\begin{subproof}
  To show this, consider the following algorithm.  For a node $v$, let $N[v]$ denote the
  closed neighborhood of $v$, i.e., $N[v]=N(v)\cup\{v\}$.
  The case of a single processor is trivial, and we assume that the algorithm is run
  on a connected graph $G$ with at least two nodes.

  \first Node
  $v$ sends the pair $(\deg v,\id v)$ to the neighbors. After the message exchange,
  $v$ computes the maximum node degree $D(v)=\max_{w\in N[v]}\deg(w)$ and the multiplicity $M(v)$ of
  occurrences of this maximum value in the closed neighborhood of~$v$.
  
  \second Node
  $v$ sends the message
  $$
  \begin{cases}
    (D(v),M(v))&\text{if }M(v)\ge2,\\
    (D(v),M(v),\id w)&\text{if }M(v)=1,\text{ where }w\in N[v]\text{ is such that }D(v)=\deg w.
  \end{cases}
  $$
  A node $v$ is marked as \ambitious if for all $u\in N(v)$, it holds that $\deg v\ge D(u)$,
  where equality can occur only if $M(u)=1$ and $\id(v) =\id(w)$ for the node $w\in N[u]$
  with $\deg w=D(u)$. Note that $v$ receives this flag exactly when it has unique maximum
  degree in its 2-neighborhood $N_2[v]=\bigcup_{u\in N[v]}N[u]$. In this case, no other
  node in $N_2[v]$ is marked by this flag.
  
  \third
  Every \ambitious node $u$ sends the message $(\claiming,\id u)$ to the neighbors.
  For convenience, we suppose that $u$ also receives this message (from itself).
  Note that a node can receive such a message from at most one of the neighbors.

  \fourth
  Every node $v$ which received a message $(\claiming,\id u)$ in the preceding round
  forwards it to all of its neighbors. After the message exchange, if a node $v$
  has received so far no message $(\claiming,\id u)$ at all or at least two messages $(\claiming,\id u)$
  and $(\claiming,\id u')$ with $u\ne u'$, then it is marked as \concerned.

  \fifth
  If a node $v$ is \concerned, then it sends the message \alarm to the neighbors.
  If $v$ is not yet \concerned but received a message $(\claiming,\id u)$ in the preceding round,
  then $v$ forwards this message to all of its neighbors. After the message exchange, if a node $v$
  receives at least two messages $(\claiming,\id u)$ and $(\claiming,\id u')$ with $u\ne u'$
  in Rounds 3--5, then it is marked as \concerned. Thus, the number of \concerned nodes can increase.

  \rounds{6--9}
  The following action is repeated: All \concerned nodes send the message \alarm to their neighbors.
  Once a node receives this message, it becomes \concerned itself.

  \round{9}
  After the message exchange is completed, all nodes enter their final states.
  If a node $v$ is \concerned, then it terminates in state \failed.
  If $v$ is not \concerned, then it terminates in state \yes if it is \ambitious
  and in state \no otherwise.

\medskip
  
  To show that the above algorithm solves \umaxdeg asymptotically almost surely, it suffices to observe that
  a correct solution is produced on every $G$ of diameter at most 2 with unique node $u$
  of maximum degree. Indeed, $u$ becomes \ambitious in the 2nd round, and no other
  node in $V(G)=N_2[u]$ is marked by this flag. In the next two rounds, all nodes
  receive the message $(\claiming,\id u)$ and, hence, none of them becomes \concerned
  in the 4th and all subsequent rounds. Consequently, $u$ terminates in state \yes,
  and all other nodes terminate in state~\no.

  \smallskip

  It remains to argue that the algorithm is sound. We prove that either
  the final state configuration is a correct solution or all processors terminate in state \failed.

  Assume first that no node was marked as \ambitious. Then all nodes become \concerned
  in the 4th round and, consequently, all terminate in state \failed.

  Consider now the case that exactly one node was marked as \ambitious.
  If $N_2[u]=V(G)$, then none of the nodes ever becomes \concerned,
  $u$ terminates in state \yes, and all other nodes terminate in state~\no.
  Otherwise, all nodes in $V(G)\setminus N_2[u]$ become \concerned in the 4th round,
  and the message \alarm is propagated in the next five rounds through the entire 2-neighborhood $N_2[u]$.
  Therefore, all nodes of $G$ terminate in the 9th round in state \failed.

  Finally, assume that the set $A$ of \ambitious nodes has size at least 2.
  All nodes in the set $C=V(G)\setminus\bigcup_{u\in A}N_2[u]$ become \concerned
  in the 4th round. Let $u\in A$. The assumption $|A|\ge2$ implies that $V(G)\ne N_2[u]$.
  Consider a node $w'$ at distance 3 from $u$. If $w'\in C$, then all nodes in $N_2[u]$
  become \concerned in the 9th round at latest. Otherwise $w'\in N_2[u']$ for some $u'\ne u$
  in $A$. Let $w\in N_2[u]\cap N(w')$. If $w$ was \concerned in the 4th round, then all nodes
  in $N_2[u]$ become \concerned in the 8th round at latest. Otherwise, if $w'$ was \concerned in
  the 4th round, then all nodes in $N_2[u]$ become \concerned in the 9th round at latest.
  If neither $w$ nor $w'$ were \concerned in the 4th round, then both of them are \concerned in
  the 5th round and---again---all nodes in $N_2[u]$ become \concerned in the 9th round at latest.
  Therefore, all nodes in $\bigcup_{u\in A}N_2[u]$ eventually become \concerned.
  It follows that all nodes in $V(G)=C\cup\bigcup_{u\in A}N_2[u]$ terminate in state \failed.
\end{subproof}

\begin{claim}
Any \local algorithm $\cA$ for \umaxdeg must make at least $n-6$ rounds in the worst case.  
\end{claim}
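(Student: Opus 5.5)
We will prove the claim with an indistinguishability argument. We construct two graphs $G$ and $G'$ on $n$ nodes, each with a unique identifier assignment. In $G$ a particular node $u$ must answer \yes, while in $G'$ the same node $u$ must answer \no. We arrange that the radius-$r$ views of $u$ in the two graphs coincide, including identifiers, degrees and port numbering, for all $r$ up to roughly $n-6$. By the standard locality principle of \local, after $r$ rounds the state of $u$ depends only on its labelled $r$-ball. Hence any algorithm terminating in fewer rounds gives $u$ the same output in both graphs, which is a contradiction. Since a worst-case \local algorithm must be correct on every input, this suffices.

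The construction will be a long path ("lollipop"). Let $u$ have degree $3$, attached to three leaves or a small gadget at one end, so that in $G$ the node $u$ is the unique node of degree $3$ and all other nodes have degree at most $2$. Concretely, take $u$ adjacent to two leaves $a,b$ and to the first node of a path $P=w_1w_2\cdots w_{k}$. In $G$, the far end $w_k$ is a plain endpoint of degree $1$. In $G'$, we instead spend a few of the nodes at the far end to create another node of degree $3$. For example, $w_{k-2}$ gets extra neighbors, with the path shortened accordingly so that both graphs have exactly $n$ nodes. Then $u$ is no longer the unique maximum-degree node, so $u$ must output \no in $G'$.

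The identifiers and ports are chosen identically on $u,a,b,w_1,\dots,w_{j}$ for the largest possible $j$. The two graphs differ only within a constant number (about $5$ or $6$) of nodes at the far end, and the node count is balanced by using identifier values there that do not occur near $u$. The distance from $u$ to the first differing node is then $n-c$ for a small constant $c$. Thus the $(n-c-1)$-balls around $u$ are isomorphic as labelled graphs, and $u$ cannot distinguish $G$ from $G'$ in fewer than $n-c$ rounds.

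The main obstacle is the bookkeeping needed to reach exactly the constant $6$ while keeping $n$ fixed. The gadget creating the second degree-$3$ node consumes nodes, so the path must be trimmed, and that trimming must occur beyond the view horizon. I would first try the following: in $G'$, replace the last three path nodes by a node $w_{k-2}$ adjacent to two new leaves, which removes one path node and keeps the total count equal. I would then count carefully the distance from $u$ to the first node whose degree differs, since a neighbor's degree change is visible one round earlier. Finally, I would adjust the gadget, for instance by placing the leaves of $u$ versus a path extension, until the bound $n-6$ comes out exactly.
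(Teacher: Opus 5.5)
Your proposal is correct and is the paper's argument. Your first attempt, where $w_{k-2}$ gets two new leaves in place of $w_{k-1},w_k$, gives exactly the paper's pair: $P_{n-2}$ with two leaves at one endpoint versus $P_{n-4}$ with two leaves at both endpoints, with coherent identifiers. No further tuning is needed, because the first node whose local input differs, $w_{k-2}$ (degree $3$ versus $2$), lies at distance $n-5$ from $u$, so the radius-$(n-6)$ views coincide and $u$ is in the same state throughout the first $n-6$ rounds.
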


\begin{subproof}  
  Indeed, consider the graph $G$ obtained from the path graph $P_{n-2}$ by attaching two
  leaves to one of the endpoints and the graph $H$ obtained from $P_{n-4}$ by attaching two
  leaves to both endpoints. Let $u$ be the node of degree 3 in $G$, and $w$ be one of
  the nodes of degree 3 in $H$. Assume that the identifiers are assigned to the
  nodes in $G$ and $H$ coherently (in particular, $\id(u)=\id (w)$), and run $\cA$ on $G$ and $H$.
  Then $u$ and $w$ will be in the same state in the first $n-6$ rounds, while their final states
  are different.
\end{subproof}
  
\end{document}